\definecolor{mygreen}{rgb}{0,0.7,0}
\newcommand{\dvol}{\operatorname{dVol}}
\newcommand{\vol}{\operatorname{Vol}}
\newcommand{\arcl}{\operatorname{arcl}}
\newcommand{\ind}{\mathbf{1}}
\newcommand{\RR}{\mathbb{R}}
\newcommand{\M}{\mathcal{M}}
\newcommand{\N}{\mathcal{N}}
\newcommand{\Mdim}{n}
\newcommand{\systemState}{\ensuremath{x}}
\newcommand{\parameterOne}{\ensuremath{\beta_1}}
\newcommand{\parameterTwo}{\ensuremath{\beta_2}}
\newcommand{\delayA}{{n}}
\newcommand{\delayB}{{n+1}}
\newcommand{\delayC}{{n+2}}
\newtheorem{thm}{Theorem}
\newtheorem{lem}[thm]{Lemma}
\newtheorem{Def}[thm]{Definition}
\newtheorem{cor}[thm]{Corollary}
\begin{document}
\title{A geometric approach \\
to the transport of discontinuous densities}
% \author[1,2]{Caroline Moosm\"uller}
% \author[1]{Felix Dietrich}
% \author[1]{Ioannis G.\ Kevrekidis\footnote{Corresponding author: yannisk$@$jhu.edu}}
% \affil[1]{Department of Applied Mathematics and Statistics,
% Department of Chemical and Biomolecular Engineering, Johns Hopkins University, Baltimore MD 21218, USA}
% \affil[2]{Department of Mathematics, University of California, San Diego, La Jolla CA 92093, USA}

\author{Caroline Moosm\"uller\thanks{Department of Chemical and Biomolecular Engineering, Johns Hopkins University, Baltimore MD 21218, USA, Department of Mathematics, University of California, San Diego, La Jolla, CA 92093, USA. Email: \texttt{cmoosmueller$@$ucsd.edu}.}
\and Felix Dietrich\thanks{Department of Applied Mathematics and Statistics,
Department of Chemical and Biomolecular Engineering, Johns Hopkins University, Baltimore, MD 21218, USA. Email: \texttt{felix.dietrich$@$jhu.edu}, \texttt{yannisk$@$jhu.edu} (corresponding author).}
\and Ioannis G.\ Kevrekidis\footnotemark[2]
}

\date{}

\maketitle

\begin{abstract}
Different observations of a relation between inputs (``sources") and outputs (``targets") are often reported in terms of histograms (discretizations of the source and the target densities).
Transporting these densities to each other provides insight regarding the underlying relation. 
In (forward) uncertainty quantification, one typically studies how the distribution of inputs to a system affects the distribution of the system responses. Here, we focus on the identification of the system (the transport map) itself, once the input and output distributions are determined,
and suggest a modification of current practice by including data from what we call ``an observation process''.
We hypothesize that there exists a smooth manifold underlying the relation; the sources and the targets are then partial observations (possibly projections) of this manifold.
Knowledge of such a manifold implies knowledge of the relation, and thus of ``the right" transport between source and target observations.
When the source-target observations are not bijective (when the manifold is not the graph of a function over both observation spaces, either because folds over them give rise to density singularities, or because it marginalizes over several observables), recovery of the manifold is obscured.
Using ideas from attractor reconstruction in dynamical systems, we demonstrate how {\em additional information} in the form of short histories of an {\em observation process} can help us recover the underlying manifold. 
The types of additional information employed and the relation to optimal transport based solely on density observations is illustrated and discussed, along with limitations in the recovery of the true underlying relation. 

\par\smallskip\noindent
{\bf Keywords:} discontinuous densities, singularities, manifold reconstruction, delay embedding, optimal transport 
\par\smallskip\noindent
{\bf MSC:} 58K05, 60G30, 37C20, 62-07

\end{abstract}

% 58K05 Global analysis, analysis on manifolds: Critical points of functions and mappings
% 60G30 Probability theory and stochastic processes: Continuity and singularity of induced measures
% 37C20 Dynamical systems and ergodic theory: Generic properties, structural stability
% 62-07 Statistics: Data Analysis

%removed
% 26A15 Real functions: Continuity and related questions (modulus of continuity, semicontinuity, discontinuities, etc.)
% 49Q20 Calculus of variations and optimal control; optimization: Variational problems in a geometric measure-theoretic setting
% 65K10 Numerical analysis: Optimization and variational techniques

%%%%%%%%%%%%%%%%%%%%%%%%%%%%%%%%%%%%%%%%%%%%%%%%%%%%%%%%%%%%%%%%%%%%%%%%%%%%%%%%%%%%
\section{Introduction}
%%%%%%%%%%%%%%%%%%%%%%%%%%%%%%%%%%%%%%%%%%%%%%%%%%%%%%%%%%%%%%%%%%%%%%%%%%%%%%%%%%%

%%%%%%%%%%%%%%%%%%%%%%%%%%%%%%%%%%%%%%%%%%%%%%%%%%%%%%%%%%%%%%%%%%%%%%%%%%%%%%%%%
%FIGURE (INTRO HOOK)
%%%%%%%%%%%%%%%%%%%%%%%%%%%%%%%%%%%%%%%%%%%%%%%%%%%%%%%%%%%%%%%%%%%%%%%%%%%%%%%%%%
\setlength{\unitlength}{5cm}

\begin{figure}[htp!]
\centering
\begin{picture}(1.3,1.3)
\put(-0.3,0){\includegraphics[scale=0.5]{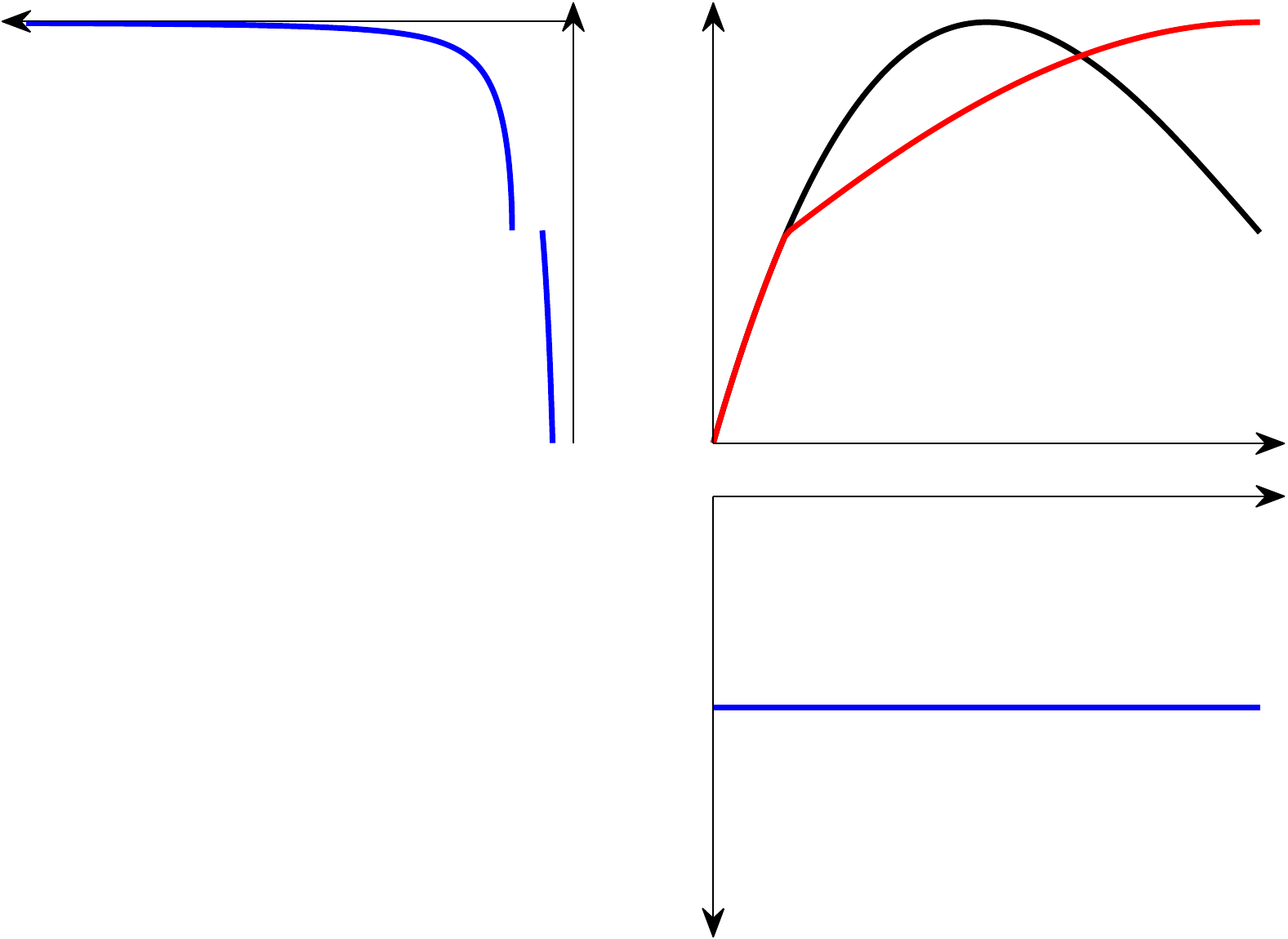}}
\put(0.47,0.2){\rotatebox{90}{$f_{\mu}(x)$}}
\put(0,1.2){$f_{\nu}(y)$}
\put(0.5,0.9){\rotatebox{90}{$y$}}
\put(0.9,0.56){$x$}
\put(0.7,1.2){Transports}
\end{picture}
\caption{The uniform density $f_{\mu}(x)$ on $[0,1]$ is transported by a ``folded'', i.e.\ non-injective transport (black, $y=T(x)= -2(1-x)^3 + 1.5(1-x) + 0.5$) to a density $f_{\nu}(y)$ which has a jump discontinuity and a singularity. The red transport map is the Wasserstein optimal transport between $f_{\mu}(x)$ and $f_{\nu}(y)$. In contrast to the black transport, the Wasserstein transport is one-to-one, but it is not $C^1$ everywhere.}
\label{fig:intro_hook}
\end{figure}
%%%%%%%%%%%%%%%%%%%%%%%%%%%%%%%%%%%%%%%%%%%%%%%%%%%%%%%%%%%%%%%%%%%%%%%%%%%%%%%%%

We are interested in studying observations of an (unknown), manifold when the histograms of the observed quantity suggest the existence of singularities.
Such cases arise in many physical or computational contexts:
e.g., observations of tracer diffusion on a non-flat cell surface \cite{adler-2019}, particle flow in MC simulations \cite{chen-2010}, or density of state singularities in carbon nanotubes~\cite{marchenko-2018}. 
\Cref{fig:intro_hook} illustrates how such observation 
singularities might arise: 
A uniform density of points $f_{\mu}(x)$ on a segment of the $x$-axis is 
non-injectively mapped, through the black parabola, to a segment of the $y$ axis giving rise to the density $f_{\nu}(y)$.
It is easy to see that this density has, by construction, both a jump discontinuity and
an ``infinity". 
The two densities ($f_{\mu}(x)$ and $f_{\nu}(y)$), constitute two different observations of the same manifold (the black curve). If we only know the two distributions, and do not know the underlying pointwise correspondences, we can attempt to transport them to each other using the Wasserstein optimal transport \cite{villani-2009,brenier-1991}.
Attempting this, for example with numerical algorithms as in \cite{cuturi-2013,parno-2016,parno-2018}, does not recover the manifold (the black curve in \Cref{fig:intro_hook}), but a different one, which is visibly non-differentiable at a single point (\Cref{fig:intro_hook}, red map). Mapping $x$-point observations through this red function to the $y$-axis results in the same $f_{\nu}(y)$, but pointwise different $y$-values, 
compared to the black curve that we know, in this case, to be ``the truth".

The need to create useful pointwise mappings (as opposed to only mappings of distributions) arises in several contexts (domain adaptation or transfer learning,
e.g. \cite{yair-2019,courty-2014}). 
This would be greatly facilitated if we could reconstruct, from our observations,
the {\em intrinsic manifold}, i.e., if we can find the ``intrinsic'' state, the point on the black curve. In \Cref{fig:intro_hook}, we can consider the ``intrinsic state" to be the arclength on the black curve, starting, without loss of generality, from one of its two ends.
Then both the ``source" ($x$) and the ``target" ($y$) observations are 
simply functions of this intrinsic state. 
We will discuss cases where it would make even less sense to use Wasserstein optimal transport, namely, when the dimensions of the source and target spaces differ: e.g.\ if one set of observations consists of scalars, while the other set is vector-valued.

Our goal is to devise conditions under which we can meaningfully reconstruct the original, underlying manifold. In general, starting with only the two distributions, this is not possible; so we must modify the problem.
A transport between two densities is reinterpreted as a geometric object (the function is identified with its graph). To embed the object, a possibility arises through the use of embedding theorems \cite{takens-1981,sauer-1991,whitney-1936}. 
To apply these theorems, we need to assume additional observation information, not just the density. This additional information can be provided through an {\em observation process} on the object, yielding observation histories, or, more generally, ``ensembles" of observations around each point.

With this additional information, one can create diffeomorphic copies of the intrinsic manifold from different sets of observations (e.g.\ short observation histories in $x$ and short observation histories in $y$). The densities on the two intrinsic manifold copies can then be easily transported, e.g by solving the Optimal Mass Transport (OMT) problem.

Another way to implement this transport is by using the Mahalanobis-like metric Diffusion Map framework \cite{coifman-2006,singer-2008,singer-2009,dsilva-2016}, which goes beyond diffeomorphisms to create {\em isometric} embeddings of the intrinsic manifold---the transport in this case reduces to a global orthogonal transformation \cite{berry-2016,kemeth-2018}.
For this, we also need the additional observation process data. 

The ``enhanced" reconstruction of the intrinsic manifold that exploits these additional process data allows us to deal with overlapping densities of observations, 
arising through non-invertible maps, see also \cite{dietrich-2018}. These include densities with singularities, therefore providing a contrasting approach to \cite{young-2016,loeper-2009,ma-2005} in the spirit of \cite{kevrekidis-2017}.
The approach also allows us to usefully transport marginal distributions if sufficient observation process data is available.

Even when the original function is invertible, the Wasserstein optimal transport might not give the ``correct" intrinsic solution, but one that can be transformed to it through a measure-preserving map, see e.g.\ McCann's polar factorization \cite{mccann-2001}. The Mahalanobis framework provides an isometry to the intrinsic manifold, so that the missing transformation to the solution is just a global orthogonal map.
This is particularly interesting in higher dimensions; on $\RR$ these concepts agree.

The paper is organized as follows: \Cref{sec:notation} introduces basic mathematical concepts concerning optimal transport and embedology. In \Cref{sec:one_dim_transport} we illustrate the recovery of manifolds and corresponding transport maps through time-delay embeddings in the one-dimensional setting. In \Cref{sec:densities in r2} we extend these results to the two-dimensional case, including a discussion of marginal distributions.
\Cref{sec:mahala_wasser} demonstrates the recovery of the intrinsic manifold up to an isometry. We compare this to the reconstruction with optimal transport, emphasizing the difference between measure-preserving and geometry-preserving maps.

%%%%%%%%%%%%%%%%%%%%%%%%%%%%%%%%%%%%%%%%%%%%%%%%%%%%%%%%%%%%%%%%%%%%%%%%%%%%%
\section{Basic mathematical concepts and notation}\label{sec:notation}
%%%%%%%%%%%%%%%%%%%%%%%%%%%%%%%%%%%%%%%%%%%%%%%%%%%%%%%%%%%%%%%%%%%%%%%%%%%%%
%%%%%%%%%%%%%%%%%%%%%%%%%%%%%%%%%%%%%%%%%%%%%%%%%%%%%%%%%%%%%%%%%%%%%%%%%%%%%
\subsection{Transport of densities}\label{sec:transport}
%%%%%%%%%%%%%%%%%%%%%%%%%%%%%%%%%%%%%%%%%%%%%%%%%%%%%%%%%%%%%%%%%%%%%%%%%%%%%
The optimal transport problem has been proposed by Monge \cite{monge-1781}, and seeks to find a mapping from one distribution of mass to another such that a cost function is minimized among all measure-preserving maps. We here introduce the mathematical background on $\RR^n$ (the general Riemannian manifold case is discussed in \Cref{sec:optimal transport on riemannian manifolds}), mainly following \cite{villani-2009}.

On $\RR^n$ we consider two measures $\mu$ and $\nu$. We say that $\nu$ is the push-forward of $\mu$ under $T: \RR^n \to \RR^n$, written as $\nu = T_{\sharp}\mu $, if
%\begin{equation}\label{eq:pushforward}
$
 \nu(A)=\mu(T^{-1}(A)), A \subset \RR^n,
$ 
%\end{equation}
where $T^{-1}(A)$ is the preimage of $A$ under $T$.
If $\mu \ll \lambda$ and $\nu \ll \lambda$, i.e.\ both $\mu$ and $\lambda$ are absolutely continuous with respect to the Lebesgue measure $\lambda$ on $\RR^n$, then there exist \emph{densities}, that is, Lebesgue-integrable functions $f_{\mu},f_{\nu}:\RR^n \to \RR$ such that
\begin{equation}\label{eq:absolut_cont}
    \mu(A) = \int_A f_{\mu}(x)\, d\lambda(x) \quad \text{and} \quad
    \nu(A) = \int_A f_{\nu}(y)\,d\lambda(y).
\end{equation}
A short-hand notation for \eqref{eq:absolut_cont} is $\mu = f_{\mu}\, \lambda$, and similarly for $\nu$.
The push-forward relation $\nu = T_{\sharp}\mu $ can then be rewritten as
\begin{equation}\label{eq:push_density}
    \int_A f_{\nu}(y)\,d\lambda(y) = \int_{T^{-1}(A)}f_{\mu}(x)\, d\lambda(x).
\end{equation}
If $T$ is invertible and differentiable, by change of variables, we can formulate \eqref{eq:push_density} as
\begin{equation}\label{eq:bijective_transport}
    f_{\nu}(y)=f_{\mu}\left(T^{-1}(y)\right)|\det D_yT^{-1}|.
\end{equation}
Given two densities $f_{\mu},f_{\nu}$, there might exist many transport maps satisfying \eqref{eq:bijective_transport}. Thus one often seeks to find a transport that in addition to \eqref{eq:bijective_transport} is also unique in some sense---typically represented by an \emph{optimization} problem. In the theory of optimal transport \cite{villani-2009}, the map $T$ is required to minimize a cost function of the form
\begin{equation}\label{eq:monge}
    \int_{\RR^n} c(x,T(x)) f_{\mu}(x)\,d\lambda(x),
\end{equation}
under the constraint \eqref{eq:bijective_transport}.
The cost $c$ is usually set to $c(x,y) = |x-y|^p$, and $p$ is typically set to $2$. The optimization of the Wasserstein problem with $p=2$ has a unique solution under reasonable assumptions \cite{villani-2009,brenier-1991}.

%%%%%%%%%%%%%%%%%%%%%%%%%%%%%%%%%%%%%%%%%%%%%%%%%%%%%%%%%%%%%%%%%%%%%%
\subsection{Embedology}
%%%%%%%%%%%%%%%%%%%%%%%%%%%%%%%%%%%%%%%%%%%%%%%%%%%%%%%%%%%%%%%%%%%%%%
In this paper, we use a dynamic observation process to provide additional information that will help us reconstruct intrinsic manifolds and define useful transport maps. The corresponding theory is broadly used when observing the time evolution of dynamical systems, and reconstructing attractors (long-term dynamics) through time series data.
In that context, Packard et al.~\cite{packard-1980},  Aeyels~\cite{aeyels-1981}, and Takens~\cite{takens-1981} describe the observability of nonlinear state spaces, by using a number of delays of a single, real-valued function of the system state to reconstruct the system attractor.
The results are mainly based on the theorems of Whitney~\cite{whitney-1936} (see appendix), which describe conditions for smooth functions on a large class of manifolds to construct embeddings.
Sauer, Yorke, and Casdagli~\cite{sauer-1991} later refined the concept by proving that ``almost all'' smooth functions can be used to construct embeddings into Euclidean space. The authors also generalized the results to fractal sets rather than smooth manifolds.
Here, we employ these embedding theorems as a theoretical underpinning for the reconstruction of intrinsic manifolds and transport maps.

%%%%%%%%%%%%%%%%%%%%%%%%%%%%%%%%%%%%%%%%%%%%%%%%%%%%%%%%%%%%%%%%%%%
\section{One dimensional transport and discontinuous densities}
\label{sec:one_dim_transport}
%%%%%%%%%%%%%%%%%%%%%%%%%%%%%%%%%%%%%%%%%%%%%%%%%%%%%%%%%%%%%%%%%%%%

In this section, we illustrate the main ideas of the paper in the simple one-dimensional case ($n=1$), where the intrinsic manifold is the graph of a function over a single, real variable $x$. In later sections (\Cref{sec:densities in r2,sec:appendix proofs}), we describe the general setting in which the coordinates $x,y$ are considered functions over a (higher-dimensional) manifold.

In one dimension, there exists a unique, monotonically increasing, solution to the minimization of the cost function \eqref{eq:monge} under the constraint \eqref{eq:push_density}. 
It is given by
\begin{equation}\label{eq:transport_1D}
    W(x) =\left( F_{\nu}^{-1}\circ F_{\mu}\right)(x).
\end{equation}
Here $F_{\mu}$ is the cumulative distribution function (cdf) of $f_{\mu}$, defined by
%
%\begin{equation*}
$
    F_{\mu}(x)=\int_{-\infty}^{x}f_{\mu}(t) d\lambda(t),
$    
%\end{equation*}
and similarly for $F_{\nu}$.
Our only restriction on the densities $f_{\mu}, f_{\nu}$ is that they be Lebesgue-integrable. Therefore, they may have discontinuities and singularities. In our illustrative example (\Cref{fig:intro_hook}) we consider a jump discontinuity and divergence to infinity. 
\begin{Def}
A Lebesgue-integrable density $f_{\nu}$ possesses a \emph{jump discontinuity} at $a\in \RR$ if
\begin{equation*}
    \lim_{y \to a ^ -}f_{\nu} (y) \neq \lim_{y \to a ^ +}f_{\nu} (y) \quad \text{and} \quad
    \lim_{y \to a ^ -}f_{\nu} (y),\lim_{y \to a ^ +}f_{\nu} (y) \in \RR.
\end{equation*}
A Lebesgue-integrable density $f_{\nu}$ diverges to infinity at $a\in \RR$ if
\begin{equation*}
    \lim_{y \to a ^ -}f_{\nu} (y) = \infty \quad \text{or} \quad
\lim_{y \to a ^ +}f_{\nu} (y) = \infty.
\end{equation*}
\end{Def}
\noindent Other types of discontinuities (e.g.\ derivative discontinuities) are also possible but not discussed here.

%%%%%%%%%%%%%%%%%%%%%%%%%%%%%%%%%%%%%%%%%%%%%%%%%%%%%%%%%%%%%%%%%%%%%%
\subsection{Discontinuities and non-bijective transport}
%%%%%%%%%%%%%%%%%%%%%%%%%%%%%%%%%%%%%%%%%%%%%%%%%%%%%%%%%%%%%%%%%%%%%%%%

%%%%%%%%%%%%%%%%%%%%%%%%%%%%%%%%%%%%%%%%%%%%%%%%%%%%%%%%%%%%%%%%%%%%%%%%%%%%%
%FIGURE FOLDED/UNFOLDED TRANSPORT
%%%%%%%%%%%%%%%%%%%%%%%%%%%%%%%%%%%%%%%%%%%%%%%%%%%%%%%%%%%%%%%%%%%%%%%%%%%%%
\begin{figure}[t!]
\hspace{-0.2cm}
\begin{tikzpicture}
\node[inner sep=0pt] (folded) at (0,5)
   {\includegraphics[scale=0.4]{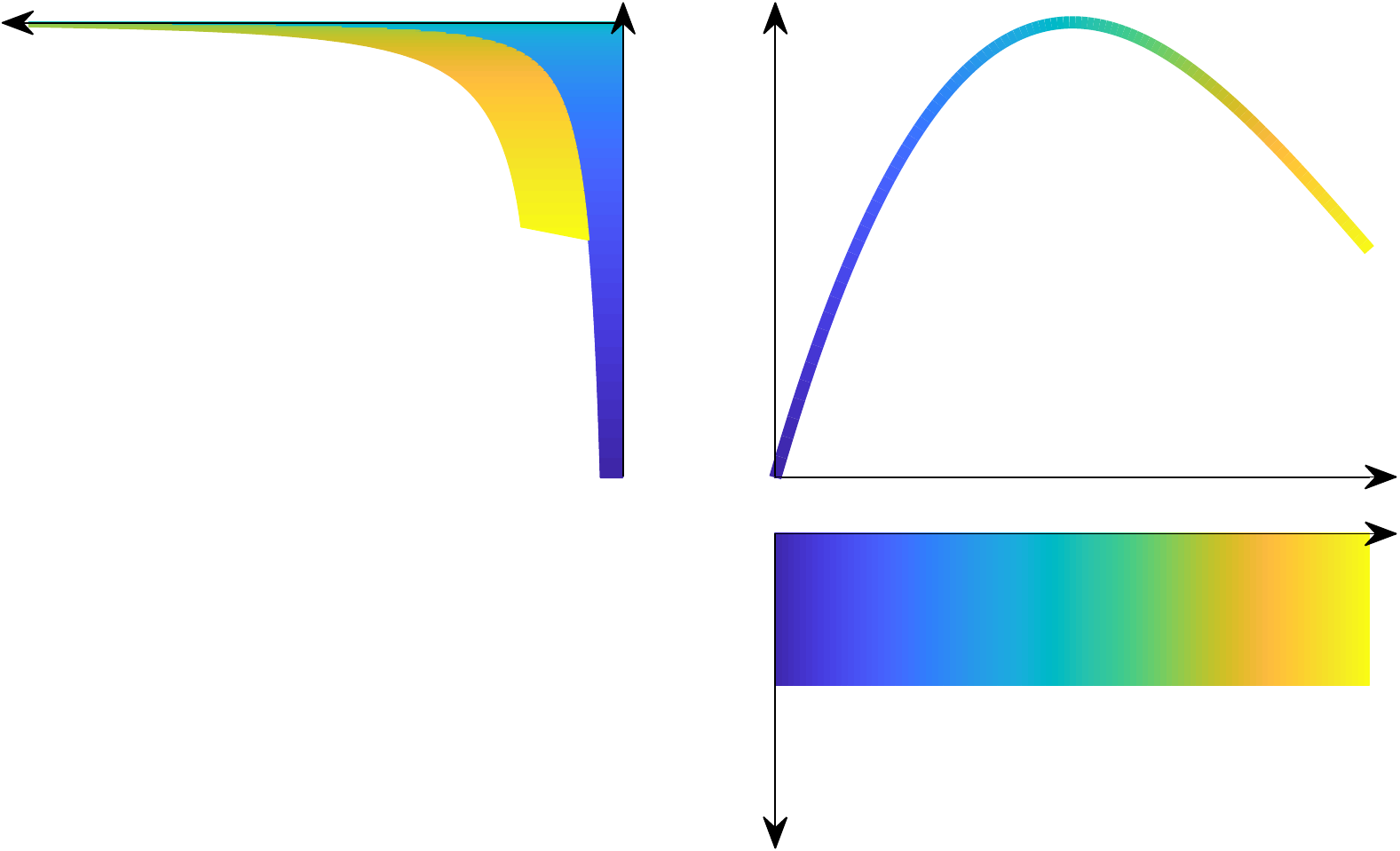}};
\node[inner sep=0pt] (Wasser) at (8,5)
   {\includegraphics[scale=0.4]{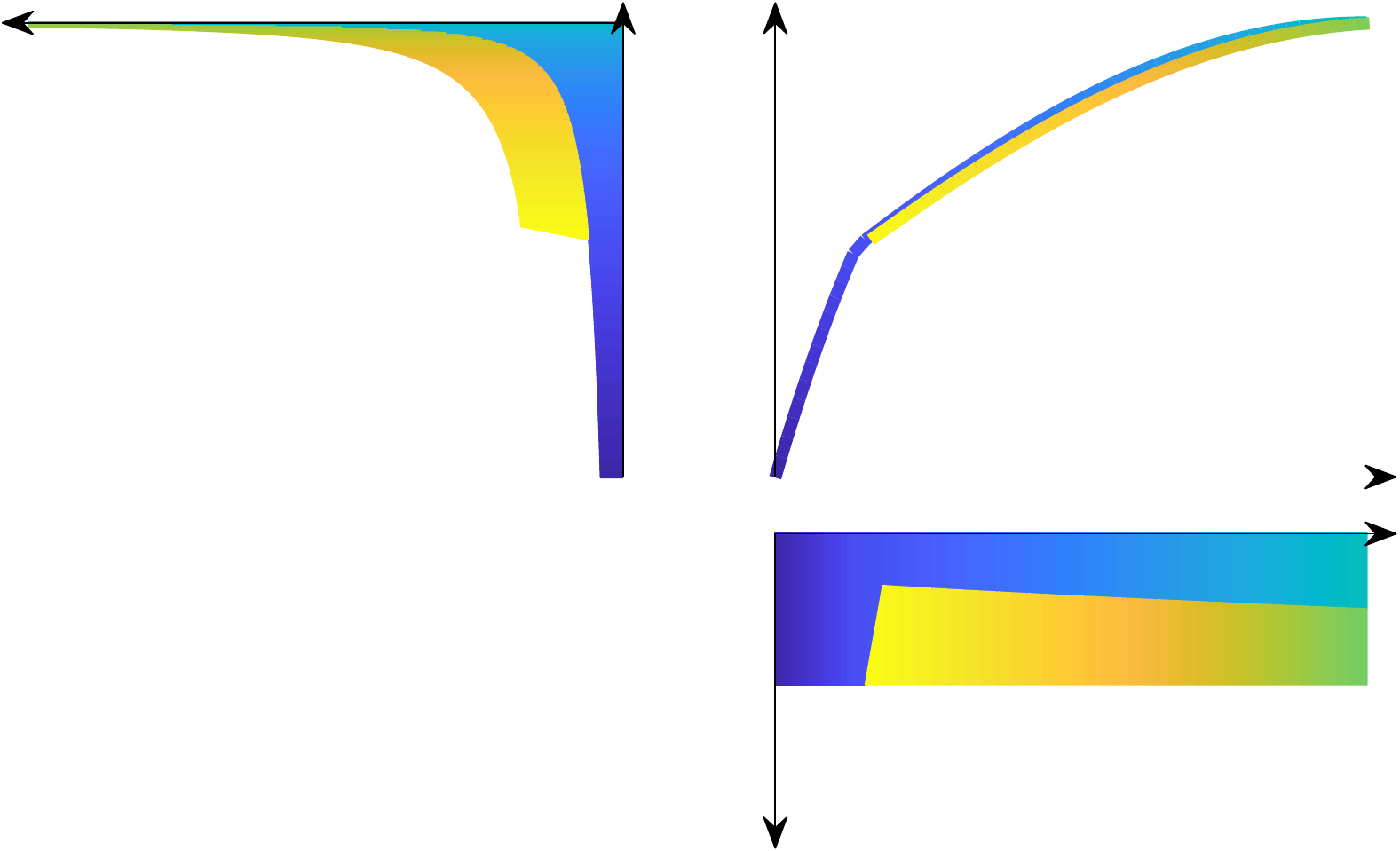}};
\node[inner sep=0pt] (unfold) at (8,-1)
   {\includegraphics[scale=0.4]{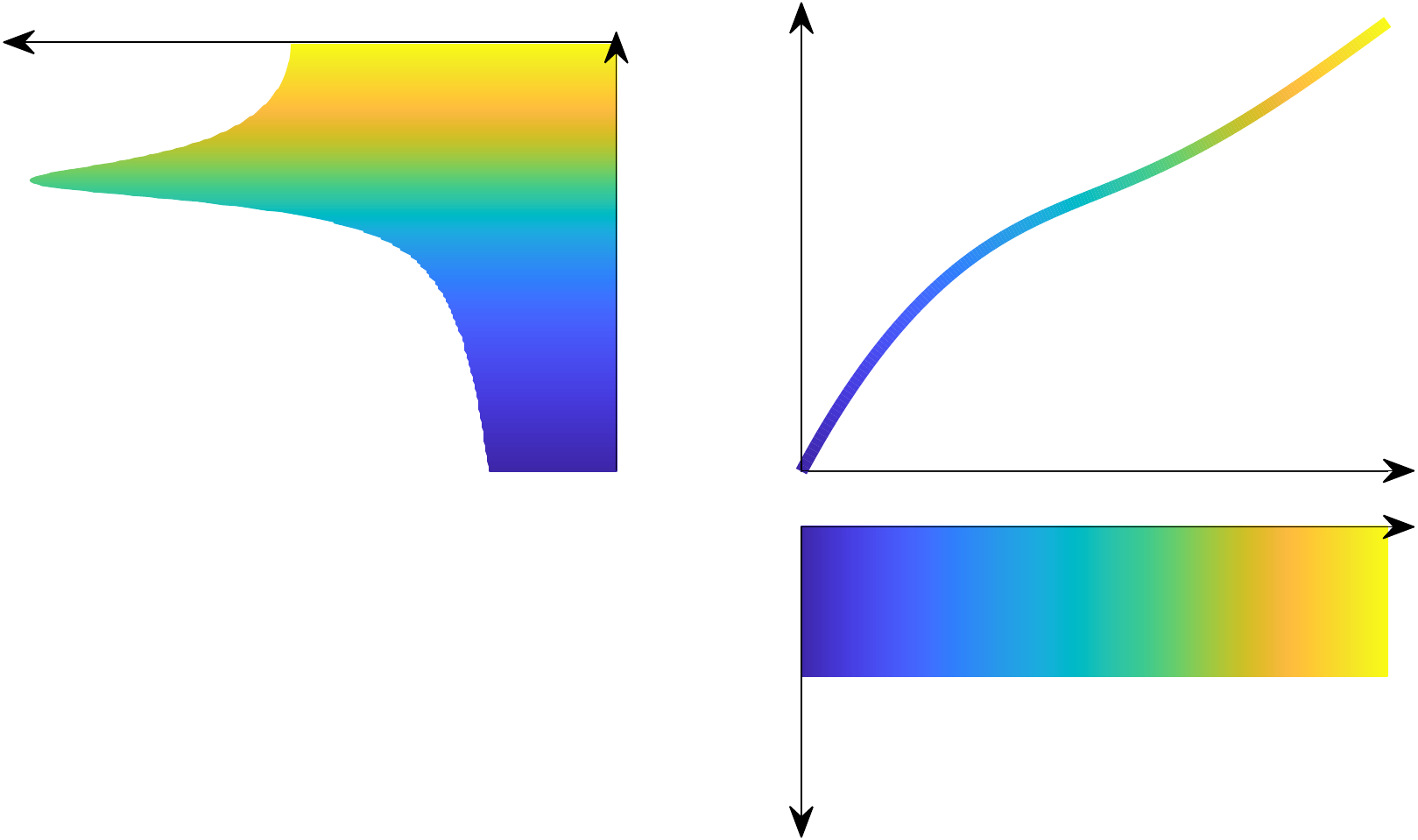}};
\node[inner sep=0pt] (unfold_aux) at (4.5,2){};
%labeling
\node[inner sep=0pt] (folded_title) at (1.6,7.3){Folded transport $T$};
\node[inner sep=0pt] (Wasser_title) at (9.5,7.3){Wasserstein transport $W$};
\node[inner sep=0pt] (unfolded_title) at (9.5,1.3){Unfolded transport};
\node[inner sep=0pt] (x1) at (1.6,4.6){$x$};
\node[inner sep=0pt] (x2) at (9.6,4.6){$x$};
\node[inner sep=0pt] (x3) at (9.6,-1.4){$x$};
\node[inner sep=0pt] (y1) at (0,5.8){\rotatebox{90}{$y$}};
\node[inner sep=0pt] (y2) at (8,5.8){\rotatebox{90}{$y$}};
\node[inner sep=0pt] (s) at (8,-0.2){\rotatebox{90}{$s$}};
\node[inner sep=0pt] (fy1) at (-1.6,7.3){$f_{\nu}(y)$};
\node[inner sep=0pt] (fy2) at (6.5,7.3){$f_{\nu}(y)$};
\node[inner sep=0pt] (fs) at (6.5,1.1){$f_{c}(s)$};
\node[inner sep=0pt] (fx1) at (0,3.8){\rotatebox{90}{$f_{\mu}(x)$}};
\node[inner sep=0pt] (fx1) at (8,3.8){\rotatebox{90}{$f_{\mu}(x)$}};
\node[inner sep=0pt] (fx1) at (8,-2.3){\rotatebox{90}{$f_{\mu}(x)$}};
\draw[->,ultra thick,bend angle=0, bend left] (folded.south east) to (unfold_aux.north);
\draw[->,ultra thick,bend angle=0, bend left] (folded.east)++(0.5,0) to (Wasser.west);
\end{tikzpicture}
    \caption{\emph{Top left}: A uniform density $f_{\mu}(x)$ on $[0,1]$ is transported to a discontinuous density $f_{\nu}(y)$ with a smooth, non-injective transport (denoted by ``folded transport'' $T$).
    The coloring is with respect to the $x$-axis, and shows how $f_{\nu}(y)$ arises as the sum of the two branches of the transport map (compare \eqref{eq:non_inj_transport}).    
    \emph{Top right}:  The same uniform density $f_{\mu}(x)$ is transported to the same discontinuous density $f_{\nu}(y)$ with a continuous, but not $C^1$ transport (Wasserstein transport  $W$, computed with \eqref{eq:transport_1D}).
    The coloring schematically shows how the the two parts of the density in $y$ are mapped back to $x$. \emph{Bottom}: A uniform density $f_{\mu}(x)$ on $[0,1]$ is transported to $f_{c}(s)$, which is an ``unfolded'' version of $f_{\nu}(y)$. It is defined on the arclength $s$ of the curve arising from a time-delay embedding of observations of $f_{\nu}(y)$, described in \Cref{sec:unfold_discont_dens}. The ``unfolded transport'' is invertible.
    }
    \label{fig:joint_density}
\end{figure}
%%%%%%%%%%%%%%%%%%%%%%%%%%%%%%%%%%%%%%%%%%%%%%%%%%%%%%%%%%%%%%%%%%%%%%%%%%%%%%%%%%%%%%%%%%

Consider a density $f_{\nu}$ which is pushed by a transport $T$ from a continuous density $f_{\mu}$. 
The two types of discontinuity defined above are illustated in \Cref{fig:intro_hook}. In this figure, the uniform density $f_{\mu}$ on the $x$-axis is transported to $f_{\nu}$ on the $y$-axis. 
We present two transports that push $f_{\mu}$ to $f_{\nu}$: The ``folded'' (i.e.\ non-injective) smooth map (black), and the Wasserstein optimal transport \eqref{eq:transport_1D} (red), which is continuous, but not $C^1$.
The divergence to infinity arises because both transport maps (black and red) have a maximum, i.e.\ satisfy $T'(x)=0$ at some $x \in [0,1]$.
The jump discontinuity of $f_{\nu}$ arises in each transport for a different reason:
\begin{enumerate}%[(i)]
    \item\label{transport_not_smooth} through the derivative discontinuity in the red transport map,
    \item\label{transport_not_bij} through the folding in the black transport map (the map is surjective, but not injective).
\end{enumerate}
Since we want to be able to work with noninvertible maps, like the black curve, we need to use \eqref{eq:push_density} rather than \eqref{eq:bijective_transport} to define push-forward of measures.
We now discuss case \eqref{transport_not_bij} in more detail. It illustrates that a continuous density can be pushed to a discontinuous one via a \emph{smooth} transport.
If the transport $T$ is not injective, there is a generalization of \eqref{eq:bijective_transport}:
\begin{equation}\label{eq:non_inj_transport}
    f_{\nu}(y)=\sum_{x \in T^{-1}(y)}\frac{f_{\mu}(x)}{|T'(x)|},
\end{equation}
if $T'(x)\neq 0$ for all $x$. Here $T^{-1}$ denotes the preimage.

Transporting a density with discontinuities as in \Cref{fig:intro_hook} to a uniform density with the Wasserstein optimal transport map $W$ \eqref{eq:transport_1D}, gives rise to a function which is not $C^1$ everywhere, as we integrate a discontinuous function, see \Cref{fig:joint_density} (top right).
Thus the non-injective transport $T$ of \Cref{fig:joint_density} (top left) is \emph{optimal} in the sense that it is smooth and transports the parametrization induced by the coloring correctly.
Note however that in general the Wasserstein cost \eqref{eq:monge} is smaller for $W$ than for $T$ ($W$ is closer to the identity map than $T$), hence $T$ is not optimal in the Wasserstein sense.
For simple examples similar to the one shown in \Cref{fig:joint_density}, the cost can be computed explicitly
\footnote{For a quadratic version of the folded transport $T$, given by $y=T(x)=-\left(\frac{3}{2}+\sqrt{2}\right)x^2+(2+\sqrt{2})x$, the cost \eqref{eq:monge} is $(1/60)(6+\sqrt{2})\approx 0.12$, which is more than for the Wasserstein optimal transport $W$, where the cost is $(1/60)(-54+41\sqrt{2}) \approx 0.07$.}.

In general, when dealing with a discontinuous density $f_{\nu}$ as in \Cref{fig:intro_hook}, we do not know that it consists of two (or more) branches as indicated by the coloring in \Cref{fig:joint_density}.
It is thus difficult to uncover the double folding and obtain the smooth transport $T$.
In the next sections we describe how the density $f_{\nu}$ can be unfolded by assuming some additional information (an observation process).
In \Cref{sec:C1_transport} we also suggest a construction of a $C^1$ transport only from the knowledge of the distributions, but without additional information.

%%%%%%%%%%%%%%%%%%%%%%%%%%%%%%%%%%%%%%%%%%%%%%%%%%%%%%%%%%%%%%%%%%%%%%%%%%%%
\subsection{Unfolding discontinuous densities}\label{sec:unfold_discont_dens}
%%%%%%%%%%%%%%%%%%%%%%%%%%%%%%%%%%%%%%%%%%%%%%%%%%%%%%%%%%%%%%%%%%%%%%%%%%%

%%%%%%%%%%%%%%%%%%%%%%%%%%%%%%%%%%%%%%%%%%%%%%%%%%%%%%%%%%%%%%%%%%%%%%%%%%%%%%%%%%%%%%%%
% FIGURE HOOK TIME DELAY (belongs to next section)
%%%%%%%%%%%%%%%%%%%%%%%%%%%%%%%%%%%%%%%%%%%%%%%%%%%%%%%%%%%%%%%%%%%%%%%%%%%%%%%%%%%%%%%%%
\begin{figure}[t!]
\centering
\begin{tikzpicture}
%% MANIFOLD & Labels
\node[inner sep=0pt,align=left] (caption_cusp) at (-1.4,2.1)
    {Unknown parametrization, \\unknown manifold};
\node[inner sep=0pt] (cusp) at (-1.6,-0.9)
   {\includegraphics[width=.25\textwidth]{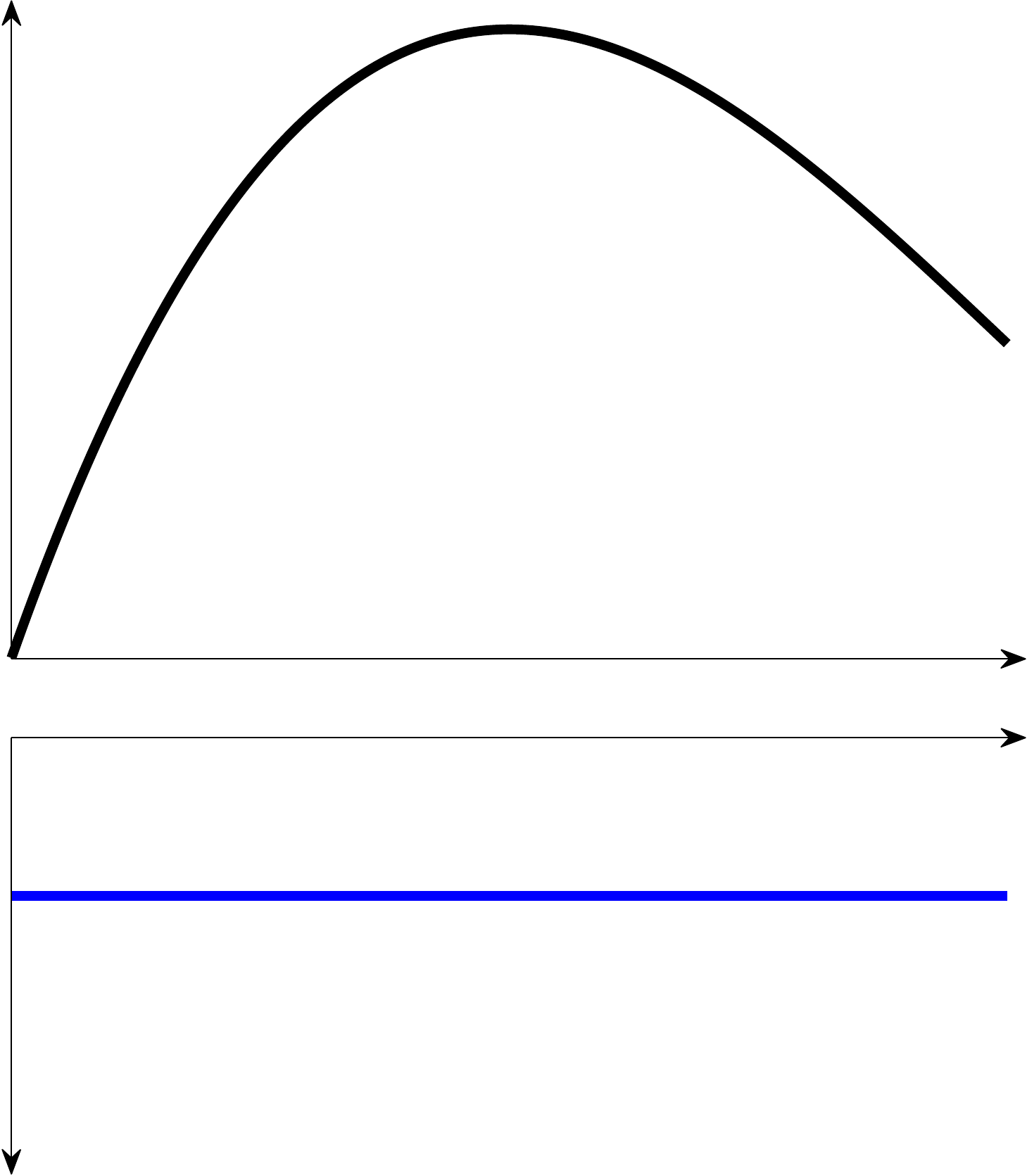}};
\node[inner sep=0pt] (y2) at (-3.8,0){\rotatebox{90}{$y$}};
\node[inner sep=0pt] (x2) at (-1.8,-1.35){$x$};
%%
%% HISTOGRAM & Labels
\node[inner sep=0pt] (caption_hist) at (3.9,2.1)
    {Recorded histogram};   
\node[inner sep=0pt] (mu) at (3.9,0)
    {\includegraphics[width=.2\textwidth]{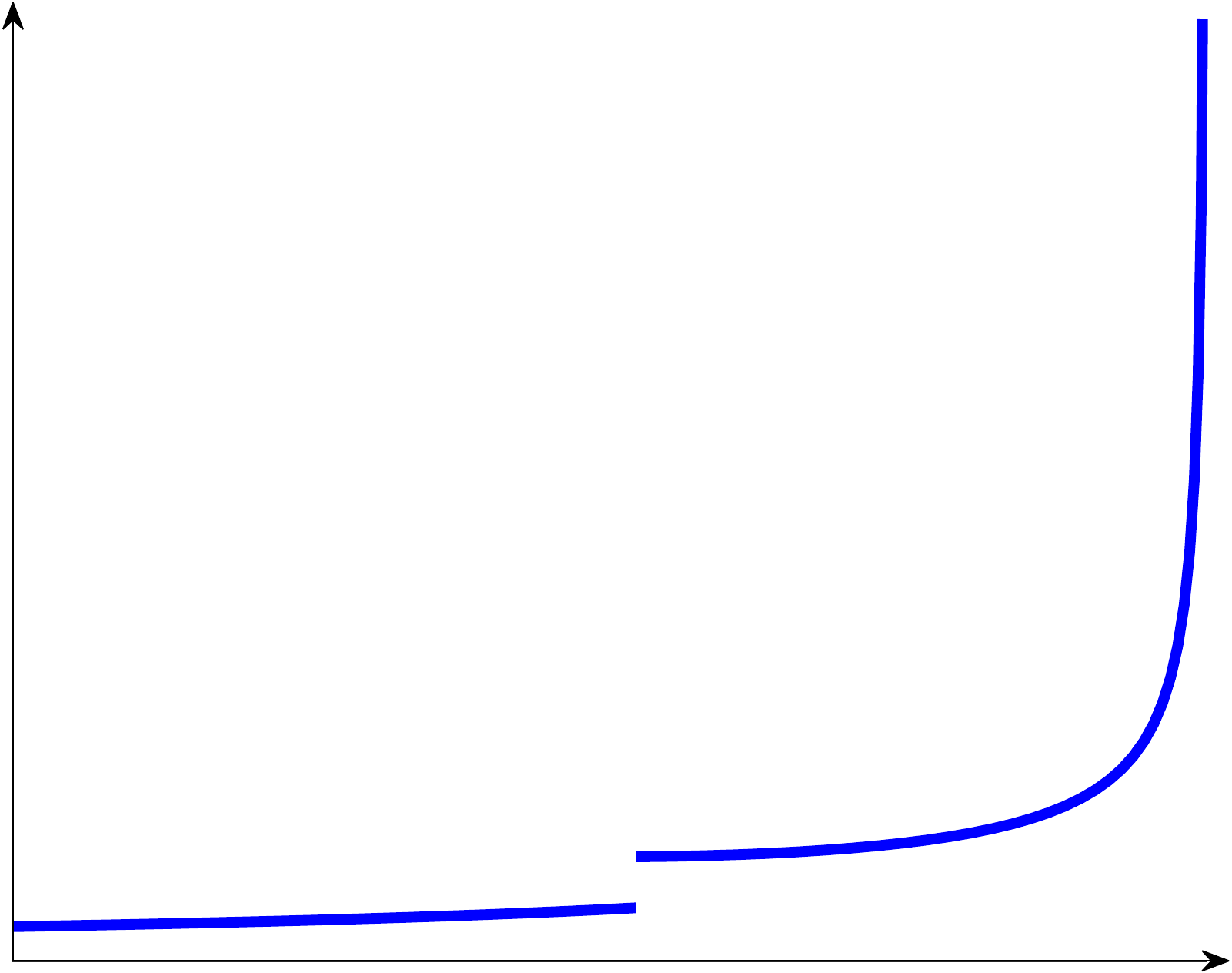}};
\node[inner sep=0pt] (y) at (3.9,-1.5){$y$};    
% auxiliarly nodes
\node[inner sep=0pt] (mu_south) at (9,0){};
\node[inner sep=0pt] (mu_east) at (3,0){};
%% DMAP
\node[inner sep=0pt] (cap_DMAP) at (9.2,2.1){Delay-embedding};
\node[inner sep=0pt] (DMAP) at (9.2,0)
    {\includegraphics[width=.2\textwidth]{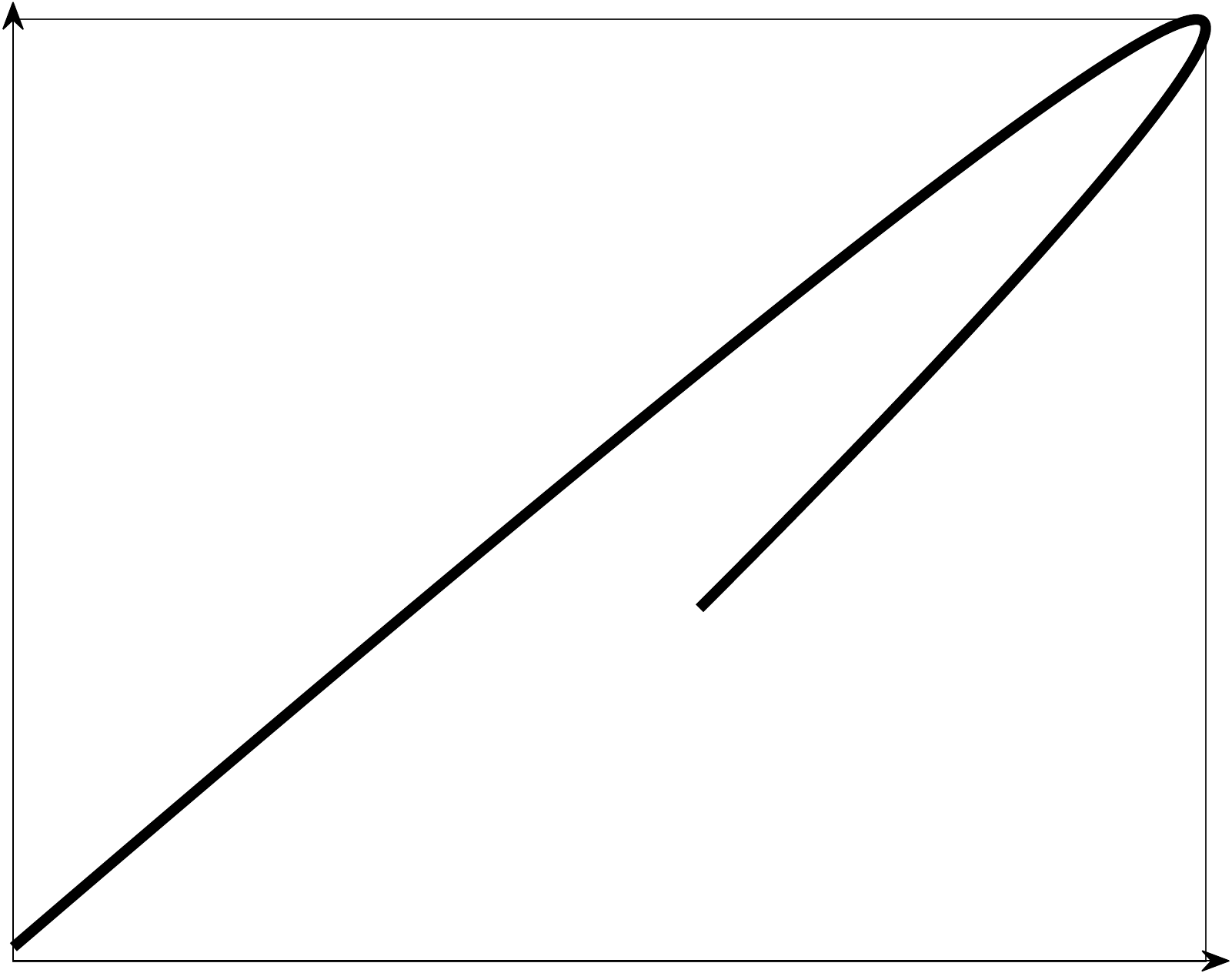}};
\node[inner sep=0pt] (phi1) at (9.1,-1.5){$y_{\delayB}$};
\node[inner sep=0pt] (phi2) at (7.4,0){\rotatebox{90}{$y_\delayA$}};
%%
%%    
% %% PCA
% \node[inner sep=0pt] (OMT_cusp) at (9.2,1)
%     {\includegraphics[width=.25\textwidth]{}};
% \node[inner sep=0pt] (x) at (9,-0.8){$\operatorname{PCA}_1$};
% \node[inner sep=0pt] (lambda) at (7,1.1){\rotatebox{90}{$\operatorname{PCA}_2$}};
% %% 
%% ARROWS
\draw[->,ultra thick,bend angle=0, bend left] (0.8,0) to (1.8,0);
\draw[->,ultra thick,bend angle=0, bend left] (mu.east)++(0.4,0) to (7,0);
% \draw[->, ultra thick,bend angle=0, bend left] (mu.south east)++(-0.25,-0.25) to node[midway,fill=white,text width=3cm]{parametrization in $(y_{\delayB},y_\delayA)$ and PCA} (OMT_cusp_aux.north west);
% \draw[->, ultra thick,bend angle=0, bend left] (OMT_cusp.west)++(-1,0) to node[midway,fill=white,text width=1.2cm]{DMAP} (DMAP.east);
% \draw[<->,dashed,thick,bend angle=0, bend left] (cusp.south)++(0,-0.6) to node[midway,fill=white]{diffeomorphic} (OMT_cusp.north);
\end{tikzpicture}
\caption{
\emph{Left}: On an unknown manifold (black curve), obtained from an unknown parametrization (uniform on $x$-axis, blue), we observe consecutive values of $y$ starting from uniformly distributed points in $x$. 
\emph{Middle}:
Histogram recorded from $y$-values.
\emph{Right:}
With the knowlegde of two consecutive $y$ values, we can
reconstruct (up to a diffeomorphism) the original curve using a delay-embedding. We show the time-delay embedding in $(y_{\delayB},y_\delayA)$. A parametrization of this curve (e.g.\ by arclength) can be used to transport the points to the original parametrization on $x$ (see \Cref{fig:joint_density}, bottom right).}
\label{fig:hook_overiew}
\end{figure}
%%%%%%%%%%%%%%%%%%%%%%%%%%%%%%%%%%%%%%%%%%%%%%%%%%%%%%%%%%%%%%%%%%%%%%%%%%%%%%%%%%%%%%%%%

We start by explaining the idea of unfolding discontinuous densities.
Given a curve in $\RR^2$, $c(s)=(c_1(s),c_2(s))$, and a density $f_{c}$ over $s$, we can push this density to the two axes by projecting it to the coordinates $c_1$ and $c_2$. Even in the case that both $c_1$ and $c_2$ are non-injective, we can push the density with the generalized formula \eqref{eq:non_inj_transport}.

A simple example is the graph of a function $T$, given by the curve
\begin{equation}\label{eq:transport_as_curve}
c(x)=(x,T(x)), \quad x \in [0,1].
\end{equation}
Here the projection to the first coordinate is the identity, and the projection to the second coordinate is given through the function $T$.
Now we can reparametrize the curve $c$ with a bijective map $s=\varphi(x)$ resulting in a curve
$
\tilde{c}(s)=c(\varphi^{-1}(s))=(\varphi^{-1}(s),T(\varphi^{-1}(s))).
$
We then consider a transport from $s$ to $x$.
\Cref{fig:joint_density} illustrates the effect of the coordinate transformation $\varphi$. The uniform density $f_{\mu}$ on $[0,1]$ ($x$-axis) is transported to a discontinuous density $f_{\nu}$ on the $y$-axis via a non-injective transport map $T$.
Reparametrizing the curve $c(x)=(x,T(x))$ by its arclength
\begin{equation}
    s = \varphi(x)= \arcl(x)=\int_0^x ||\dot{c}(t)||dt
\end{equation}
gives rise to the density
\begin{equation}\label{density_arclength}
    f_{c}(s)=f_{\mu}(\arcl^{-1}(s))|{\arcl^{-1}}'(s)|.
\end{equation}
The density $f_{c}$ on the arclength $s$ is continuous, in contrast to the discontinuous density $f_{\nu}$ on $y$ (\Cref{fig:joint_density}). This can also be seen by rewriting \eqref{density_arclength}:
\begin{equation*}
    f_{c}(s)=f_{\mu}(\arcl^{-1}(s))\frac{1}{\sqrt{1+T'(\arcl^{-1}(s))^2}},
\end{equation*}
where the denominator is never zero.
Projecting from $s$ to the second component of $\tilde{c}$, we obtain the transport from $f_{c}$ to $f_{\nu}$, which is the original transport $T$ up to the one-to-one transport between $x$ and $s$.
As we will see below, by using additional information from an observation process, we can recover the fact that $T$ is folded.

\subsection{Time-delay embedding}\label{sec:hook}
%%%%%%%%%%%%%%%%%%%%%%%%%%%%%%%%%%%%%%%%%%%%%%%%%%%%%%%%%%%%%%%%%%%%%%%%

%%%%%%%%%%%%%%%%%%%%%%%%%%%%%%%%%%%%%%%%%%%%%%%%%%%%%%%%%%%%%%%%%%%%%%%%%%%%%%%%%%%
%FIGURE 3D HOOK
%%%%%%%%%%%%%%%%%%%%%%%%%%%%%%%%%%%%%%%%%%%%%%%%%%%%%%%%%%%%%%%%%%%%%%%%%%%%%%%%%%%
\setlength{\unitlength}{5cm}

\begin{figure}[t!]
\centering
\begin{picture}(0.9,1.4)
\put(-1.3,-0.1){\includegraphics[scale=0.5]{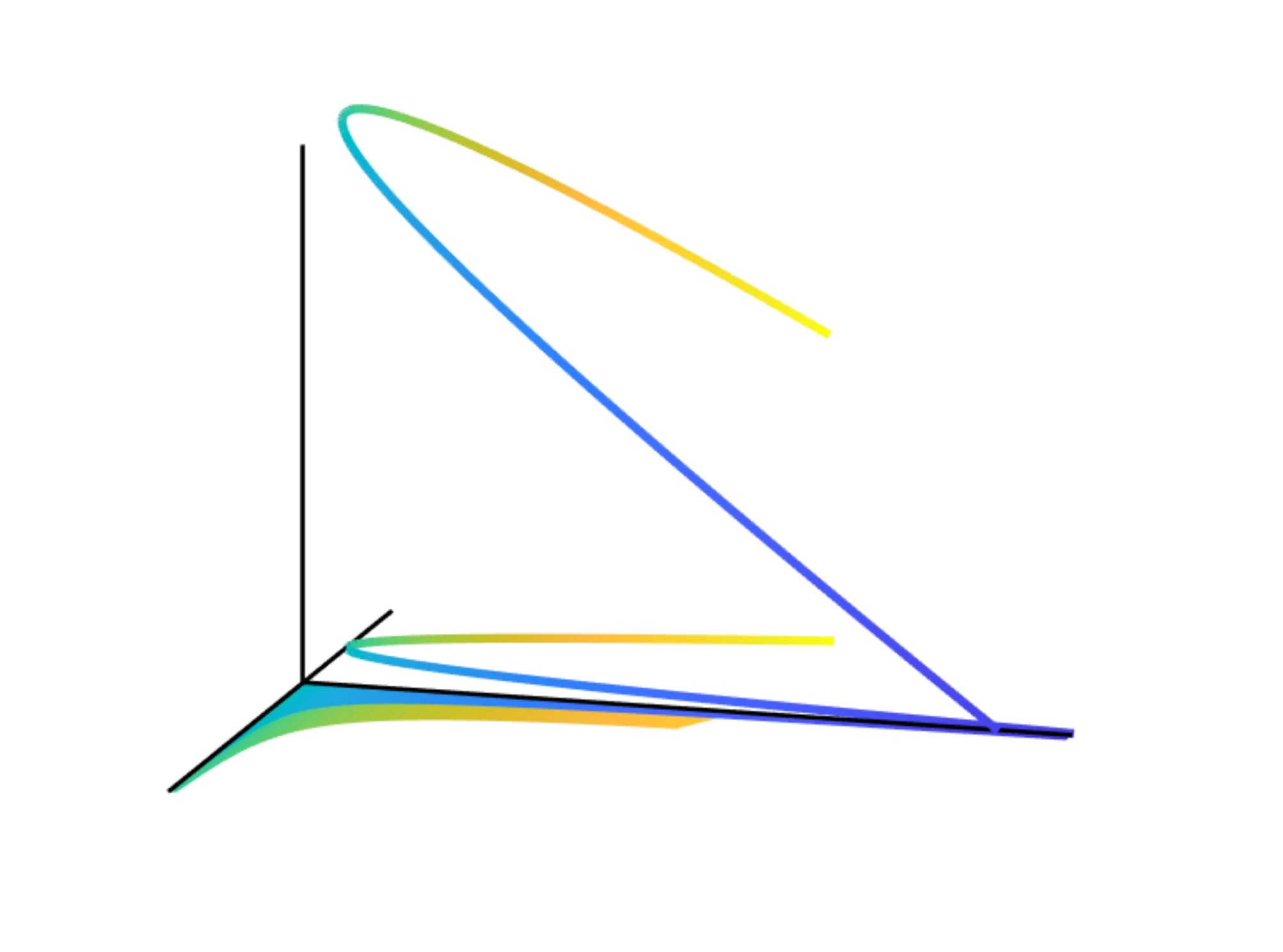}}
\put(-1,0.7){\rotatebox{90}{$T(x-\tau)$}}
\put(-0.75,0.44){$x$}
\put(-1.05,0.3){$f_{\nu}(y)$}
\put(-0.2,0.15){$y\equiv T(x)$}
\put(0.35,0){\includegraphics[scale=0.45]{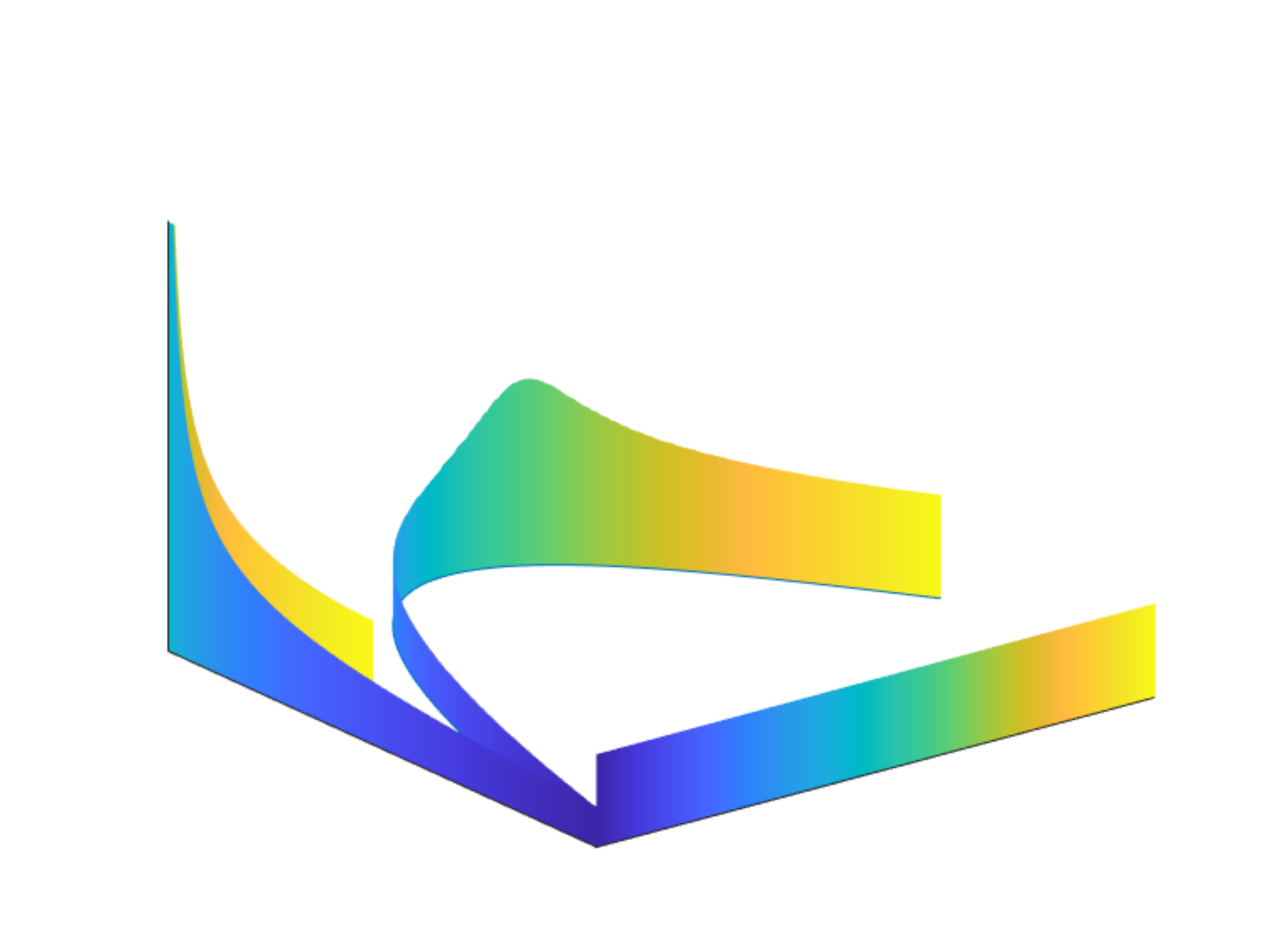}}
\put(1.7,0.15){$x$}
\put(0.8,0.2){$y$}
\put(0.45,0.7){\rotatebox{90}{density}}
\end{picture}
\caption{
\emph{Left:} Plot shows the curve $c(x)=(x,T(x))$ in the $(x,y,0)$-plane, the discontinuous density $f_{\nu}(y)$ on $y \equiv T(x)$, and the curve $x \mapsto (x,T(x),T(x-\tau))$, all colored by arclength of $d(x) = (T(x-\tau),T(x))$.
\emph{Right:} Densities are plotted as height over the respective axis: Uniform density on $[0,1]$ ($x$-axis), discontinuous density on $y$-axis, and density over the arclength, plotted on the curve (transport map).}
\label{fig:3D_hook}
\end{figure}
%%%%%%%%%%%%%%%%%%%%%%%%%%%%%%%%%%%%%%%%%%%%%%%%%%%%%%%%%%%%%%%%%%%%%%%%%%%%%%%%%%%%%%%%%%

In the previous section we assumed the knowledge of the full curve $c(x)=(x,T(x))$ to explain the general concept. 

In this section, we show how equivalent results can be achieved by employing a (systematic but unknown) ``observation process"---considering, for example, that values of $y$ are always observed along trajectories of a systematic constant motion in $x$.
We do not assume that all the $y$ observations have been produced along a single constant-speed $x$ trajectory; we may have many short trajectories initialized at various random $x$ values.
Yet all of the observations follow the same process (the same, unknown, $x$-sampling dynamics).

Now the curve $c$ is embedded 
in a higher-dimensional (here, two-dimensional) space
by using the time-delay embedding $d$:
\begin{equation}\label{eq:time_delay_curve}
    d(x)=(T(x-\tau),T(x)),
\end{equation}
with time-delay parameter $\tau$.
By Takens' results \cite{takens-1981}, we know that such a time-delay embedding is diffeomorphic to the original curve (see \Cref{fig:hook_overiew}), if enough time delays and generic observations are used.
In this particular example we only need a single delay to obtain a diffeomorphic embedding (\Cref{fig:3D_hook}).
In general, to embed an $\Mdim$-dimensional manifold, $2\Mdim+1$ observation functions (e.g., one real-valued observation plus $2\Mdim$ delays) will be sufficient (see \Cref{sec:appendix proofs}).

As in \Cref{sec:unfold_discont_dens}, we can transport the uniform density $f_{\mu}(x)$ on $[0,1]$ to a density $f_{d}(s')$ along the arclength $s'$ of the curve embedded through time-delays \eqref{eq:time_delay_curve}. The density $f_{d}$ is shown on the original curve $c$ in \Cref{fig:3D_hook} (right).

% \Cref{fig:two_transports_arcl} compares the transport with respect to the curves $c$ \eqref{eq:time_delay_curve} (left, blue) and $d$ \eqref{eq:transport_as_curve} (right, black). The transport maps are different, because the curves $c$ and $d$ are embedded in different spaces (still, they are diffeomorphic).

With the same approach we can also recover the underlying manifold in more complicated examples. In \Cref{fig:hook_more_folds} we consider a function consisting of many folds (left). 
The histogram obtained by transporting the uniform density on $[0,1]$ ($x$-axis) with this function consists of many discontinuities (middle). By using delay coordinates $(y_{n+1},y_n)$ (and PCA) we can recover a manifold that is diffeomorphic to the original function.
A parametrization of this curve, here obtained with DMAP (right) can be used to transport to the uniform density on $[0,1]$ ($x$-axis).

%%%%%%%%%%%%%%%%%%%%%%%%%%%%%%%%%%%%%%%%%%%%%%%%%%%%%%%%%%%%%%%%%%%%%%%%%%%%%%%%%%%%%%%%
% FIGURE HOOK WITH MORE FOLDS
%%%%%%%%%%%%%%%%%%%%%%%%%%%%%%%%%%%%%%%%%%%%%%%%%%%%%%%%%%%%%%%%%%%%%%%%%%%%%%%%%%%%%%%%%
\begin{figure}[t!]
\centering
\begin{tikzpicture}
%% MANIFOLD & Labels
\node[inner sep=0pt,align=left] (caption_cusp) at (-1.4,2.1)
    {Unknown parametrization, \\unknown manifold};
\node[inner sep=0pt] (cusp) at (-1.6,0)
   {\includegraphics[width=.25\textwidth]{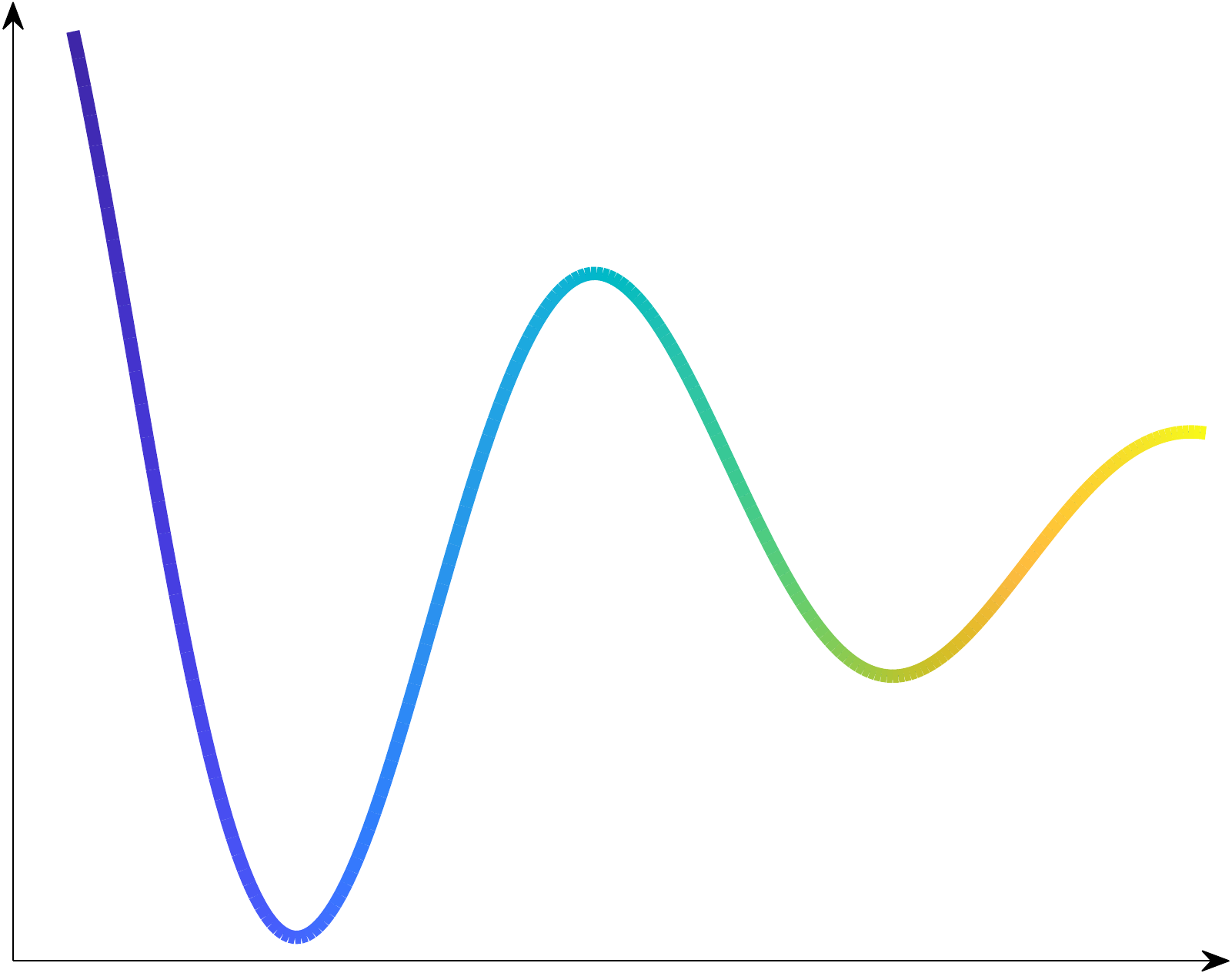}};
\node[inner sep=0pt] (y2) at (-3.8,0){\rotatebox{90}{$y$}};
\node[inner sep=0pt] (x2) at (-1.8,-1.8){$x$};
%%
%% HISTOGRAM & Labels
\node[inner sep=0pt] (caption_hist) at (3.9,2.1)
    {Recorded histogram};   
\node[inner sep=0pt] (mu) at (3.9,0)
    {\includegraphics[width=.25\textwidth]{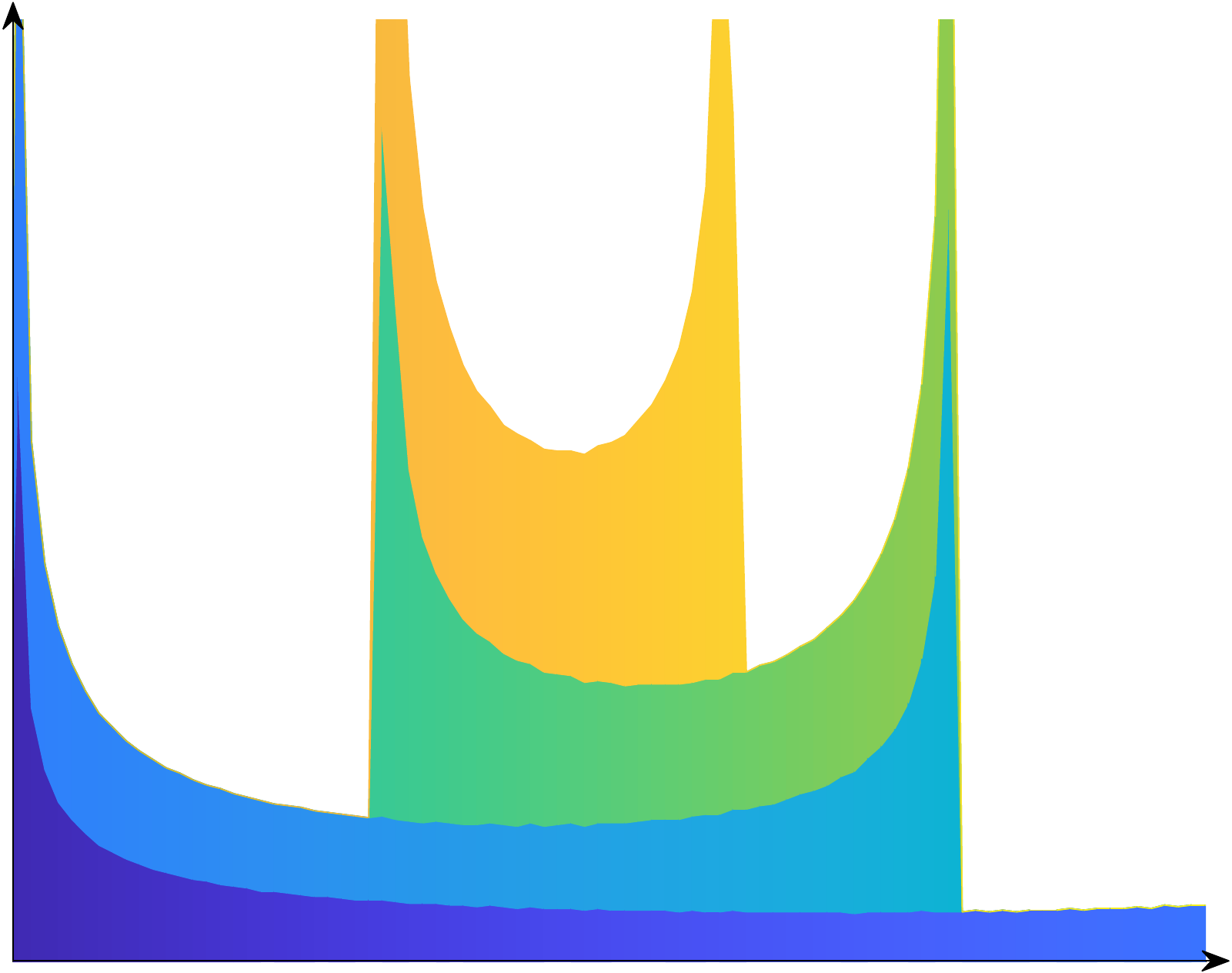}};
\node[inner sep=0pt] (y) at (3.9,-1.8){$y$};    
% auxiliarly nodes
\node[inner sep=0pt] (mu_south) at (9,0){};
\node[inner sep=0pt] (mu_east) at (3,0){};
%% DMAP
\node[inner sep=0pt] (cap_DMAP) at (9.2,2.1){DMAP};
\node[inner sep=0pt] (DMAP) at (9.2,0)
    {\includegraphics[width=.25\textwidth]{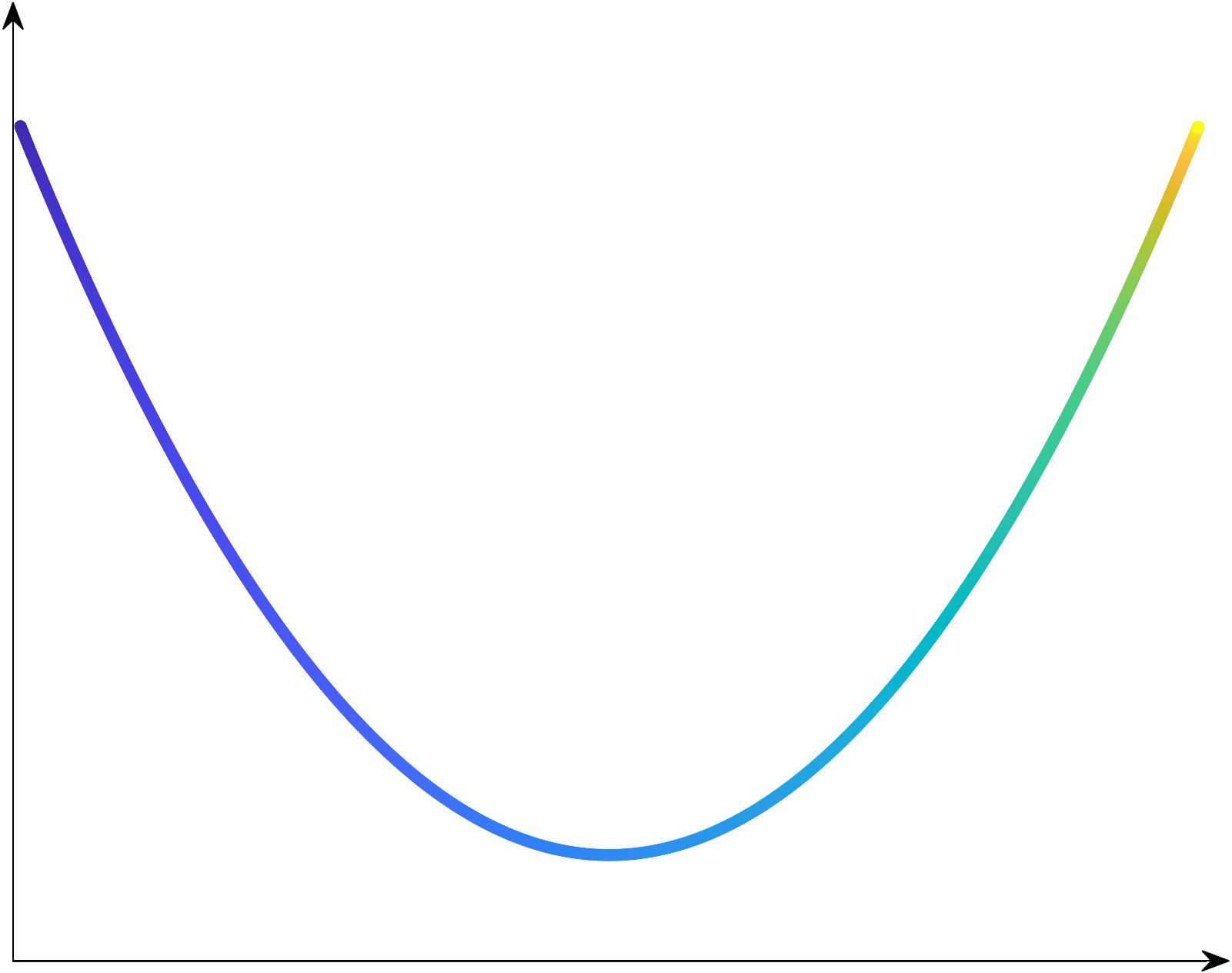}};
\node[inner sep=0pt] (phi1) at (9,-1.8){$\phi_1$};
\node[inner sep=0pt] (phi2) at (7,0){\rotatebox{90}{$\phi_2$}};
%%
%%    
% %% PCA
% \node[inner sep=0pt] (OMT_cusp) at (9.2,1)
%     {\includegraphics[width=.25\textwidth]{complicated_hook_PCA.pdf}};
% \node[inner sep=0pt] (x) at (9,-0.8){$\operatorname{PCA}_1$};
% \node[inner sep=0pt] (lambda) at (7,1.1){\rotatebox{90}{$\operatorname{PCA}_2$}};
% %% 
%% ARROWS
\draw[->,ultra thick,bend angle=0, bend left] (cusp.east)++(0.6,0) to (1.6,0);
\draw[->,ultra thick,bend angle=0, bend left] (mu.east) to (6.5,0);
% \draw[->, ultra thick,bend angle=0, bend left] (mu.south east)++(-0.25,-0.25) to node[midway,fill=white,text width=3cm]{parametrization in $(y_{\delayB},y_\delayA)$ and PCA} (OMT_cusp_aux.north west);
% \draw[->, ultra thick,bend angle=0, bend left] (OMT_cusp.west)++(-1,0) to node[midway,fill=white,text width=1.2cm]{DMAP} (DMAP.east);
% \draw[<->,dashed,thick,bend angle=0, bend left] (cusp.south)++(0,-0.6) to node[midway,fill=white]{diffeomorphic} (OMT_cusp.north);
\end{tikzpicture}
\caption{
\emph{Left}: On an unknown manifold (curve), obtained from an unknown parametrization (uniform on $x$-axis, indicated by coloring), we observe consecutive values of $y$ starting from uniformly distributed points in $x$. 
\emph{Middle}:
Histogram recorded from $y$-values. The coloring of the histogram indicates from which part of the curve (left) it has been produced.
\emph{Right:}
With the knowlegde of two consecutive $y$ values, we can
reconstruct (up to a diffeomorphism) the original curve using a delay-embedding. Here we show a
parametrization of this curve with DMAP embedding $(\phi_1,\phi_2)$ applied to PCA coordinates of $(y_{\delayB},y_\delayA)$. This embedding can be used to transport the points to the original uniform parametrization on $x$.}
\label{fig:hook_more_folds}
\end{figure}
%%%%%%%%%%%%%%%%%%%%%%%%%%%%%%%%%%%%%%%%%%%%%%%%%%%%%%%%%%%%%%%%%%%%%%%%%%%%%%%%%%%%%%%%%

Instead of functions, we can even consider relations: %In \Cref{fig:curve1} the underlying manifold is a spiral;
In \Cref{fig:curve2} the underlying manifold it is a closed curve.
In this case, the recorded histogram is obtained by pushing the uniform density on the arclength of the curve to the $y$-axis. Note that in these examples we can also consider the histogram obtained by pushing the uniform density on the arclength to the $x$-axis---in the examples where we consider functions (\Cref{fig:hook_overiew,fig:hook_more_folds}) this is not interesting, as we already parametrize the manifolds by the $x$-axis.

Also in the case of relations we can reconstruct a diffeomorphic copy of the underlying manifold with a time-delay embedding in $(y_{n+2},y_{n+1},y_n)$. As in the case for functions, a parametrization (obtained by, e.g.\ DMAP) can be used to transport to the uniform density.

To conclude: Diffeomorphic copies of one-dimensional curves can be constructed through time-delay embeddings. The densities on these curves are continuous, in contrast to the original target densities.

\begin{figure}[t!]
\centering
\begin{tikzpicture}
%% MANIFOLD & Labels
\node[inner sep=0pt,align=left] (caption_cusp) at (-1.4,2.1)
    {Unknown parametrization, \\unknown manifold};
\node[inner sep=0pt] (cusp) at (-1.6,0)
   {\includegraphics[width=.2\textwidth]{{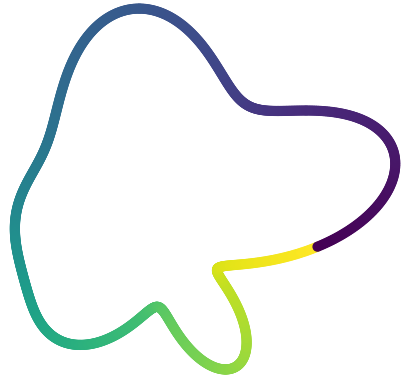}}};
\node[inner sep=0pt] (y2) at (-3.8,0){\rotatebox{90}{$y$}};
\node[inner sep=0pt] (x2) at (-1.8,-1.8){$x$};
\draw[-] (-3.5,-1.6) to (0.3,-1.6);
\draw[-] (-3.5,-1.6) to (-3.5,1.1); 
%%
%% HISTOGRAM & Labels
\node[inner sep=0pt] (caption_hist) at (3.9,2.1)
    {Recorded histogram};   
\node[inner sep=0pt] (mu) at (3.9,0)
    {\includegraphics[width=.2\textwidth]{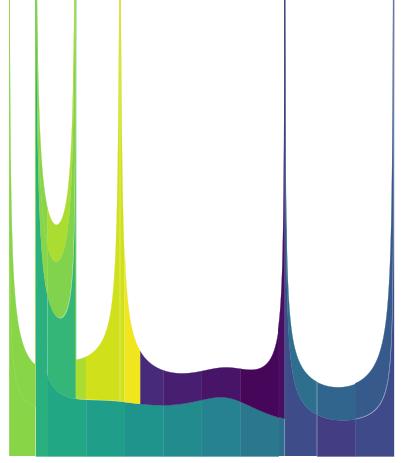}};
\node[inner sep=0pt] (y) at (3.9,-2){$y$};   
\draw[-] (2.3,-1.8) to (5.5,-1.8);
\draw[-] (2.3,-1.8) to (2.3,1.7); 
% auxiliarly nodes
\node[inner sep=0pt] (mu_south) at (9,0){};
\node[inner sep=0pt] (mu_east) at (3,0){};
%% DMAP
\node[inner sep=0pt] (cap_DMAP) at (9.2,2.1){DMAP};
\node[inner sep=0pt] (DMAP) at (9.2,0)
    {\includegraphics[width=.2\textwidth]{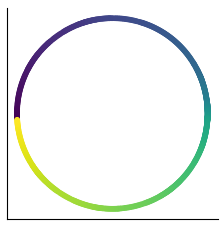}};
\node[inner sep=0pt] (phi1) at (9.2,-1.8){$\phi_1$};
\node[inner sep=0pt] (phi2) at (7.3,0){\rotatebox{90}{$\phi_2$}};
%%
%%    
% %% PCA
% \node[inner sep=0pt] (OMT_cusp) at (9.2,1)
%     {\includegraphics[width=.25\textwidth]{complicated_hook_PCA.pdf}};
% \node[inner sep=0pt] (x) at (9,-0.8){$\operatorname{PCA}_1$};
% \node[inner sep=0pt] (lambda) at (7,1.1){\rotatebox{90}{$\operatorname{PCA}_2$}};
% %% 
%% ARROWS
\draw[->,ultra thick,bend angle=0, bend left] (cusp.east)++(0.6,0) to (1.6,0);
\draw[->,ultra thick,bend angle=0, bend left] (mu.east)++(0.3,0) to (6.9,0);
\end{tikzpicture}
\caption{
\emph{Left}: On an unknown manifold (curve), obtained from an unknown parametrization (uniform on its arclength), we observe consecutive values of projections to the $y$-axis starting from uniformly distributed points on the arclength of the curve. 
\emph{Middle}:
Histogram recorded from $y$-values. The coloring of the histogram indicates from which part of the curve (left) it has been produced.
\emph{Right:}
With the knowlegde of three consecutive $y$ values, we can
reconstruct (up to a diffeomorphism) the original curve using a delay-embedding.
Here we show a
parametrization of this curve with DMAP embedding $(\phi_1,\phi_2)$ applied to PCA coordinates of $(y_{\delayC},y_{\delayB},y_\delayA)$. This embedding can be used to transport the points to the original uniform parametrization on the arclength of the curve.}
\label{fig:curve2}
\end{figure}
\section{Densities on two dimensional manifolds}\label{sec:densities in r2}
%%%%%%%%%%%%%%%%%%%%%%%%%%%%%%%%%%%%%%%%%%%%%%%%%%%%%%%%%%%%%%%%%%%%%%%%%%
In two dimensions, we illustrate cases exhibiting one-parameter families of discontinuities and study marginals of two-dimensional distributions.
We show how the idea of ``unfolding" discontinuities through process observations (in the form of delays) can be applied to such singular densities, and how the same approach can help construct joint distributions from marginal ones; both cases involve the construction of a copy of an underlying manifold from process observation histories.

%%%%%%%%%%%%%%%%%%%%%%%%%%%%%%%%%%%%%%%%%%%%%%%%%%%%%%%%%%%%%%%%%%%%%%%%%%
\subsection{Unfolding two-dimensional discontinuous densities}
%%%%%%%%%%%%%%%%%%%%%%%%%%%%%%%%%%%%%%%%%%%%%%%%%%%%%%%%%%%%%%%%%%%%%%%%%%
%%%%%%%%%%%%%%%%%%%%%%%%%%%%%%%%%%%%%%%%%%%%%%%%%%%%%%%%%%%%%%%%%%%%%%%%%%%%%%%%%%%
% FIGURE CUSP SURFACE
%%%%%%%%%%%%%%%%%%%%%%%%%%%%%%%%%%%%%%%%%%%%%%%%%%%%%%%%%%%%%%%%%%%%%%%%%%%%%%%%%%%
\begin{figure}[t!]
\centering
\begin{tikzpicture}
\node[inner sep=0pt] (caption_cusp) at (9,9.5)
    {Unknown parametrization, unknown manifold};
\node[inner sep=0pt] (cusp) at (9,7)
   {\includegraphics[width=.3\textwidth]{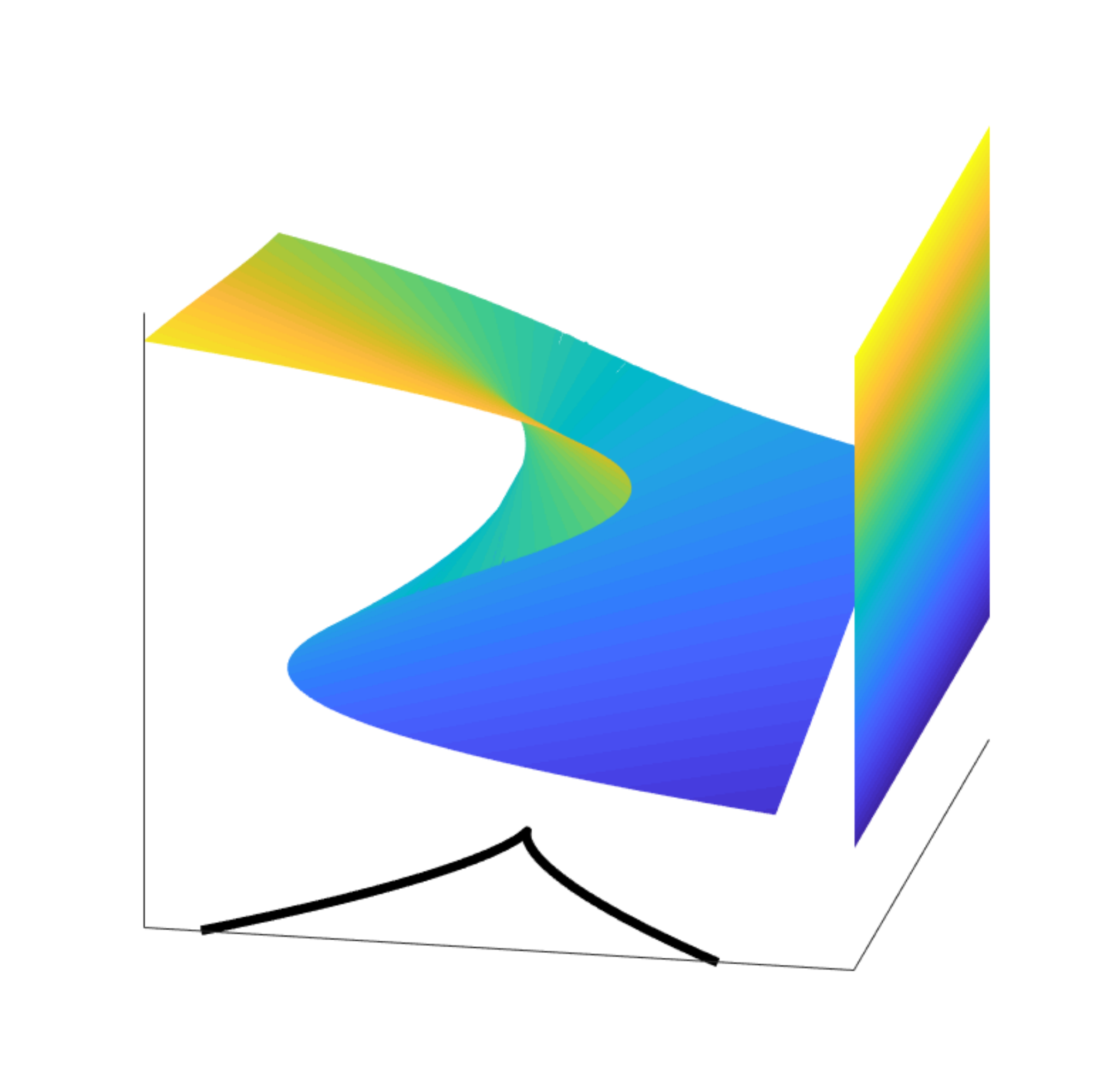}};
\node[inner sep=0pt] (caption_hist) at (0,9.5)
    {Recorded histogram on $(\parameterOne,\parameterTwo)$};   
\node[inner sep=0pt] (mu) at (0,7)
    {\includegraphics[width=.35\textwidth]{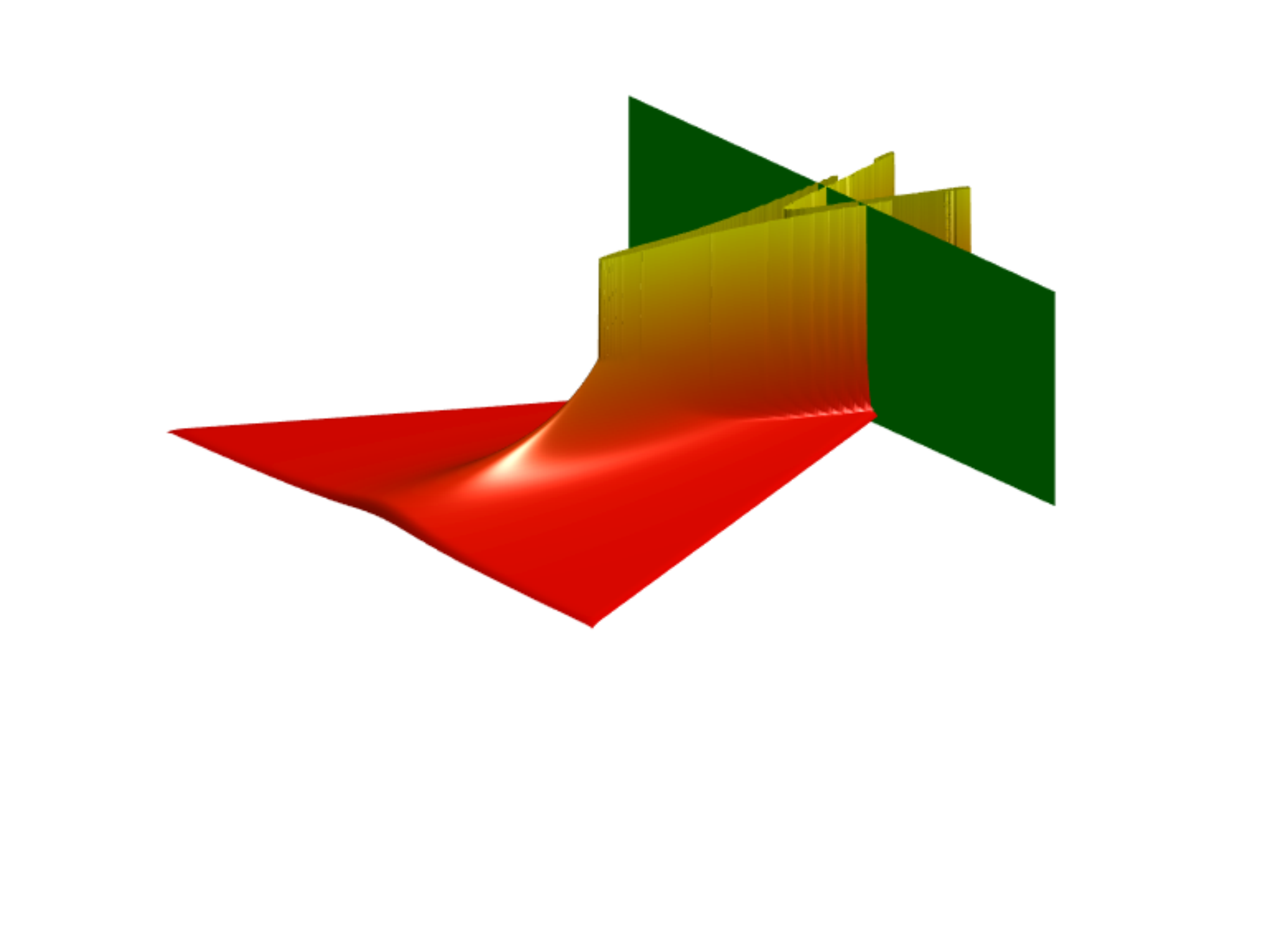}};
\node[inner sep=0pt] (mu) at (-0.3,5.75)
    {\includegraphics[width=.2\textwidth]{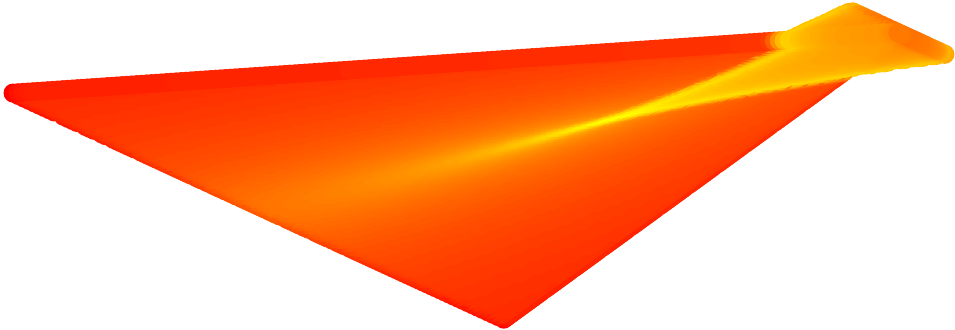}}; 
\node[inner sep=0pt] (mu_south) at (9,4.6){};
\node[inner sep=0pt] (mu_east) at (3,7){};
\node[inner sep=0pt] (OMT_cusp) at (9,0)
    {\includegraphics[width=.4\textwidth]{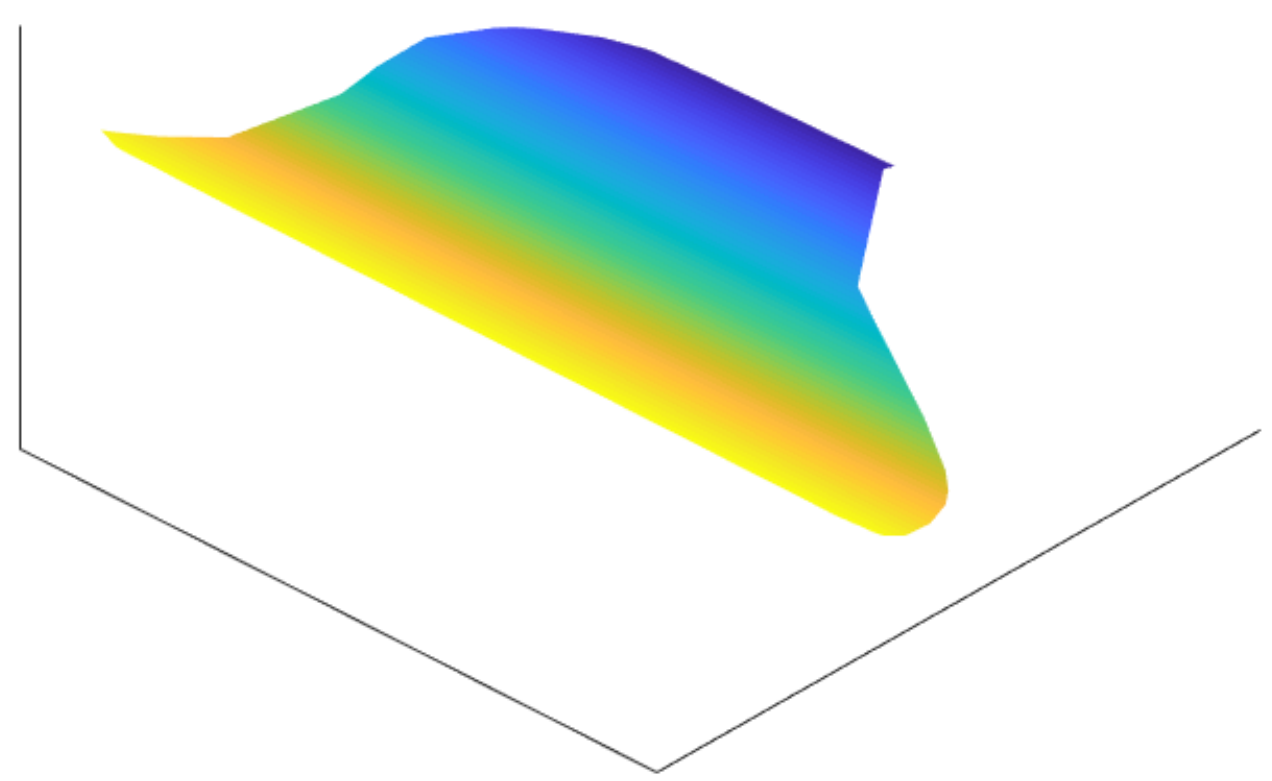}};
\node[inner sep=0pt] (cusp_2D_unfold) at (0.4,-0.5)
    {\includegraphics[width=.25\textwidth]{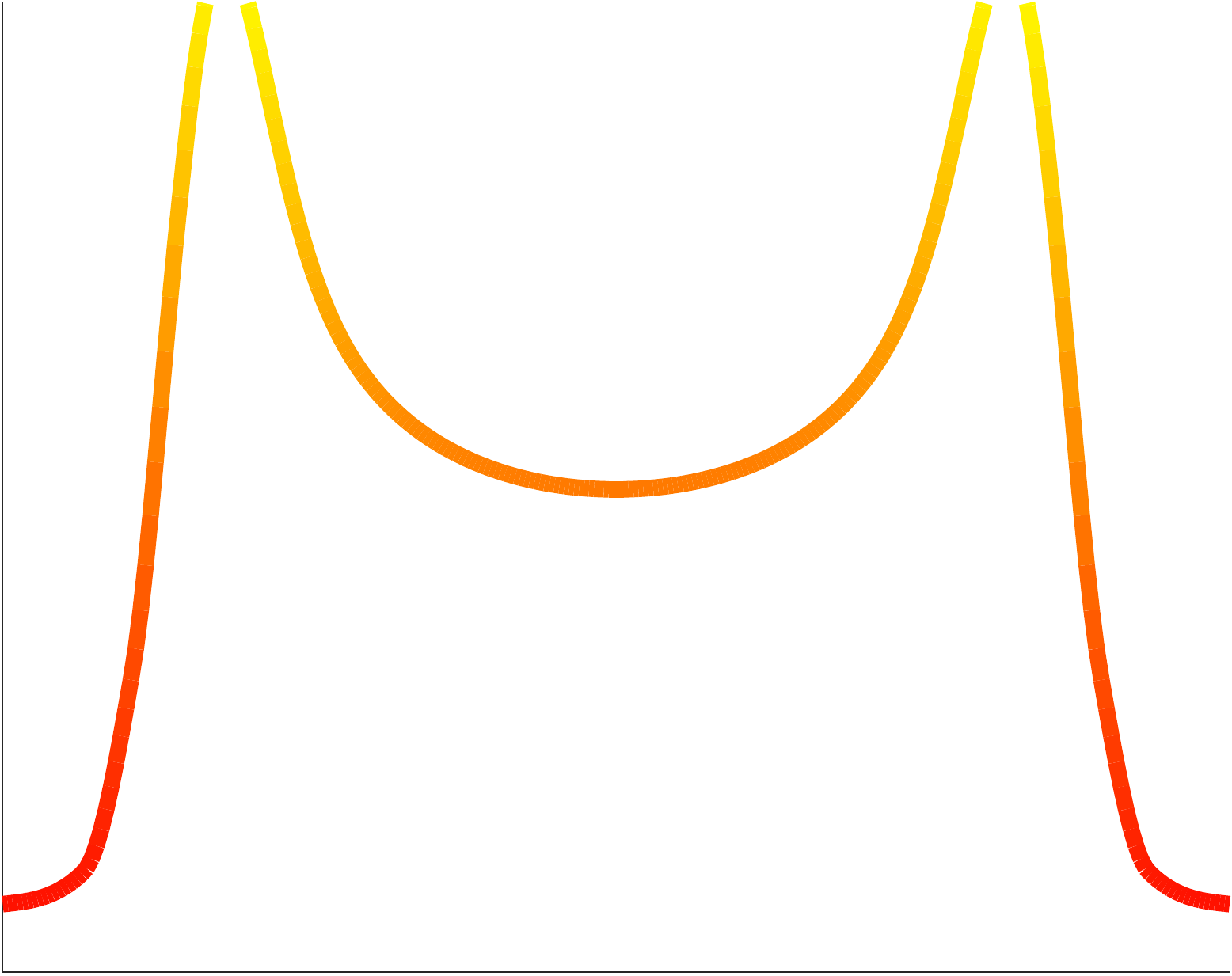}};
\node[inner sep=0pt] (unfold_north) at (-0.2,1.5){};
\node[inner sep=0pt] (PCA1) at (11,-1.5){$\operatorname{PCA}_1$};
\node[inner sep=0pt] (PCA2) at (6.8,-1){$\operatorname{PCA}_2$};
\node[inner sep=0pt] (PCA3) at (5.4,0.7){$\operatorname{PCA}_3$};
\node[inner sep=0pt] (beta2) at (0.2,-2.5){$\parameterTwo$};
\node[inner sep=0pt] (beta2) at (-1.8,-0.3){\rotatebox{90}{density}};
\node[inner sep=0pt] (cusp1) at (11,5.5){$\parameterOne$};
\node[inner sep=0pt] (cusp2) at (8.8,5){$\parameterTwo$};
\node[inner sep=0pt] (cusp3) at (7,6.6){$\systemState$};
\node[inner sep=0pt] (density1) at (-1.1,5.2){$\parameterTwo$};
\node[inner sep=0pt] (density1) at (1.3,5.4){$\parameterOne$};
\node[inner sep=0pt] (beta2) at (-2.5,7.3){\rotatebox{90}{density}};
\node[inner sep=0pt,text width=3.5cm] (caption) at (0,3.4){density in green slice at $(\parameterOne=2/3,\parameterTwo)$};
%coordinate system for 3D density
\draw[-] (-2.2,8.5) to (-2.2,6); 
\draw[-] (-2.2,6) to (-0.1,5);
\draw[-] (-0.1,5) to (1.8,6);
\draw[->,ultra thick,bend angle=0, bend left] (cusp.west) to (mu_east.east);
\draw[->, ultra thick,bend angle=0, bend left] (2.5,5)  to node[midway,fill=white,text width=4cm]{parametrization in $([\parameterOne]_{\delayB},[\parameterTwo]_{\delayA},[\parameterTwo]_\delayB)$} (OMT_cusp.north west);
\draw[->, ultra thick,bend angle=0, bend left] (1.8,6.5) to (1.8,1.5);
\draw[<->,dashed,thick,bend angle=0, bend left] (cusp.south) to node[midway,fill=white]{diffeomorphic} (OMT_cusp.north);
\end{tikzpicture}
\caption{
\emph{Top right}: The cusp surface, together with its parametrization in $(\systemState,\parameterOne)$ and the cusp (black curve), shown as the projection of the folds of the surface on the plane of the two parameters. We treat this surface as the intrinsic, unknown manifold. We observe, for each randomly chosen initial condition $(\systemState_{\delayA},[\parameterOne]_{\delayA})$, the values $[\parameterOne]_{\delayB}$, $[\parameterTwo]_\delayA$ and $[\parameterTwo]_{\delayB}$, as the observation process moves in $\parameterOne$-direction.
\emph{Top left:}
Density obtained by considering only $([\parameterOne]_\delayA,[\parameterTwo]_\delayA)$ observations; note the one-parameter family of infinities. The same density is projected on the $([\parameterOne]_\delayA,[\parameterTwo]_\delayA)$-plane, where yellow color indicates higher density.
\emph{Bottom left:}
For fixed $\parameterOne=2/3$, we observe the density over $\parameterTwo$, i.e.\ the density in the green slice shown in the top left panel; notice the two infinities.
\emph{Bottom right:}
Using principal components (PCA) of $([\parameterOne]_{\delayB}, [\parameterTwo]_\delayA,[\parameterTwo]_{\delayB})$, we can reconstruct the surface. A parametrization of this surface can be transported to the original parametrization in $(\systemState,\parameterOne)$. 
\emph{Coloring:}
The color in the left two plots indicates increasing density (from red to yellow), while the color in the right two plots is with respect to increasing $\systemState$-values (from blue to yellow).}
\label{fig:cusp_density_3D}
\end{figure}
%%%%%%%%%%%%%%%%%%%%%%%%%%%%%%%%%%%%%%%%%%%%%%%%%%%%%%%%%%%%%%%%%%%%%%%%%%%%%%%%%%%%%%%

Consider the cusp surface (\Cref{fig:cusp_density_3D}, top right) for $\systemState,\parameterOne \in [-1,1]$ and $\parameterTwo(\systemState,\parameterOne) = \systemState^3 -\parameterOne \, \systemState$.
Sampling uniformly on the $(\systemState,\parameterOne)$-square and observing the distribution of points in $(\parameterOne,\parameterTwo)$ gives rise to a density with a one-parameter family of discontinuities at which the density approaches infinity (see \Cref{fig:cusp_density_3D}, left), including the cusp point $(\parameterOne,\parameterTwo)=(0,0)$.

To ``unfold" this two-dimensional singular density, we assume access to data from an observation process on the $(\systemState,\parameterOne)$ plane, starting from randomly chosen initial conditions and moving in the positive $\parameterOne$-direction. The observations in this example are $([\parameterOne]_\delayB,[\parameterTwo]_\delayA,[\parameterTwo]_{\delayB})$; the $\systemState$ coordinate is not recorded.
\Cref{fig:cusp_density_3D} illustrates the embedding of the $(\systemState,\parameterOne)$-plane into the space of the three principal components of the collection of these delayed observations (bottom right panel).
A two-dimensional parametrization of this
reconstructed surface (e.g.\ through DMAP) can be used to transport the points to the original parametrization in  $(\systemState,\parameterOne)$.

%%%%%%%%%%%%%%%%%%%%%%%%%%%%%%%%%%%%%%%%%%%%%%%%%%%%%%%%%%%%%%%%%%%%%%%%%%%%%%%%%%%%%%%%
% FIGURE CUSP MARGINAL
%%%%%%%%%%%%%%%%%%%%%%%%%%%%%%%%%%%%%%%%%%%%%%%%%%%%%%%%%%%%%%%%%%%%%%%%%%%%%%%%%%%%%%%%%
\begin{figure}[t!]
\centering
\begin{tikzpicture}
\node[inner sep=0pt] (caption_cusp) at (9,9.5)
    {Unknown parametrization, unknown manifold};
\node[inner sep=0pt] (cusp) at (9,7)
   {\includegraphics[width=.3\textwidth]{cusp_paramet_proj.pdf}};
\node[inner sep=0pt] (caption_hist) at (0,9.5)
    {Recorded histogram on $\parameterTwo$};   
\node[inner sep=0pt] (mu) at (0,7.3)
    {\includegraphics[width=.3\textwidth]{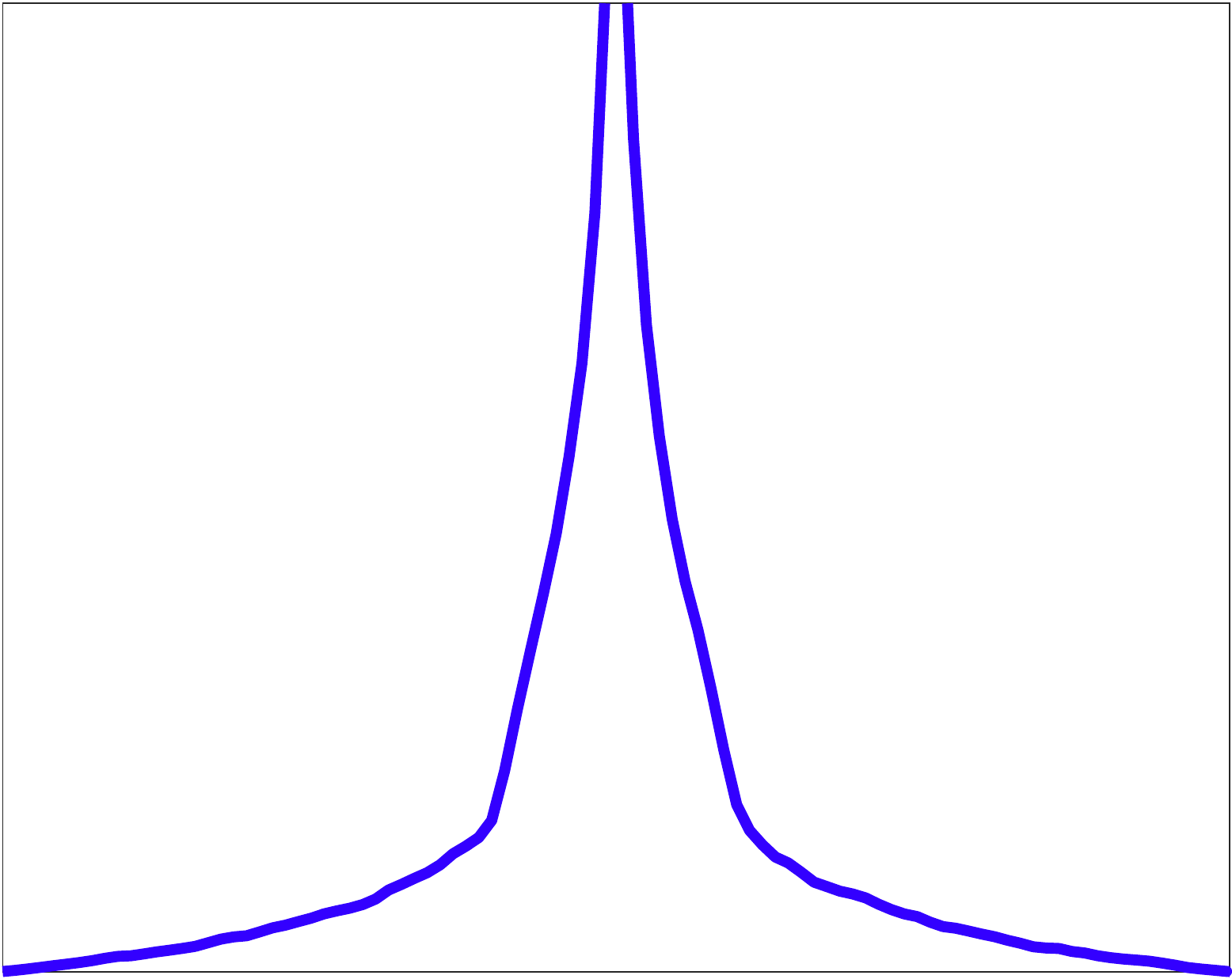}};
\node[inner sep=0pt] (mu_south) at (9,4.6){};
\node[inner sep=0pt] (mu_east) at (3,7){};
\node[inner sep=0pt] (OMT_cusp) at (9,-0.3)
    {\includegraphics[width=.4\textwidth]{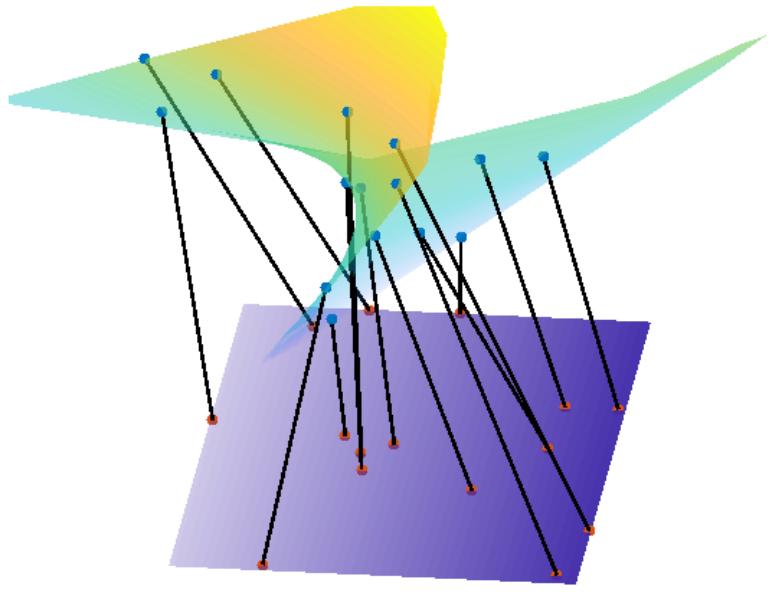}};
\node[inner sep=0pt] (cusp_2D_unfold) at (0.3,-0.5)
    {\includegraphics[width=.3\textwidth]{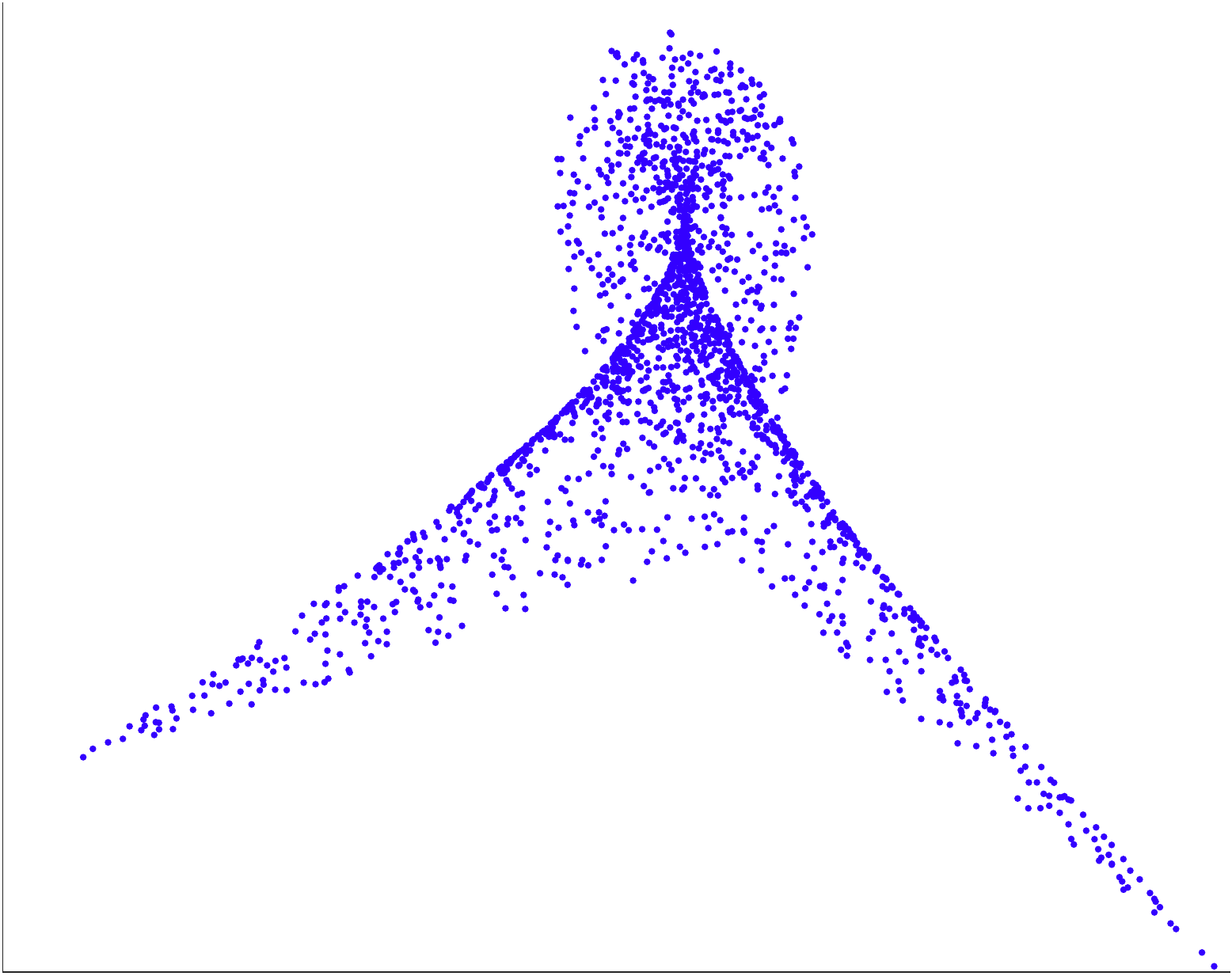}};
\node[inner sep=0pt] (unfold_north) at (0,1.5){};
\node[inner sep=0pt] (x) at (9,-2.8){$\systemState$};
\node[inner sep=0pt] (lambda) at (11.2,-1.5){$\parameterOne$};
\node[inner sep=0pt] (cusp1) at (11,5.5){$\parameterOne$};
\node[inner sep=0pt] (cusp2) at (8.8,5){$\parameterTwo$};
\node[inner sep=0pt] (cusp3) at (7,6.6){$\systemState$};
\node[inner sep=0pt] (cusp2) at (0,5.2){$\parameterTwo$};
\node[inner sep=0pt] (cusp3) at (-2.5,7.5){\rotatebox{90}{density}};
\node[inner sep=0pt] (cusp3) at (-2.3,-0.2){\rotatebox{90}{$[\parameterTwo]_{n+1}$}};
\node[inner sep=0pt] (cusp3) at (0.3,-2.7){$[\parameterTwo]_n$};
\draw[->,ultra thick,bend angle=0, bend left] (cusp.west) to (mu_east.east);
\draw[->, ultra thick,bend angle=0, bend left] (mu.south east)++(-0.25,-0.25) to node[midway,fill=white,text width=4cm]{parametrization in $([\parameterTwo]_{\delayA},[\parameterTwo]_{\delayB},[\parameterTwo]_\delayC)$} (OMT_cusp.north west);
\draw[->, ultra thick,bend angle=0, bend left] (mu.south)++(0,-0.5) to node[midway,fill=white,text width=3.5cm]{no embedding in $([\parameterTwo]_{\delayA},[\parameterTwo]_\delayB)$} (unfold_north.north);
\draw[<->,dashed,thick,bend angle=0, bend left] (cusp.south) to node[midway,fill=white]{diffeomorphic} (OMT_cusp.north);
\end{tikzpicture}
\caption{
\emph{Top right}: The same surface as in \Cref{fig:cusp_density_3D}. Again, we treat this surface as the intrinsic, unknown manifold. We observe consecutive values of $\parameterTwo$ starting from uniformly distributed points in $(\systemState,\parameterOne)$ and moving in the $\systemState$-direction.
\emph{Top left:}
The recorded histogram of $[\parameterTwo]_n$ values.
\emph{Bottom left:}
From two consecutive $\parameterTwo$ values, we obtain what clearly are overlaps of projecting a surface. This shows that this two-dimensional embedding is not enough to unfold the singularity (in contrast to the example we considered in \Cref{sec:hook}).
\emph{Bottom right:}
With three consecutive $\parameterTwo$ values we can
reconstruct the surface using a delay-embedding. A two-dimensional parametrization of the surface can be used to transport the points to the original parametrization (illustrated schematically through black lines from the surface to $(\systemState,\parameterOne)$). 
This provides a meaningful way to ``transport" the one-dimensional marginal density (top left) to a two-dimensional joint density (bottom right), a task that in principle is not well-defined.}
\label{fig:OMT_cusp}
\end{figure}
%%%%%%%%%%%%%%%%%%%%%%%%%%%%%%%%%%%%%%%%%%%%%%%%%%%%%%%%%%%%%%%%%%%%%%%%%%%%%%%%%%%%%%%%

%%%%%%%%%%%%%%%%%%%%%%%%%%%%%%%%%%%%%%%%%%%%%%%%%%%%%%%%%%%%%%%%%%%%%%%%%%%%%%%%%%%%%%%
\subsection{Converting one-dimensional marginal distributions to joint distributions}
\label{sec:marginal}
%%%%%%%%%%%%%%%%%%%%%%%%%%%%%%%%%%%%%%%%%%%%%%%%%%%%%%%%%%%%%%%%%%%%%%%%%%%%%%%%%%%%%

Given the marginal density in $\parameterTwo$ (\Cref{fig:OMT_cusp}, top left), we cannot transport it to the uniform density in $(\systemState,\parameterOne)$, as the respective dimensions disagree.
Using history from an observation process starting at randomly selected initial points in $(\systemState,\parameterOne)$ and taking two steps in the $\systemState$-direction, we could attempt (in the spirit of \Cref{sec:hook}) to ``unfold" the singularity at $([\parameterTwo]_n=0)$ using the delays $([\parameterTwo]_n,[\parameterTwo]_{\delayB})$. 
In contrast to the example of \Cref{sec:hook}, this clearly does not lead to a curve in $\RR^2$, but visibly to overlaps of a surface in the projection, see \Cref{fig:OMT_cusp} (bottom left). This indicates that the intrinsic dimension is not one, but two, and that we need a third coordinate to properly embed the underlying manifold. This visual assertion can be also quantified by applying dimension estimation algorithms to the ensemble of possible observation histories. 

In delay-coordinates $([\parameterTwo]_{\delayA},[\parameterTwo]_{\delayB},[\parameterTwo]_{\delayC})$ we are able to unfold the scalar $[\parameterTwo]_n$ observations (\Cref{fig:OMT_cusp} top left), to a two-dimensional surface (bottom right, top of the figure, showing the data set embedded into the three principal components of the observation delay-coordinates). The resulting surface is diffeomorphic to the original cusp surface. As in the previous section, a two-dimensional parametrization of this
reconstructed surface could be used to transport the points to the original $(\systemState,\parameterOne)$-parametrization (illustrated schematically through black lines from the surface to the $(\systemState,\parameterOne)$ square in \Cref{fig:OMT_cusp}, bottom right).

%%%%%%%%%%%%%%%%%%%%%%%%%%%%%%%%%%%%%%%%%%%%%%%%%%%
\subsection{Parametrizing the embedded surface}
%%%%%%%%%%%%%%%%%%%%%%%%%%%%%%%%%%%%%%%%%%%%%%%%%%%%
One could attempt to map the two-dimen\-sional surface reconstructed from ``unfolding" the one-dimensional marginal distribution of $\parameterTwo$ values in \Cref{fig:OMT_cusp} (as detailed in \Cref{sec:marginal}) to the two original coordinates $(\systemState,\parameterOne)$ through Wasserstein optimal transport.
Alternatively, we can also recover the original parametrization $(\systemState,\parameterOne)$ using a non-linear manifold learning technique, Diffusion Maps (DMAP)~\cite{coifman-2006} (see \Cref{sec:appendix dmaps}). To do this, we must employ the Mahalanobis distance \cite{coifman-2006,singer-2008}, see \Cref{fig:cusp_mahala}, which, however, requires more information than just densities or even ensembles of points on each two-dimensional surface: 
It needs estimates of the covariance matrices (the Jacobian of the mapping between the two manifolds) at each point. We discuss the difference between the Wasserstein and Mahalanobis frameworks for transport in \Cref{sec:mahala_wasser}.

%%%%%%%%%%%%%%%%%%%%%%%%%%%%%%%%%%%%%%%%%%%%%%%%%%%%%%%%%%%%%%%%%%%%%%%%%%%%%%%%%%%%%
% FIGURE CUSP MAHALANOBIS
%%%%%%%%%%%%%%%%%%%%%%%%%%%%%%%%%%%%%%%%%%%%%%%%%%%%%%%%%%%%%%%%%%%%%%%%%%%%%%%%%%%%%%
\begin{figure}[t!]
\centering
\begin{tikzpicture}
\node[inner sep=0pt] (manifold) at (0,0)
    {\includegraphics[scale=0.7]{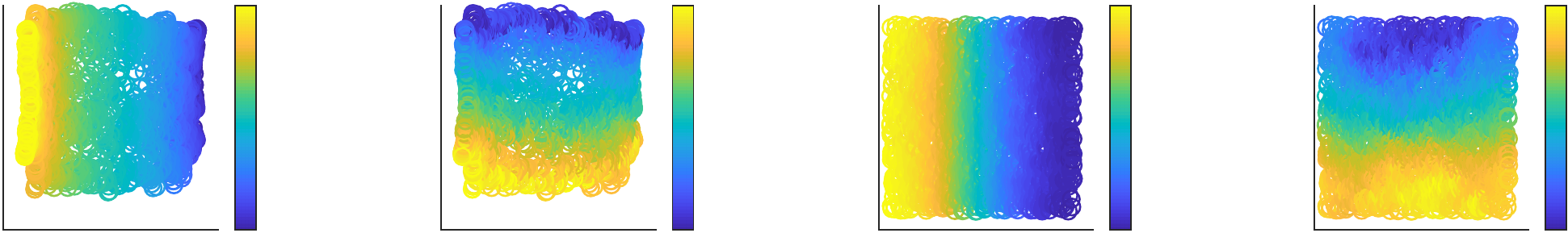}};
\node[inner sep=0pt] (phi10) at (-5.8,-1.3){$\phi_1^M$};
\node[inner sep=0pt] (phi20) at (-7.3,0){\rotatebox{90}{$\phi_3^M$}};
\node[inner sep=0pt] (phi21) at (-3.3,0){\rotatebox{90}{$\phi_3^M$}};
\node[inner sep=0pt] (phi11) at (-1.8,-1.3){$\phi_1^M$};
\node[inner sep=0pt] (x0) at (1.8,-1.3){$\systemState$};
\node[inner sep=0pt] (beta11) at (0.5,0){\rotatebox{90}{$\parameterOne$}};
\node[inner sep=0pt] (beta12) at (4.4,0){\rotatebox{90}{$\parameterOne$}};
\node[inner sep=0pt] (x1) at (5.8,-1.3){$\systemState$};
\node[inner sep=0pt] (phi20) at (-4.3,0){\rotatebox{90}{$\systemState$}};
\node[inner sep=0pt] (phi20) at (-0.5,0){\rotatebox{90}{$\parameterOne$}};
\node[inner sep=0pt] (phi20) at (7.2,0){\rotatebox{90}{$\phi_3^M$}};
\node[inner sep=0pt] (phi20) at (3.4,0){\rotatebox{90}{$\phi_1^M$}};
\end{tikzpicture}
\caption{Application of DMAP with Mahalanobis distance to the PCA parametrization of the unfolded surface of \Cref{fig:OMT_cusp} (bottom right). This recovers an embedding of the original uniformly distributed coordinates $(\systemState,\parameterOne)$; here the correspondence is clearly visible.
\emph{First two plots}: Mahalanobis-DMAP coordinates $(\phi^M_1,\phi^M_3)$ colored by the original, uniformly distributed coordinates $(\systemState,\parameterOne)$.
\emph{Second two plots}: Coordinates $(\systemState,\parameterOne)$ are colored by the Mahalanobis-DMAP coordinates $(\phi^M_1,\phi^M_3)$.}
\label{fig:cusp_mahala}
\end{figure}
%%%%%%%%%%%%%%%%%%%%%%%%%%%%%%%%%%%%%%%%%%%%%%%%%%%%%%%%%%%%%%%%%%%%%%%%%%%%%%%%%%%

%%%%%%%%%%%%%%%%%%%%%%%%%%%%%%%%%%%%%%%%%%%%%%%%%%%%%%%%%%%%%%%%%%%%%%%%%%%%%%%%%
\section{Recovering parametrizations up to isometries}\label{sec:mahala_wasser}
%%%%%%%%%%%%%%%%%%%%%%%%%%%%%%%%%%%%%%%%%%%%%%%%%%%%%%%%%%%%%%%%%%%%%%%%%%%%%%%%%%

%%%%%%%%%%%%%%%%%%%%%%%%%%%%%%%%%%%%%%%%%%%%%%%%%%%%%%%%%%%%%%%%%%%%%%%%%%%%%%%%%%%%%
% FIGURE CHANGE GEOMETRY/MEASURE
%%%%%%%%%%%%%%%%%%%%%%%%%%%%%%%%%%%%%%%%%%%%%%%%%%%%%%%%%%%%%%%%%%%%%%%%%%%%%%%%%%%%%%
\begin{figure}[t!]
\centering
\begin{tikzpicture}
\node[inner sep=0pt] (manifold) at (0,0)
    {\includegraphics[width=.3\textwidth]{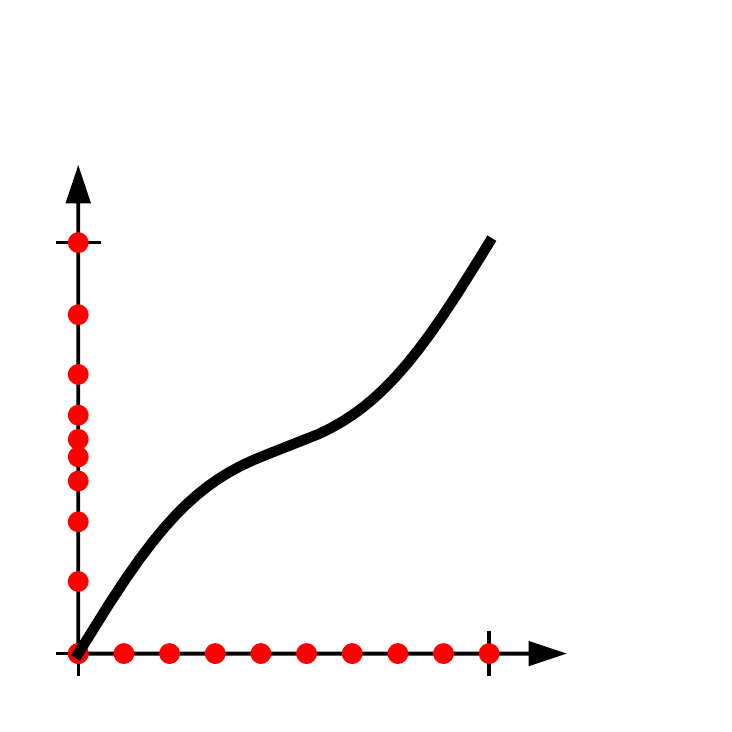}};
\node[inner sep=0pt] (cusp1) at (-0.5,-2.3){$(\M,g)$};
\node[inner sep=0pt] (cusp1) at (-2.3,-0.6){\rotatebox{90}{$(\M,g')$}};
\node[inner sep=0pt] (cusp1) at (-0.5,0){$S$};
\end{tikzpicture}
\caption{A manifold $(\M,g)$ (in this example $\M=[0,1]$ and $g$ is the Euclidean metric) is mapped to itself through an invertible function $S:\M\to\M$. This map induces a new metric $g'=S_*g$ on $\M$. The two axes indicate how measures, illustrated as red point distributions, are mapped by $S$: The uniform density of $\vol_g$ (the measure on $\M$ induced by the Lebesgue measure on $\RR$, horizontal axis) is mapped by $S$ to a uniform density w.r.t $\vol_{g'}$ on $\M$ (vertical axis).}
\label{fig:manifold_map_S}
\end{figure}
%%%%%%%%%%%%%%%%%%%%%%%%%%%%%%%%%%%%%%%%%%%%%%%%%%%%%%%%%%%%%%%%%%%%%%%%%%%%%%%%%%%%%%

In the previous sections, we illustrated the recovery of manifolds from (individually) non-invertible observations.
In this section, we discuss the conceptual similarities and differences between the reconstruction of geometry through Mahalanobis-Diffusion Maps and the construction of transport maps optimal in the Wasserstein sense. We simplify the presentation by assuming that the observations are now already embeddings of the manifold (i.e., invertible on their image).
\Cref{fig:manifold_map_S} illustrates the concept of maps changing metrics and measures.

Let $S$ be a smooth, invertible function, mapping a manifold $\M$ to itself. This map changes the metric $g$ into the metric $S_{\ast}g$. The density of points on $\M$ also changes through $S$, indicating that measures are also transformed.
This section discusses how the change in the metric and the resulting change in the measures relate to each other, and how the information about either metrics or measures can be used to recover different aspects of the map $S$.

Given two measures $\mu$ and $\nu$ on a Riemannian manifold $(\M,g)$, absolutely continuous with respect to its volume form $\vol_g$ (defined in \Cref{sec:optimal transport on riemannian manifolds}), we endeavor to reconstruct a fixed, but unknown, invertible, {\em measure-preserving} map $S:\M\to \M$ such that
$
    \nu(A)=\mu(S^{-1}(A)),\ A\subset \M,
$
which we write as $S_\sharp\mu=\nu$. As discussed in \Cref{sec:notation}, there may exist many such maps for fixed $\mu$ and $\nu$.
When trying to recover the original function $S$, we may choose an \emph{optimality} criterion to at least select a unique map---for example, we could use the map that is minimizing the Wasserstein cost. In this section, we consider measures that are induced by metrics, which can lead to a notion of optimality via \emph{metric preservation}.

By solving the Wasserstein optimal transport problem between $\mu$ and $\nu$ on $\M$, we can reconstruct the map $S$ only up to maps $U:\M\to \M$ with $U_\sharp\mu=\mu$. This is due to the \emph{polar factorization}~(\cite{brenier-1991,mccann-2001} and \Cref{thm:exist_OT_RM} in the Appendix):
Given a measure $\mu \ll \vol_g$, a Borel map $S: \M \to \M$ can be written uniquely ($\mu$-a.e) as $S=T\circ U$, where $T$ is the optimal transport between $\mu$ and $S_{\sharp}\mu$ and $U$ is measure-preserving w.r.t. $\mu$, i.e.\ $U_{\sharp}\mu = \mu$.

Instead of optimizing the Wasserstein cost and reconstructing $S$ up to measure-preserving maps $U$, we employ here {\em additional information} from our observation processes to obtain a map with far less ambiguity.
To this end, we use the construction of {\em metric-preserving} maps described in \cite{singer-2008,berry-2016}:
Given a metric $g$ and its push-forward metric $S_{\ast}g$ by a diffeomorphism $S: \M \to \M$, the map $S$ can be reconstructed 
up to a linear, orthogonal map. 
The reconstruction can even be done in a data-driven way, 
employing diffusion map (DMAP) embeddings \cite{coifman-2006} and a Mahalanobis distance~\cite{singer-2008}.
The reconstruction up to an orthogonal map is justified through the following argument: By choosing the metric $S_{\ast}g$ based on the metric $g$, $S$ is an isometry. 
Laplace-Beltrami operators of isometric manifolds have the same eigenvalues, and eigenfunctions associated to the same eigenvalue are related by an orthogonal map \cite{berry-2016,rosenberg-1997}.
Therefore, an isometry $S$ between the base manifolds $(\M,g)$ and $(\M,S_*g)$ turns into an orthogonal map in eigenfunction coordinates (Mahalanobis-DMAP coordinates) of the manifolds, which can be computed easily \cite{berry-2016}. Similar to \cite{berry-2016}, we summarize this in a commutative diagram (\Cref{fig:dmap_orthogonal}).

The metric $S_{\ast}g$ can be computed from $g$ by
$
    (S_{\ast}g)_y(\xi,\eta) = g_{S^{-1}(y)}(J(y)\xi,J(y)\eta),
$
where $\xi,\eta \in T_y\M$ and $J(y)$ denotes the Jacobian of $S^{-1}$ at $y\in \N$. If $\N$ is embedded in Euclidean space, to use the push-forward metric $S_{\ast}g$ instead of the induced Euclidean metric on $\N$ for DMAP embeddings, a special kernel
can be employed (see \eqref{eq:Mahala_kernel} in \Cref{sec:appendix dmaps}, and \cite{singer-2008}). The kernel requires estimations of the Jacobian matrices of $S$ at every point, which can be obtained by computing local covariance matrices as described in \cite{singer-2008}. The induced metric $S_{\ast}g$ computed via covariance matrices is also referred to as the \emph{Mahalanobis distance}, see \cite{dsilva-2016}.

Even if $S$ is not an isometry, but only measure-preserving between the manifolds $(\M,g)$ and $(\M,g')$, the map between their diffusion map embeddings is still measure-preserving (see \cref{fig:dmap_measure}): An embedding with $\ell$ eigenfunctions is an isometry between the original manifold $(\M,g)$ and $\Phi_g(\M) \subset \RR^\ell$ with the metric induced by the Euclidean metric, up to the truncation error caused by using a finite number of eigenfunctions \cite{portegies-2016}. Therefore, measures are also preserved up to this accuracy.

%%%%%%%%%%%%%%%%%%%%%%%%%%%%%%%%%%%%%%%%%%%%%%%%%%%%%%%%%%%%%%%%%%%%%%%%%%%%%%%%%%%%%%
% FIGURE COMMUTATIVE DIAGRAM GEOMETRY
%%%%%%%%%%%%%%%%%%%%%%%%%%%%%%%%%%%%%%%%%%%%%%%%%%%%%%%%%%%%%%%%%%%%%%%%%%%%%%%%%%%%%%%
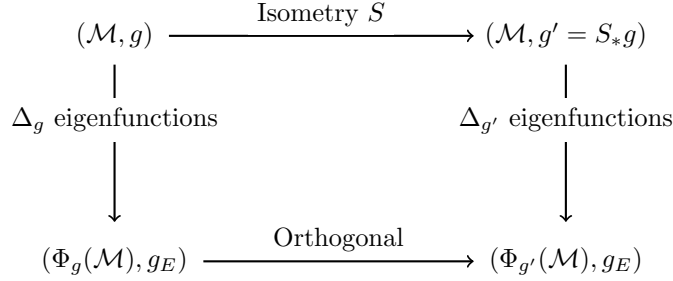
\begin{figure}[t!]
\centering
\begin{tikzpicture}
\node[inner sep=0pt] (X) at (0,3)
    {$(\M,g)$};
\node[inner sep=0pt] (Y) at (6,3)
    {$(\M,g'=S_{\ast}g)$};
\node[inner sep=0pt] (Y_aux) at (4.8,3)
    {};    
\node[inner sep=0pt] (PhiX) at (0,0)
    {$(\Phi_g(\M),g_E)$};
\node[inner sep=0pt] (PhiX_aux) at (0,0.5)
    {};
\node[inner sep=0pt] (PhiY) at (6,0)
    {$(\Phi_{g'}(\M),g_E)$};
\node[inner sep=0pt] (PhiY_aux1) at (6,0.5)
    {};
\node[inner sep=0pt] (PhiY_aux2) at (4.8,0)
    {};
\draw[->,thick,bend angle=0, bend left] (X.east)++(0.2,0) to node[above,midway,fill=white]{Isometry $S$}
(Y_aux.west);
\draw[->,thick] (X.south)++(0,-0.2) to node[above,midway,fill=white]{$\Delta_{g}$ eigenfunctions}  (PhiX_aux.north);
\draw[->,thick,bend angle=0, bend right] (PhiX.east)++(0.2,0) to  node[above,midway]{Orthogonal} (PhiY_aux2.west);
\draw[->,thick,bend angle=0, bend right] (Y.south)++(0,-0.2) to  node[above,midway,fill=white]{$\Delta_{g'}$ eigenfunctions} (PhiY_aux1.north);
\end{tikzpicture}
\caption{A commutative diagram showing how isometries are represented as orthogonal mappings in Laplacian eigenfunction coordinates, see \cite{berry-2016}. Here $\Delta_g$ denotes the Laplacian with respect to the metric $g$, $\Phi_g$ is the embedding via Laplacian eigenfunctions (cut-off at $\ell$), and $g_E$ denotes the metric on the embedded manifold $\Phi_g({\M})$ induced by the Euclidean metric in $\RR^{\ell}$. Laplacian eigenvalues are the same for isometric manifolds, while eigenfunctions with respect to the same eigenvalue are related by an orthogonal map \cite{rosenberg-1997}. Therefore, the isometry $S$ induces an orthogonal map in Laplacian eigefunction coordinates.}
\label{fig:dmap_orthogonal}
\end{figure}
%%%%%%%%%%%%%%%%%%%%%%%%%%%%%%%%%%%%%%%%%%%%%%%%%%%%%%%%%%%%%%%%%%%%%%%%%%%%%%%%%%%%%%%%%%

%%%%%%%%%%%%%%%%%%%%%%%%%%%%%%%%%%%%%%%%%%%%%%%%%%%%%%%%%%%%%%%%%%%%%%%%%%%%%%%%%%%%%%
% FIGURE COMMUTATIVE DIAGRAM MEASURE
%%%%%%%%%%%%%%%%%%%%%%%%%%%%%%%%%%%%%%%%%%%%%%%%%%%%%%%%%%%%%%%%%%%%%%%%%%%%%%%%%%%%%%%
\begin{figure}[t!]
\centering
\begin{tikzpicture}
\node[inner sep=0pt] (X) at (0,3)
    {$(\M,\vol_g)$};
\node[inner sep=0pt] (Y) at (6.5,3)
    {$(\M,S_{\sharp}\vol_g=\vol_{g'})$};
\node[inner sep=0pt] (Y_aux2) at (6,3)
    {};
\node[inner sep=0pt] (Y_aux) at (4.8,3)
    {};    
\node[inner sep=0pt] (PhiX) at (0,0)
    {$(\Phi_g(\M),\lambda)$};
\node[inner sep=0pt] (PhiX_aux) at (0,0.5)
    {};
\node[inner sep=0pt] (PhiY) at (6,0)
    {$(\Phi_{g'}(\M),\lambda)$};
\node[inner sep=0pt] (PhiY_aux1) at (6,0.5)
    {};
\node[inner sep=0pt] (PhiY_aux2) at (5,0)
    {};
\draw[->,thick,bend angle=0, bend left] (X.east)++(0.2,0) to node[above,midway,fill=white]{measure-preserving $S$}
(Y_aux.west);
\draw[->,thick] (X.south)++(0,-0.2) to node[above,midway,fill=white]{$\Delta_{g}$ eigenfunctions}  (PhiX_aux.north);
\draw[->,thick,bend angle=0, bend right] (PhiX.east)++(0.2,0) to  node[above,midway]{measure-preserving} (PhiY_aux2.west);
\draw[->,thick,bend angle=0, bend right] (Y_aux2.south)++(0,-0.4) to  node[above,midway,fill=white]{$\Delta_{g'}$ eigenfunctions} (PhiY_aux1.north);
\end{tikzpicture}
\caption{Commutative diagram showing how measure-preserving maps induce measure-preserving maps in Laplacian eigenfunction coordinates. Here $\Delta_g$ denotes the Laplacian with respect to the metric $g$, $\Phi_g$ is the embedding via Laplacian eigenfunctions (cut-off at $\ell$), and $\lambda$ denotes the volume form on the embedded manifold $\Phi_g(\M)$ induced by the Lebesgue measure on $\RR^{\ell}$. The metric $g'$ is a metric on $\M$ such that $\vol_{g'}=S_{\sharp}\vol_g$, but $g' \neq S_{\ast}g$ in general (If $g' = S_{\ast}g$ we are in the situation of \Cref{fig:dmap_orthogonal}).}
\label{fig:dmap_measure}
\end{figure}
%%%%%%%%%%%%%%%%%%%%%%%%%%%%%%%%%%%%%%%%%%%%%%%%%%%%%%%%%%%%%%%%%%%%%%%%%%%%%%%%%%%%%%%

%%%%%%%%%%%%%%%%%%%%%%%%%%%%%%%%%%%%%%%%%%%%%%%%%%%%%%%%%%%%%%%%%%%%%%%%%%%%%%%%%%%%%
% FIGURE MUSHROOM
%%%%%%%%%%%%%%%%%%%%%%%%%%%%%%%%%%%%%%%%%%%%%%%%%%%%%%%%%%%%%%%%%%%%%%%%%%%%%%%%%%%%%%
\begin{figure}%[t]
\centering
\begin{tikzpicture}
\node[inner sep=0pt] (X) at (0,8)
    {${\color{blue}x_i}, {\color{red}\tilde{x}_i=T^{-1}(y_i)}$};
\node[inner sep=0pt] (BVG1) at (0,6)
    {\includegraphics[width=.25\textwidth]{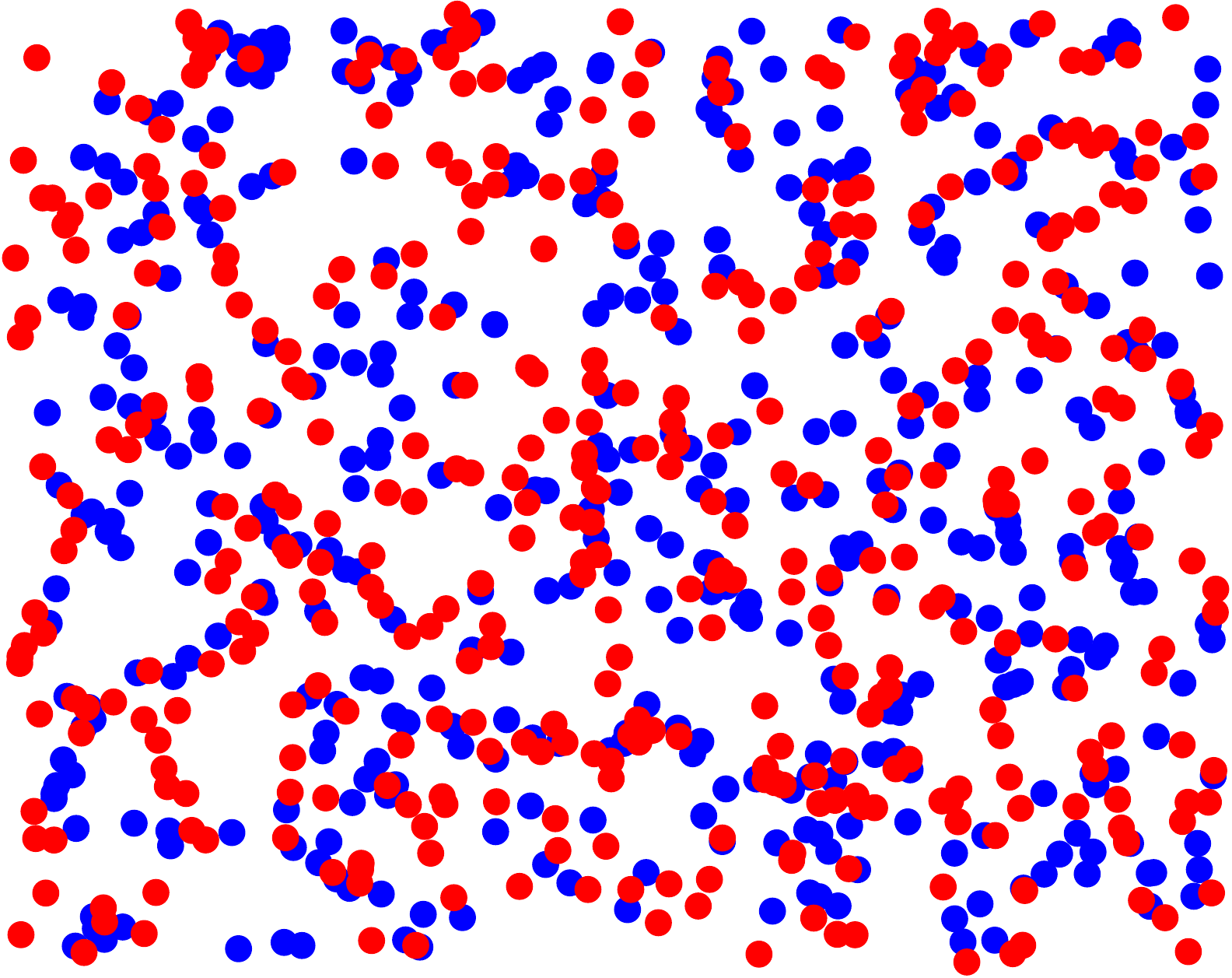}};
\node[inner sep=0pt] (BVG1_aux) at (2.4,6){};
\node[inner sep=0pt] (Y) at (9,8)
    {${\color{blue} y_i=S(x_i)}$};
\node[inner sep=0pt] (BVG2) at (9,6)
    {\includegraphics[width=.25\textwidth]{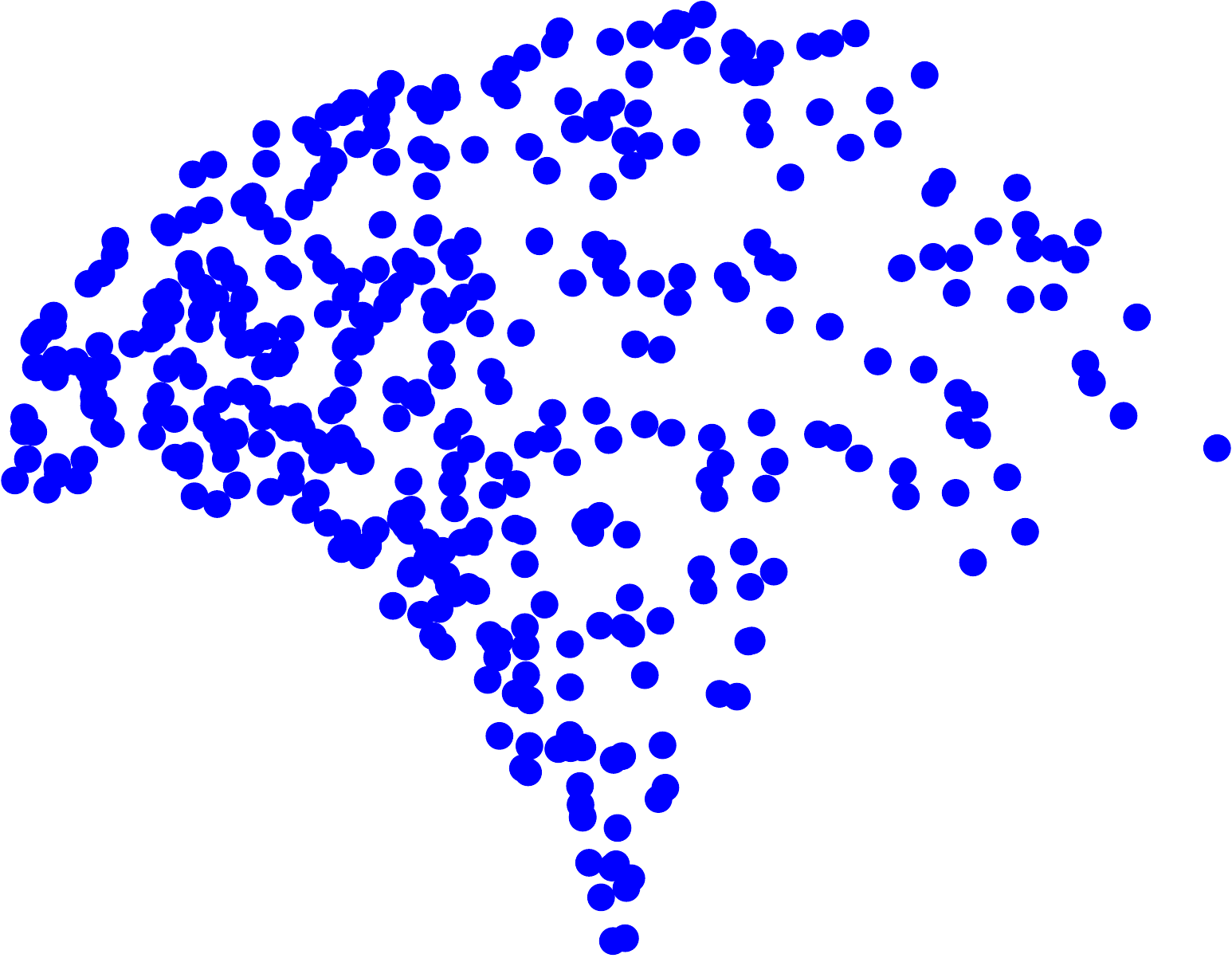}};
\node[inner sep=0pt] (PHI) at (0,2)
    {${\color{blue}\phi(x_i)}, {\color{red} \psi(\tilde{x}_i)}$};
\node[inner sep=0pt] (DMAP_orig) at (0,0)
    {\includegraphics[width=.25\textwidth]{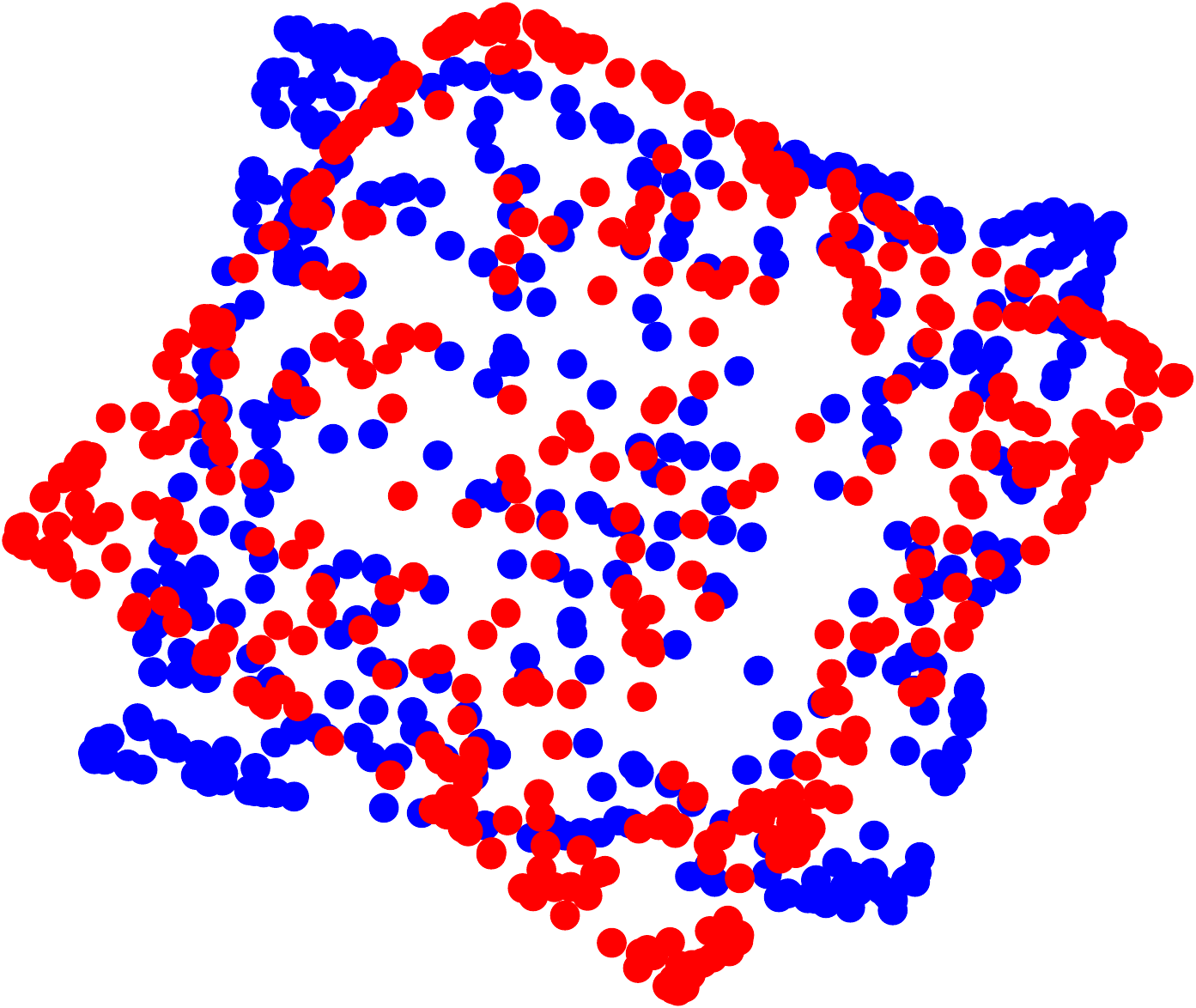}};
\node[inner sep=0pt] (DMAP_orig_aux) at (2.4,-1){};
\node[inner sep=0pt] (PHIM) at (8.3,2)
    {\color{blue} $\phi^M(y_i)$};
\node[inner sep=0pt] (DMAP_rb) at (9,0)
    {\includegraphics[width=.2\textwidth]{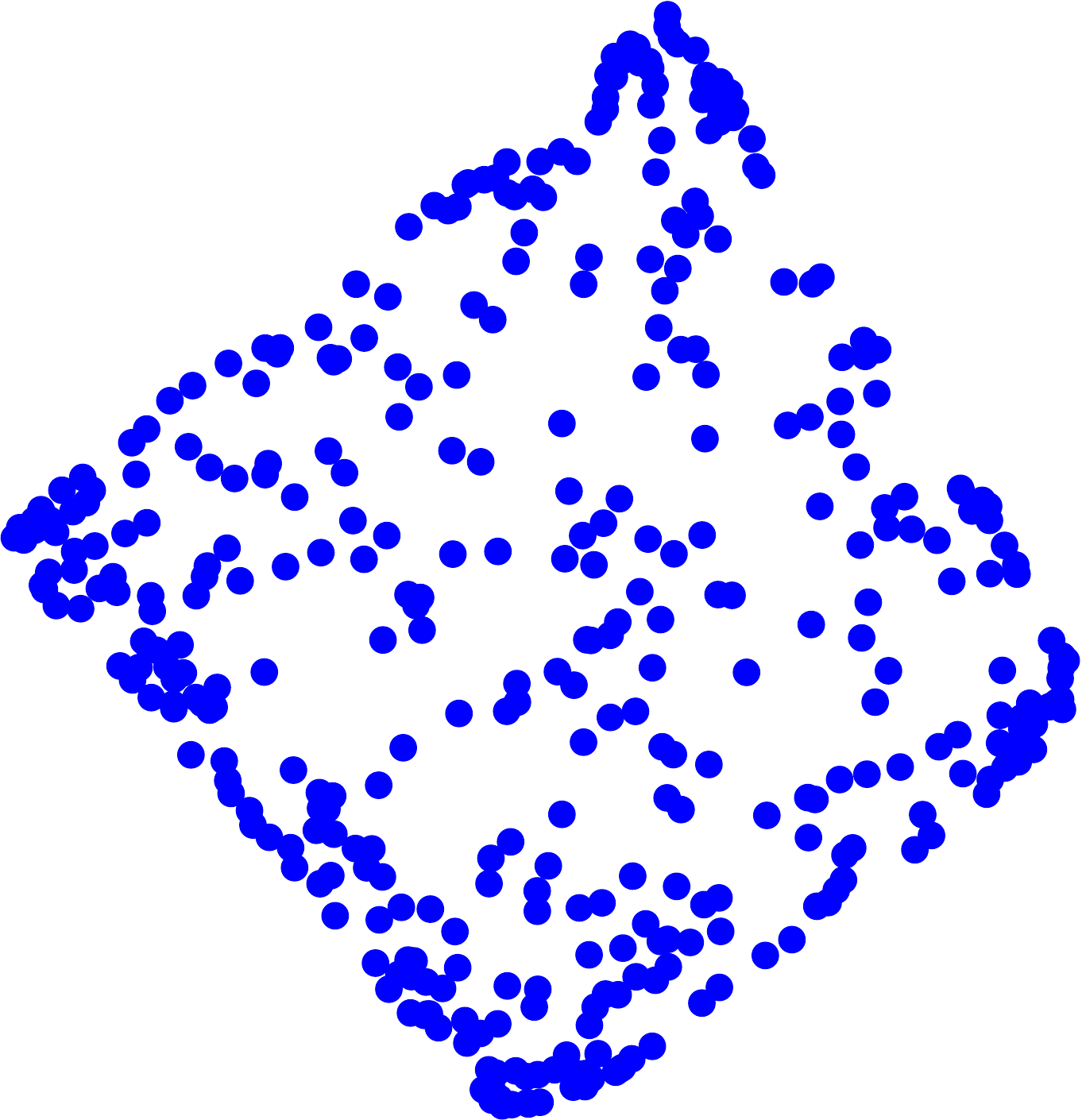}};
\node[inner sep=0pt] (DMAP_rb_aux) at (7,-1){};
\node[inner sep=0pt] (DMAP_rb_aux2) at (7,0){};
\node[inner sep=0pt] (DMAP) at (0,3.3)
    {DMAP (Euclidean)};
\draw[->,thick,bend angle=20, bend left,blue] (BVG1.north east)++(0.5,0) to node[above,midway] {Mushroom map $S$}  (BVG2.north west);
\draw[->,thick,bend angle=0, bend left,red] (BVG2.west)++(-0.5,0) to node[above,midway,fill=white]{inv. Wasserstein $T^{-1}$}
(BVG1_aux.east);
\draw[->,thick,blue] (BVG1.south west)++(0,-0.5) to  (DMAP_orig.north west);
\draw[->,thick,red] (BVG1.south east)++(0,-0.5) to (DMAP_orig.north east);
\draw[->,thick,blue] (BVG2.south)++(0,-0.5) to node[below,pos=0.35, fill=white]{{\color{black}DMAP (Mahalanobis)}} (DMAP_rb.north);
\draw[->,thick,bend angle=20, bend left,blue] (DMAP_orig.east)++(0.5,0) to  node[above,midway]{orthogonal} (DMAP_rb_aux2.west);
\draw[->,thick,bend angle=0, bend right,red] (DMAP_orig_aux.south east) to  node[above,midway]{measure-preserving} (DMAP_rb_aux.south west);
\end{tikzpicture}
\caption{Uniformly sampled points $x_i$ on the unit square (upper left, blue), are mapped with the mushroom map $S$ \eqref{eq:mushroom_map} to $y_i$ (upper right). DMAP with Mahalanobis distance (coming from the Jacobian of $S^{-1}$) is applied to $y_i$, to obtain an embedding $\phi^M(y_i)$ (lower right). The points $y_i$ are also mapped back to the unit square ($\tilde{x}_i$, upper left, red) with the Wasserstein optimal transport (computed from the mushroom-distribution to the uniform distribution on the unit square). DMAPS with Euclidean distance are applied to both the red and the blue square (upper right), resulting in rotated squares of the same color (lower left). The distribution of points in all of the squares in the lower part of the figure are the same (there exists a measure-preserving map between the squares), but the blue squares can also be mapped by an orthogonal map due to the Mahalanobis construction.}
\label{fig:Mahala_Wasser}
\end{figure}
%%%%%%%%%%%%%%%%%%%%%%%%%%%%%%%%%%%%%%%%%%%%%%%%%%%%%%%%%%%%%%%%%%%%%%%%%%%%%%%%%%%%%

%%%%%%%%%%%%%%%%%%%%%%%%%%%%%%%%%%%%%%%%%%%%%%%%%%%%%%%%%%%%%%%%%%%%%%%%%%%%%%%%%%%%
% FIGURE MUSHROOM FUNCTIONAL RELATION
%%%%%%%%%%%%%%%%%%%%%%%%%%%%%%%%%%%%%%%%%%%%%%%%%%%%%%%%%%%%%%%%%%%%%%%%%%%%%%%%%%%%%
\begin{figure}[htp]
\centering
\begin{tikzpicture}
\node[inner sep=0pt] (manifold) at (0,0)
    {\includegraphics[scale=0.42]{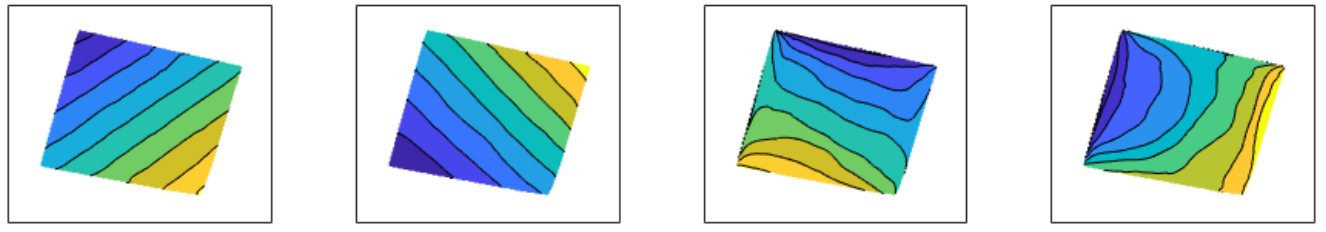}};
\node[inner sep=0pt] (phi10) at (-6,-1.5){$\phi_1$};
\node[inner sep=0pt] (phi20) at (-7.6,0){\rotatebox{90}{$\phi_2$}};
\node[inner sep=0pt] (phi12) at (-2,-1.5){$\phi_1$};
\node[inner sep=0pt] (phi23) at (-3.7,0){\rotatebox{90}{$\phi_2$}};
\node[inner sep=0pt] (phi13) at (2,-1.5){$\phi_1$};
\node[inner sep=0pt] (phi14) at (6,-1.5){$\phi_1$};
\node[inner sep=0pt] (phi20) at (0.1,0){\rotatebox{90}{$\phi_2$}};
\node[inner sep=0pt] (phi22) at (4.1,0){\rotatebox{90}{$\phi_2$}};
\node[inner sep=0pt] (cap1) at (-5.9,1.6){$(\phi_1,\phi_2)$ to $\phi_1^M$};
\node[inner sep=0pt] (cap2) at (-1.9,1.6){$(\phi_1,\phi_2)$ to $\phi_2^M$};
\node[inner sep=0pt] (cap3) at (1.9,1.6){$(\phi_1,\phi_2)$ to $\psi_1$};
\node[inner sep=0pt] (cap4) at (5.9,1.6){$(\phi_1,\phi_2)$ to $\psi_2$};
\end{tikzpicture}
\caption{Notation as in \Cref{fig:Mahala_Wasser}. We compare the functions that relate the diffusion maps embeddings. These are shown as contour plots of the respective coordinates.
\emph{First two plots:} $(\phi_1,\phi_2)$ is mapped to $(\phi^M_1,\phi^M_2)$. This function corresponds to mapping the left blue square to the right blue square in the second row of \Cref{fig:Mahala_Wasser}. It is an orthogonal map due to the isometry induced by the Mahalanobis-kernel.
\emph{Second two plots:} $(\phi_1,\phi_2)$ is mapped to $(\psi_1,\psi_2)$. This function corresponds to mapping the left blue square to the left red square in the second row of \Cref{fig:Mahala_Wasser}. This function is not an orthogonal transformation, but it is measure-preserving.}
\label{fig:func_DMAP_MAH}
\end{figure}
%%%%%%%%%%%%%%%%%%%%%%%%%%%%%%%%%%%%%%%%%%%%%%%%%%%%%%%%%%%%%%%%%%%%%%%%%%%%%%%%%%%%%%%

Note that on $\RR$, a measure-preserving map is automatically metric-preserving as well. In higher dimensions, however, there is a difference between these concepts, which is also apparent by comparing the Wasserstein and Mahalanobis frameworks. To show the difference, we consider an example from \cite{singer-2008}, which is used to explain the Mahalanobis concept. 

Consider the map 
\begin{equation}\label{eq:mushroom_map}
    S(x_1,x_2) = \left( x_1 + x_2^3, x_2 - x_1^3\right),
\end{equation}
which maps the unit square $[0,1]^2 \subset \RR^2$ to a mushroom-like subset of $\RR^2$ (therefore, as in~\cite{singer-2008}, we call $S$ the \emph{mushroom-map}), see the upper part of \Cref{fig:Mahala_Wasser}. In this figure, the blue points $x_i$ arise from a uniform sampling of the unit square, and the points $y_i$ are their images under the mushroom-map $S$ (generating ``observations'' of the points $x_i$).
If we apply DMAP with the Mahalanobis-kernel \eqref{eq:Mahala_kernel} (induced by the mushroom-map $S$, i.e.\ with Jacobians of $S^{-1}$) to the points $y_i$, we obtain an embedding denoted by $\phi^M(y_i)$: We are able to recover a (rotated) square (\Cref{fig:Mahala_Wasser} lower right). A similar square is obtained by applying standard DMAPS (i.e.\ with Euclidean distances) to the sampling points $x_i$, denoted by $\phi(x_i)$, see \Cref{fig:Mahala_Wasser} lower left.

We now map the distribution of points $y_i$ back to the uniform distribution on the square $[0,1]^2$ with Wasserstein optimal transport, denoted by $T$. Applying the mapping $T^{-1}$ pointwise to the $y_i$, we obtain the points $\tilde{x}_i$ (\Cref{fig:Mahala_Wasser} upper left, red) on the unit square $[0,1]^2$.
DMAP with Euclidean distance applied to $\tilde{x}_i$ results in the the red square described by the points $\psi(\tilde{x}_i)$ (\Cref{fig:Mahala_Wasser} lower left, red).

All three squares (blue and red in lower left and blue in lower right in \Cref{fig:Mahala_Wasser}) in DMAP space have the same distribution of points, i.e.\ can be transformed into each other with a measure preserving map (\Cref{fig:dmap_measure}). The two blue squares are even related point-wise by an orthogonal map due to the isometry induced by the Mahalanobis-kernel (\Cref{fig:dmap_orthogonal}).

We show the underlying mappings, rather than just the distributions of points, in \Cref{fig:func_DMAP_MAH}. The first row of \Cref{fig:func_DMAP_MAH} is the orthogonal mapping from $(\phi_1,\phi_2)$ to $(\phi_1^M,\phi_2^M)$ presented in the form of contour plots. The second row shows the map from $(\phi_1,\phi_2)$ to $(\psi_1,\psi_2)$ in the same manner. This map is measure-preserving, but not orthogonal.

These plots show that the Wasserstein and Mahalanobis concepts produce the same result with respect to measures, but not with respect to point-wise mappings. Wasserstein creates a point mapping from distributions; Mahalanobis creates a point mapping (and thus a way to transport distributions) via local covariances.

%%%%%%%%%%%%%%%%%%%%%%%%%%%%%%%%%%%%%%%%%%%
\section{Conclusion}
%%%%%%%%%%%%%%%%%%%%%%%%%%%%%%%%%%%%%%%%%%
In this paper, we studied densities that arose from observations of an unknown  manifold; we focused on the case where the histograms of the observed quantities suggest the existence of singularities in the densities. Attempting to transport these histogram observations with Wasserstein optimal transport may not recover the intrinsic structure of the manifold. By assuming access to {\em additional information from an observation process}, we can employ embedding theorems to construct meaningful realizations of the underlying manifold. If enough additional information is available, through this construction, the density of points on the recovered manifold is no longer singular. 

Even in case the direct observations do provide enough information to embed the manifold, Wasserstein optimal transport may not recover the ``correct" functional relations (a fact embodied in the polar factorization theorem). To obtain such a functional relation with less ambiguity we can employ metric-preserving maps, for example constructed through diffusion maps with the Mahalanobis-metric. This approach again relies on additional process observation data. 
The reconstruction of useful pointwise maps (as opposed to mappings of distributions) and manifolds can be especially interesting and useful in contexts such as domain adaptation.

%%%%%%%%%%%%%%%%%%%%%%%%%%%%%%%%%%%%%%%%%%%%%%%%%%%%%%%%%%%%%%%%%%%%%%%%%%
\section*{Acknowledgements}
%%%%%%%%%%%%%%%%%%%%%%%%%%%%%%%%%%%%%%%%%%%%%%%%%%%%%%%%%%%%%%%%%%%%%%%%%
This work was partially supported by DARPA (Lagrange Program, Drs. F. Fahroo and C. Lewis) and by the ARO through a MURI (Drs. S. Stanton and M. Munson).

%%%%%%%%%%%%%%%%%%%%%%%%%%%%%%%%%%%%%%%%%%%%%%%%%%%
\section*{Appendix}
%%%%%%%%%%%%%%%%%%%%%%%%%%%%%%%%%%%%%%%%%%%%%%%%%%
\appendix
%%%%%%%%%%%%%%%%%%%%%%%%

%%%%%%%%%%%%%%%%%%%%%%%%%%%%%%%%%%%%%%%%%%%%%%%%%%%%%%%%%%%%%%%%%%%%%%%%%%%%%
\section{Construction of a $C^1$ transport without additional information}\label{sec:C1_transport}
%%%%%%%%%%%%%%%%%%%%%%%%%%%%%%%%%%%%%%%%%%%%%%%%%%%%%%%%%%%%%%%%%%%%%%%%%%%%

%%%%%%%%%%%%%%%%%%%%%%%%%%%%%%%%%%%%%%%%%%%%%%%%%%%%%%%%%%%%%%%%%%%%%%%%%%%%%%%%%%%%%%%%%%
% FIGURE CONSTRUCT C1 TRANSPORT
%%%%%%%%%%%%%%%%%%%%%%%%%%%%%%%%%%%%%%%%%%%%%%%%%%%%%%%%%%%%%%%%%%%%%%%%%%%%%%%%%%%%%%%%%%%
\begin{figure}
\centering
\begin{tikzpicture}  
\node[inner sep=0pt] (hook1) at (0,0)
    {\includegraphics[width=.7\textwidth]{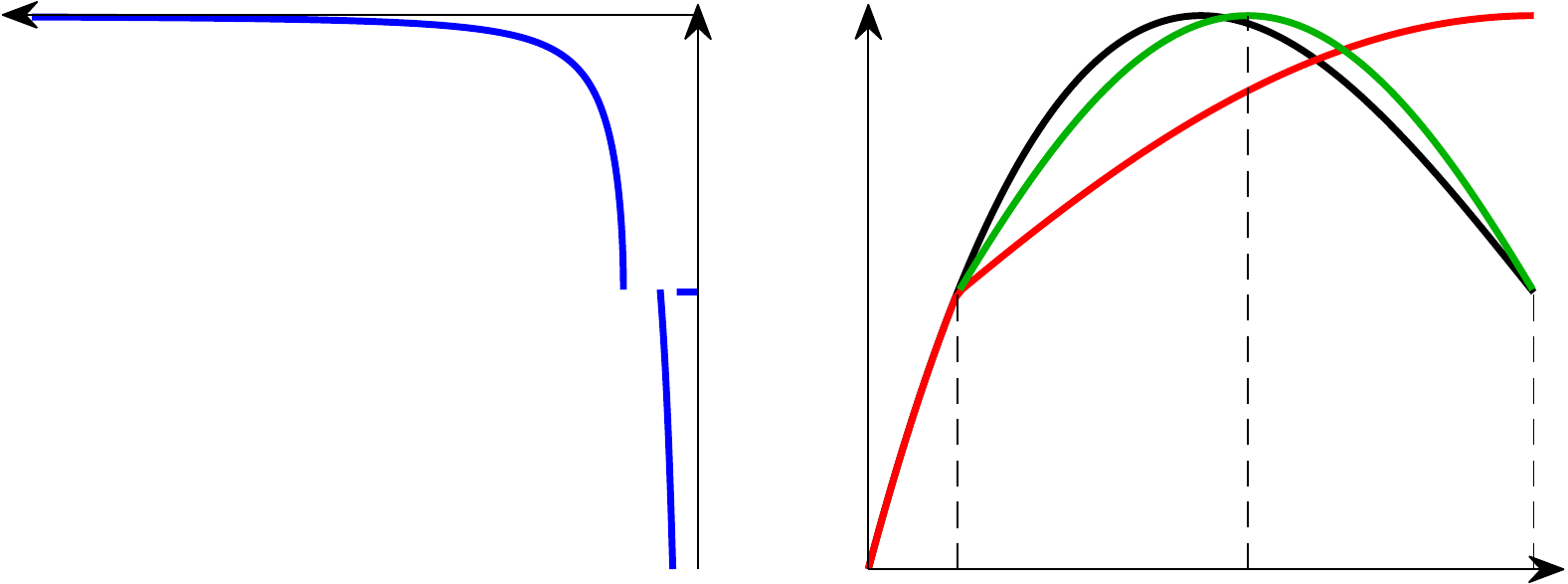}};
% x-axis labels    
\node[inner sep=0pt] (alpha) at (0.7,-2.2){$\alpha$};
\node[inner sep=0pt] (beta) at (1.2,-2.2){$\beta$};
\node[inner sep=0pt] (gamma) at (3.2,-2.2){$\gamma$};
\node[inner sep=0pt] (delta) at (5.1,-2.2){$\delta$};
\node[inner sep=0pt] (F2) at (4.6,2.1){${\color{red}F_{\rho_2}^{-1}}$};
\node[inner sep=0pt] (F1) at (0.9,0){${\color{red}F_{\rho_1}^{-1}}$};
\node[inner sep=0pt] (T1) at (1,-1.3){${\color{mygreen}T_1}$};
\node[inner sep=0pt] (T2) at (1.4,1.4){${\color{mygreen}T_2}$};
\node[inner sep=0pt] (T3) at (4.8,1.1){${\color{mygreen}T_3}$};
% y-axis labels
\node[inner sep=0pt] (a) at (-0.4,-1.7){$a$};
\node[inner sep=0pt] (b) at (-0.4,0){$b$};
\node[inner sep=0pt] (c) at (-0.4,1.7){$c$};
\node[inner sep=0pt] (rho1) at (-1.4,-0.9){\rotatebox{90}{{\color{blue}$\rho_1$}}};
\node[inner sep=0pt] (rho2) at (-2.4,0.7){\rotatebox{90}{{\color{blue}$\rho_2$}}};
\node[inner sep=0pt] (rho2) at (-2.8,2.2){$f_{\nu}(y)$};
\end{tikzpicture}
\caption{Continuation of the example discussed in \Cref{fig:intro_hook}.
We construct a $C^1$ transport (shown in green) that pushes the uniform density $f_{\mu}$ on $[\alpha,\delta]$ (not shown here) to the density $f_{\nu}$ on $[a,c]$. The density $f_{\nu}$ has been obtained by pushing the uniform density by the transport $y=T(x)= -2(1-x)^3 + 1.5(1-x) + 0.5$ (black curve; we assume not to know the underlying transport).
\emph{Top left}: Discontinuous density $f_{\nu}$ consists of two continuous parts, $\rho_1$ and $\rho_2$.
\emph{Top right}: The black function is the underlying transport that pushes $f_{\mu}$ to $f_{\nu}$; we assume not to know it. The red transport is the Wasserstein optimal transport that pushes $f_{\mu}$ to $f_{\nu}$. It is computed from the cdfs $F_{\rho_1}$ and $F_{\rho_2}$, and visibly not differentiable at $\beta$. We construct a $C^1$ transport consisting of three parts $T_1,T_2$ and $T_3$ (with choice of $\gamma$ as in \eqref{eq:correct_gamma}), as indicated by the green function. The explicit construction is derived in \Cref{sec:C1_transport}. Note that $T_1$ is the same as $F_{\rho_1}^{-1}$ in this construction.}
\label{fig:hook_C1_construction}
\end{figure}
%%%%%%%%%%%%%%%%%%%%%%%%%%%%%%%%%%%%%%%%%%%%%%%%%%%%%%%%%%%%%%%%%%%%%%%%%%%%%%%%%%%%%%%%

Starting only with two distributions and no additional information (e.g.\ histories), it is difficult to meaningfully reconstruct the underlying transport map. For the example of \Cref{fig:intro_hook}, we suggest an easy construction which at least produces a $C^1$ transport (in contrast to the Wasserstein transport, which is only continuous). 

The aim is to find a $C^1$ transport that pushes the uniform density on $[\alpha,\delta]$ (x-axis) to the discontinuous density $f_{\nu}$ on $[a,c]$ (y-axis), see \Cref{fig:hook_C1_construction}.
For our construction we assume that the transport we are looking for consists of three parts $T_1,T_2,T_3$, as indicated in \Cref{fig:hook_C1_construction}. Even with this restriction there exist many solutions. 

To find $T_1,T_2,T_3$, we have to solve the following problem:
\begin{equation}\label{eq:transport_rho1}
\left|\frac{d}{dy}T_1^{-1}(y)\right|=\rho_1(y), \qquad
\left|\frac{d}{dy}T_2^{-1}(y)\right|+\left|\frac{d}{dy}T_3^{-1}(y)\right|=\rho_2(y)%, \quad y \in [b,c].
\end{equation}
under the constraints
\begin{equation}\label{eq:constraints}
    T_1(\beta)=T_2(\beta), \quad
    T_2(\gamma)=T_3(\gamma), \quad
    T_2'(\gamma)=T_3'(\gamma)=0.
\end{equation}
Here we use notation as in \Cref{fig:hook_C1_construction}; note that we denote the two parts of the density $f_{\nu}$ by $\rho_1$ and $\rho_2$, respectively.
The starting point $\alpha$ of our transport can be chosen freely, note however that $\delta = \alpha +1$, if we assume that the density $f_{\nu}$ integrates to $1$. 

The first problem in \eqref{eq:transport_rho1} has a unique monotonically increasing solution (given by the inverse of the cdf $F_{\rho_1}$ of $\rho_1$, compare \eqref{eq:transport_1D}).
This determines $T_1$, and also $\beta = \alpha + F_{\rho_1}(b)$.

The following choices of $T_2$ and $T_3$ (depending on the choice of the location $\gamma$ of the maximum) solve the second problem in \eqref{eq:transport_rho1} under the constraints \eqref{eq:constraints}:
\begin{equation*}
    T_2(x) = F_{\rho_2}^{-1}\left(\frac{F_{\rho_2}(c)(x-\beta)}{\gamma - \beta}\right), \quad
    T_3(x) = F_{\rho_2}^{-1}\left(\frac{F_{\rho_2}(c)(x-\beta-F_{\rho_2}(c))}{\gamma - \beta-F_{\rho_2}(c)}\right).
\end{equation*}
In this construction $T_2$ is monotonically increasing, and $T_3$ is decreasing.

The transport consisting of the three parts $T_1,T_2,T_3$ is continuous, but only the choice 
\begin{equation}\label{eq:correct_gamma}
    \gamma = F_{\rho_2}(c)\frac{\rho_1(b)}{\rho_2(b)}+\beta
\end{equation}
gives rise to a $C^1$ transport, i.e.\ satisfies $T_1'(\beta) = T_2'(\beta)$.
Note that if the original density does not have a jump discontinuity at $b$, i.e.\ $\rho_1(b)=\rho_2(b)$, then $\gamma = \delta$, i.e.\ the maximum is at the end of the interval, and our construction gives rise to the Wasserstein optimal transport (the inverse cdf as in \eqref{eq:transport_1D}). Note that only $T_1$ and $T_2$ are needed then---$T_3$ is not well-defined.

In general, our construction does not reconstruct the original transport (the green function versus the black function in \Cref{fig:hook_C1_construction}). Nevertheless it gives rise to a $C^1$ transport that pushes $f_{\mu}$ to $f_{\nu}$, in contrast to the Wasserstein transport, which is only continuous in this example.

Note that we fit the first derivatives at $\beta$ and $\gamma$ to obtain a $C^1$ transport; in general the constructed transport is not smoother than that. The second derivatives of $T_2$ and $T_3$, for example, are given by
\begin{equation*}
    T_2''(x) = - \left(\frac{F_{\rho_2}(c)}{\gamma-\beta}\right)^2 \frac{\rho_2'(T_2(x))}{\rho_2(T_2(x))^3}, \quad
    T_3''(x) = - \left(\frac{F_{\rho_2}(c)}{\gamma-\beta-F_{\rho_2}(c)}\right)^2 \frac{\rho_2'(T_2(x))}{\rho_2(T_2(x))^3},
\end{equation*}
which, in general give different values at $\gamma$. The second derivatives might be unbounded at $\gamma$, in case $\rho_2'$ grows faster than $\rho_2^3$.

In our specific example of $f_{\nu}$ (pushed from the uniform density on $[0,1]$ by the cubic polynomial $y=T(x)= -2(1-x)^3 + 1.5(1-x) + 0.5$ defined in \Cref{fig:intro_hook}), we have $c = 1$ and
\begin{equation*}
    \lim_{y \to 1^-}\frac{\rho_2'(y)}{\rho(y)^3} = \frac{3}{2}.
\end{equation*}
This implies that the transport constructed with $T_1,T_2,T_3$ has bounded second derivatives at $\gamma$, but is not $C^2$, as $T_2''(\gamma)\neq T_3''(\gamma)$.

%%%%%%%%%%%%%%%%%%%%%%%%%%%%%%%%%%%%%%%%%%%%%%%%%%%%%%%%%%%%%%%
\section{Diffusion maps (DMAP)}\label{sec:appendix dmaps}
%%%%%%%%%%%%%%%%%%%%%%%%%%%%%%%%%%%%%%%%%%%%%%%%%%%%%%%%%%%%%%%%
We briefly outline the Diffusion Maps algorithm, and refer to the special case when using the Mahalanobis-metric below.
Given $N$ (possibly noisy) data points $\mathcal{D}=\{y_1,\ldots, y_N\}$ in ambient 
Euclidean space $\mathbb{E} = \mathbb{R}^{m}$  close to a smooth, compact manifold $\M$, the DMAP algorithm constructs a parametrization of $\M$ based on the convergence of the
normalized graph Laplacian on the data to the Laplace--Beltrami operator on $\M$.
First, we construct a graph between the points, where connectivity is based on a similarity measure given by a Gaussian kernel employing the Euclidean distance in the ambient space $\mathbb{E}$:
For a given scale parameter $\epsilon>0$, the similarity between two distinct points $y_i$ and $y_j$ in $\mathbb{E}$
is defined through 
$K_{ij}=k(y_i,y_j) = \rm{exp}\left(-r^2/\epsilon \right)$, where $r:=d(y_i,y_j)$. Appropriate choices of the parameter $\epsilon$ depend on the data~\cite{coifman-2006,berry-2016}.
Second, if the data points $\mathcal{D}$ are not sampled uniformly in $\M$, the matrix $K$ has to be normalized by an estimation of the density on the diagonal of a matrix $P \in \mathbb{R}^{N \times N}$, $P_{ii}=\sum_{j=1}^{N} K_{ij}$, $\widetilde{K}=P^{-\alpha}KP^{-\alpha}$ where $\alpha=0$ (no normalization, \cite{belkin-2003}) can be used in the case of uniform sampling, and $\alpha=1$ otherwise \cite{coifman-2006}.
Third, the kernel matrix $\widetilde{K}$ is normalized by the diagonal matrix $D \in \mathbb{R}^{N \times N}$, where $D_{ii}=\Sigma_{j=1}^{N} \widetilde{K}_{ij}$ for $i=1,\dots,N$.
The non-linear parametrization (embedding) of the manifold is then given by a certain number $\ell$ of eigenvectors of $A=D^{-1}\widetilde{K} \in \mathbb{R}^{N \times N}$, scaled by their respective eigenvalue (and removing redundant eigenvectors that are functions of eigenvectors associated to larger eigenvalues~\cite{dsilva-2016}).
The new embedding dimension $\ell$ may be much smaller than the ambient space dimension $m$, in which case DMAP achieves dimensionality reduction.

The similarity between points defined through the kernel can also include information about non-linear maps $S:\M\to\mathbb{R}^m$ (which have to be invertible on their image), employing the so-called ``Mahalanobis-metric'' in the kernel, first introduced in ~\cite{singer-2008}:
\begin{equation}\label{eq:Mahala_kernel}
    k(y_i,y_j)
    = \exp\left(-\frac{(y_j-y_i)^T\left(J^T(y_i) J(y_i)+J^T(y_j) J(y_j)\right) (y_j-y_i)}{2\varepsilon}\right),
\end{equation}
where $J(y)$ is the Jacobian matrix of the inverse transformation $S^{-1}$ at the point $y$.
%This concept was introduced as ``non-linear independent component analysis''~\cite{singer-2008}.
The product $J^T J$ can be approximated through a covariance, for example, obtained by short bursts of a stochastic dynamical system with subsequent mapping by $S$~\cite{singer-2009}.

%%%%%%%%%%%%%%%%%%%%%%%%%%%%%%%%%%%%%%%%%%%%%%%%%%%%%%%%%%%%%%%%%%%%%%%%%%%%%%%
\section{Optimal transport on Riemannian manifolds}\label{sec:optimal transport on riemannian manifolds}
%%%%%%%%%%%%%%%%%%%%%%%%%%%%%%%%%%%%%%%%%%%%%%%%%%%%%%%%%%%%%%%%%%%%%%%%%%%%%%%%%
The results presented in \Cref{sec:notation} can be extended to Riemannian manifolds, which we summarize here, following \cite{mccann-2001}.

Let $(\M,g)$ be a smooth (for the purpose of \cite{mccann-2001}, at least $C^3$) Riemannian manifold, where $g$ denotes the metric. The volume form $\dvol_g$, given in coordinates by $\dvol_g=\sqrt{|\det(g)|}d^kx$, induces a measure on $\M$ via
%\begin{equation*}
$
    \vol_g(A) = \int_{\M} \ind_A \dvol_g = \int_A \dvol_g,
$    
%\end{equation*}
where $A \subseteq \M$ and $\ind_A$ is the indicator function. If a measure $\mu$ on $\M$ is absolutely continuous with respect to $\vol_g$, again written as $\mu \ll \vol_g$, then there exists a \emph{density} $f_{\mu}$, such that
%\begin{equation}\label{eq:RM_density}
$
    \mu(A) = \int_A f_{\mu}(x) \dvol_g(x),
$    
%\end{equation}
with $A\subseteq \M$. This is also written as $\mu = f_{\mu} \vol_g$.

Given two measures $\mu,\nu \ll \vol_g$ on $\M$, the optimal transport problem seeks to find a smooth map $T: \M \to \M$ such that $\nu = T_{\sharp}\mu$ (where, as in \Cref{sec:transport}, $T_{\sharp}\mu(A)=\mu(T^{-1}(A))$) and such that the cost
\begin{equation}\label{eq:monge_manifold}
    \frac{1}{2}\int_{\M} d(x,T(x))^2f_{\mu}(x)\dvol_g(x),
\end{equation}
is minimized. Here $d$ denotes the Riemannian distance function on $\M$ induced by $g$. The push-forward condition $\nu = T_{\sharp}\mu$ can be replaced by \eqref{eq:bijective_transport} for bijective $T$. 
The existence of an optimal transport map is proved in \cite{mccann-2001}:
\begin{thm}\cite[Results 9--11]{mccann-2001}\label{thm:exist_OT_RM}
Let $(\M,g)$ be a connected, compact Riemannian manifold, $C^3$-smooth and without boundary. Then we have
\begin{enumerate}
 \item If $\mu\ll \vol_g$ and $\nu$ arbitrary, then there exists a smooth map $T$ satisfying $T_{\sharp}\mu = \nu$, and minimzing \eqref{eq:monge_manifold}. Only one $T$ can arise in this way (up to sets of $\mu$-measure zero). $T$ is \emph{the optimal transport pushing $\mu$ to $\nu$}.
 \item If also $\nu\ll\vol_g$, then there exists an optimal transport $T^{\ast}$ pushing $\nu$ to $\mu$. $T$ and $T^{\ast}$ are inverses of each other ($\mu$ resp.\ $\nu$ almost everywhere). 
 \item If $S: \M \to \M$ is a Borel map, $\mu$ a Radon measure, and $\nu:=S_{\sharp}\mu\ll \vol_g$. Then $S=T\circ U$, where $T$ is the optimal transport pushing $\mu$ to $\nu$ and $U$ satisfies $U_{\sharp}\mu=\mu$.
\end{enumerate}
\end{thm}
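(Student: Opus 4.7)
The plan is to adapt Brenier's Euclidean polar factorization \cite{brenier-1991} to the Riemannian setting by systematically replacing convex functions with $c$-concave functions, where the cost is $c(x,y) = \tfrac{1}{2}d(x,y)^2$, and replacing the gradient map $x \mapsto \nabla \varphi(x)$ with its manifold analogue $x \mapsto \exp_x(\nabla \psi(x))$. The compactness of $\M$ and the absence of boundary mean that $c$ is bounded, continuous, and symmetric, so the Kantorovich relaxation has a well-behaved dual; this is what will make the three parts of the theorem accessible.

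For part (1), I would first relax \eqref{eq:monge_manifold} to Kantorovich form by minimizing $\int c(x,y)\,d\pi(x,y)$ over probability measures $\pi$ on $\M \times \M$ with marginals $\mu$ and $\nu$. Weak-$*$ compactness of this set (using compactness of $\M$) and lower semicontinuity of the cost yield an optimizer $\pi^{*}$, whose support is $c$-cyclically monotone by a standard variational argument. By Kantorovich duality there exists a $c$-concave potential $\psi: \M \to \RR$ whose $c$-superdifferential contains $\operatorname{supp}(\pi^{*})$. Since $\psi$ is locally Lipschitz (its modulus of continuity is inherited from $c$), Rademacher's theorem on the manifold gives differentiability $\vol_g$-a.e., hence $\mu$-a.e.\ by the assumption $\mu \ll \vol_g$. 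At points of differentiability one checks, using first-order optimality in the $c$-transform, that the $c$-superdifferential is the single point $\exp_x(\nabla \psi(x))$, so one defines $T(x) := \exp_x(\nabla \psi(x))$ and verifies $T_\sharp \mu = \nu$ and that $T$ minimizes \eqref{eq:monge_manifold}. Uniqueness follows because any other minimizer must also be concentrated on the $c$-superdifferential of some (essentially unique, up to additive constants) $c$-concave potential, forcing agreement $\mu$-a.e.

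For part (2), the symmetric hypothesis $\nu \ll \vol_g$ lets us repeat the construction with the roles of $\mu$ and $\nu$ exchanged, producing $T^{*}(y) = \exp_y(\nabla \psi^{c}(y))$, where $\psi^{c}$ is the $c$-conjugate of $\psi$. The inverse relationship $T \circ T^{*} = \operatorname{id}$ $\nu$-a.e.\ (and the dual statement $\mu$-a.e.) reduces to the involutive identity $\psi^{cc} = \psi$ on the supports, combined with the fact that $\nabla \psi$ and $\nabla \psi^{c}$ at corresponding points are antipodal tangent vectors along the same geodesic. For part (3), given a Borel $S$ with $S_\sharp \mu = \nu \ll \vol_g$, let $T$ be the optimal map from $\mu$ to $\nu$ obtained in (1) and define $U := T^{*} \circ S$ using the inverse from (2). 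Then $S = T \circ U$ pointwise $\mu$-a.e., and pushing forward gives $U_\sharp \mu = T^{*}_\sharp (S_\sharp \mu) = T^{*}_\sharp \nu = \mu$, so $U$ is measure-preserving with respect to $\mu$.

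The main obstacle will be the regularity step inside part (1): on a Riemannian manifold the differentiability theory for $c$-concave potentials is more delicate than for convex functions in $\RR^n$, because one must handle the cut locus of the exponential map and argue that $c$-concavity implies semiconcavity in local charts (which yields Lipschitz continuity and a.e.\ twice differentiability in the sense of Alexandrov). The global $C^{3}$ smoothness of $(\M,g)$ and compactness are essential here, as they give uniform bounds on the sectional curvatures controlling how the squared-distance function fails to be smooth across the cut locus; without these, neither the single-valuedness of the $c$-superdifferential nor the identification $T(x) = \exp_x(\nabla \psi(x))$ would go through cleanly.
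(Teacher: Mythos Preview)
The paper does not prove this theorem: it is stated with a citation to McCann~\cite[Results 9--11]{mccann-2001} and used as a black box in \Cref{sec:mahala_wasser} and \Cref{sec:appendix proofs}. There is therefore no in-paper argument to compare your proposal against.

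That said, your sketch is a faithful outline of McCann's own proof strategy: Kantorovich relaxation, $c$-cyclic monotonicity of the optimal plan, existence of a $c$-concave potential $\psi$, local Lipschitz/semiconcave regularity giving $\vol_g$-a.e.\ differentiability, and the identification $T(x)=\exp_x(\nabla\psi(x))$ at points of differentiability. Your handling of the cut-locus issue and the role of $C^3$ regularity and compactness is also in line with McCann's argument.

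One genuine gap in your part~(3): you invoke the map $T$ from part~(1) and the inverse $T^{*}$ from part~(2), but those parts require $\mu\ll\vol_g$, whereas part~(3) assumes only that $\mu$ is Radon and that $\nu=S_\sharp\mu\ll\vol_g$. You therefore cannot construct $T$ directly from~(1). The correct route is to apply part~(1) with the roles of $\mu$ and $\nu$ reversed (using $\nu\ll\vol_g$) to obtain the optimal map $T^{*}:\N\to\M$ pushing $\nu$ to $\mu$, set $U:=T^{*}\circ S$ so that $U_\sharp\mu=\mu$, and then define $T$ via the $c$-concave potential $\psi=(\psi^{c})^{c}$ and argue that $T\circ T^{*}=\mathrm{id}$ holds $\nu$-a.e.; this last step needs the structure of the $c$-superdifferential rather than the blanket invertibility statement of part~(2). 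Without this adjustment, your factorization $S=T\circ U$ is only justified under the stronger hypothesis $\mu\ll\vol_g$.
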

% $T$ is the optimal transport pushing $\mu$ to $\nu$.
% \begin{cor}\cite[Corollary 10]{mccann-2001}\label{cor:mccann tinverse exists}
%  With notation as in Theorem \ref{thm:exist_OT_RM}: If also $\nu\ll\vol_g$, then there exists an optimal transport $T^{\ast}$ pushing $\nu$ to $\mu$. $T$ and $T^{\ast}$ are inverses of each other ($\mu$ resp.\ $\nu$ almost everywhere). 
% \end{cor}
% \begin{thm}\cite[Theorem 11]{mccann-2001}\label{thm:mccann polar factorization theorem}
%  Let $(\M,g)$ be a connected, compact Riemannian manifold, $C^3$-smooth and without boundary. If $S: \M \to \M$ is a Borel map and $\mu$ a Radon measure such that $\nu:=S_{\sharp}\mu\ll \vol_g$, then $S=T\circ U$, where $T$ is the optimal transport pushing $\mu$ to $\nu$ and $U$ satisfies $U_{\sharp}\mu=\mu$.
% \end{thm}

% labels:
% thm : \label{thm:exist_OT_RM}
% cor: \label{cor:mccann tinverse exists}
% thm: \label{thm:mccann polar factorization theorem}

%%%%%%%%%%%%%%%%%%%%%%%%%%%%%%%%%%%%%%%%%%%%%%%%%%%%%%%%%%%%%%%%%%%%%%%%%%%%
\section{Takens theorems, embedology and Whitney theorems}\label{sec:takens}
%%%%%%%%%%%%%%%%%%%%%%%%%%%%%%%%%%%%%%%%%%%%%%%%%%%%%%%%%%%%%%%%%%%%%%%%%%%

Let $k\geq \Mdim\in\mathbb{N}$, and $\mathcal{M}\subset\mathbb{R}^k$ be a $\Mdim$-dimensional, compact, smooth, connected, oriented manifold with Riemannian metric $g$ induced by its embedding in $k$-dimensional Euclidean space.
This setting is sufficient to understand the main concepts in the paper, but is more restrictive than needed for the theorems.

Together with the results from Packard et al.~\cite{packard-1980} and Aeyels~\cite{aeyels-1981}, the definitions and theorems of Takens~\cite{takens-1981} describe embedding constructions of state spaces of nonlinear dynamical systems from observations. A dynamical system is defined through its state space (here, the manifold $\mathcal{M}$) and a diffeomorphism $\phi:\mathcal{M}\to\mathcal{M}$. Here, the map $\phi$ is a discrete time dynamical system, or represents the time-$\tau$ map of a continuous-time system.
\begin{thm}[Generic delay embeddings]\label{thm:takens_1}
For pairs $(\phi,y)$, $\phi:\mathcal{M}\to\mathcal{M}$ a smooth diffeomorphism and $y:\mathcal{M}\to\mathbb{R}$ a smooth function, it is a generic property that the map $\Phi_{(\phi,y)}:\mathcal{M}\to\mathbb{R}^{2\Mdim+1}$, defined by
\begin{equation}
\Phi_{(\phi,y)}(x)=\left(y(x),y(\phi(x)),\dots,y(\underbrace{\phi\circ\dots\circ\phi}_{2\Mdim~\text{times}}(x))\right)
\end{equation}
is an embedding of $\mathcal{M}$; here, ``smooth'' means at least $C^2$.
\end{thm}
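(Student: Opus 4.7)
The plan is to show that the set of pairs $(\phi, y)$ for which $\Phi_{(\phi,y)}$ is an embedding is residual in the product of smooth diffeomorphisms and smooth real-valued functions on $\M$ equipped with the (strong) Whitney topology. Since $\M$ is compact, an injective smooth immersion of $\M$ is automatically an embedding, so I would reduce the proof to establishing density of two conditions: $\Phi_{(\phi,y)}$ is an immersion, and $\Phi_{(\phi,y)}$ is injective. Openness of each condition follows from standard compactness arguments, so all of the content lies in density.

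For the immersion condition, at a point $x \in \M$ the differential $d\Phi_{(\phi,y)}$ assembles the covectors $d(y\circ \phi^j)_x = dy_{\phi^j(x)}\circ d\phi^j_x$ for $j = 0,\ldots, 2\Mdim$ into a $(2\Mdim+1)\times\Mdim$ matrix whose rank must equal $\Mdim$. Applying Thom's transversality theorem on the relevant jet bundle, failure of this rank condition defines a locus of codimension at least $\Mdim+1$, which a generic $y$ avoids pointwise on the $\Mdim$-dimensional manifold $\M$. The enabling local observation is that if the iterates $x, \phi(x), \ldots, \phi^{2\Mdim}(x)$ are pairwise distinct, I can perturb $y$ by small bumps supported near each iterate independently and thereby freely prescribe the rows of the matrix. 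The injectivity condition is treated analogously via multi-jet transversality on $(\M\times\M)\setminus\Delta$: the equation $\Phi_{(\phi,y)}(x)=\Phi_{(\phi,y)}(x')$ imposes $2\Mdim+1$ scalar constraints on a $2\Mdim$-dimensional source, and under distinctness of the $2(2\Mdim+1)$ iterates of $x$ and $x'$ these constraints are destroyed by localized perturbations of $y$.

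The main obstacle is the treatment of short periodic orbits of $\phi$, namely those of period $p \leq 2\Mdim+1$. On such orbits the iterates forming $\Phi_{(\phi,y)}$ collapse, so no perturbation of $y$ alone can supply the independence needed for the transversality argument; this is precisely why the theorem must quantify over \emph{pairs} $(\phi, y)$ rather than over $y$ alone for a fixed $\phi$. I would resolve this by first restricting $\phi$ to a Kupka--Smale-type residual subset on which periodic points of each period $p \leq 2\Mdim+1$ are hyperbolic, hence isolated and, by compactness of $\M$, finite in number. For each such finite orbit I would then impose finitely many open dense conditions on the values and first derivatives of $y$ along the orbit---nondegeneracy of certain finite-dimensional Vandermonde-like matrices---to ensure both injectivity of $\Phi_{(\phi,y)}$ on the orbit and injectivity of its derivative at each orbit point.

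Intersecting the countably many resulting open dense conditions---one per period $p \leq 2\Mdim+1$, together with the transversality conditions on the complement of the short periodic orbits and of the diagonal---and invoking the Baire category theorem in the ambient function space produces a residual set of pairs $(\phi, y)$ satisfying the conclusion. The reason the target dimension is exactly $2\Mdim+1$ is that it is the Whitney threshold: just large enough for both the rank condition (codimension $\Mdim+1$) and the pairwise-coincidence condition (codimension $2\Mdim+1$) to be generically avoidable on $\M$ and $\M\times\M$, respectively.
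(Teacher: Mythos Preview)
The paper does not prove this theorem; it is stated in the appendix as a cited result of Takens~\cite{takens-1981} and used as background for the embedding constructions. There is therefore no ``paper's own proof'' to compare against.

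That said, your sketch is a faithful outline of the argument in Takens' original paper: reduce to immersion plus injectivity on a compact manifold, use (multi-)jet transversality for the generic dimension count away from short periodic orbits, and handle the finitely many periodic orbits of period at most $2\Mdim+1$ by first passing to a residual set of diffeomorphisms and then imposing finitely many open dense conditions on $y$ along each orbit. One refinement worth noting: hyperbolicity of the short periodic orbits is not by itself the condition Takens uses. He requires that at each periodic point of period $p\le 2\Mdim+1$ the eigenvalues of $d\phi^p$ be pairwise distinct; this is what makes the ``Vandermonde-like'' matrix governing $d\Phi_{(\phi,y)}$ nonsingular once $dy$ is chosen generically (not annihilating any eigenvector). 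Hyperbolicity guarantees isolation and finiteness, but the distinct-eigenvalue condition is the extra genericity on $\phi$ that your argument actually needs at the derivative level. With that adjustment, your plan matches the classical proof.
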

Genericity as defined by Takens~\cite{takens-1981} refers to ``an open and dense set of pairs $(\phi,y)$'' in the $C^2$ function space. In general, open and dense sets can have measure zero, so Sauer et al.~\cite{sauer-1991} later refined Takens' results significantly by introducing the concept of prevalence (a ``probability one'' analog in infinite dimensional spaces, see \cref{def:prevalence}).

Stark et al.~have extended the Takens' theorems to deterministically forced, input-output, irregularly sampled, and stochastic systems~\cite{stark-1997,stark-1999,stark-2003}. We do not discuss these results here.

The results of Sauer et al.~\cite{sauer-1991} are presented in relation to Whitney's theorems \cite{whitney-1936}.
\begin{Def}\label{def:prevalence}
A Borel subset $S$ of a normed linear space $V$ is \emph{prevalent} if there is a finite-dimensional subspace $E$ of $V$ such that for each $v\in V$, $v + e$ belongs to $S$ for (Lebesgue-) almost every $e$ in $E$.
\end{Def}

\begin{thm}[Whitney, prevalence form, embedology]\label{thm:whitney prevalence}
The set $S\subset C^1$ of smooth maps $F:\mathbb{R}^k\to\mathbb{R}^{2\Mdim+1}$ that are embeddings of $\mathcal{M}$ is prevalent.
\end{thm}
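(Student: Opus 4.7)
The plan is to verify \Cref{def:prevalence} directly by choosing the finite-dimensional probe subspace
\begin{equation*}
    E := \mathrm{Hom}(\RR^{\kSmooth}, \RR^{2\Mdim+1}),
\end{equation*}
that is, the space of linear maps, viewed as a subspace of $C^1(\RR^{\kSmooth}, \RR^{2\Mdim+1})$. Two preliminary observations frame the argument. First, $S$ is open (hence Borel) in the $C^1$ topology, because on the compact manifold $\mathcal{M}$ the conditions ``injective'' and ``full-rank differential on $T\mathcal{M}$'' are both $C^1$-stable. Second, compactness of $\mathcal{M}$ implies that a $C^1$ map $\mathcal{M}\to\RR^{2\Mdim+1}$ is an embedding if and only if it is an injective immersion. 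So it suffices to show that for every fixed $F \in C^1(\RR^{\kSmooth},\RR^{2\Mdim+1})$, the perturbation $F + L$ is an injective immersion of $\mathcal{M}$ for Lebesgue-almost every $L \in E$.

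For injectivity, I would consider the smooth evaluation map
\begin{equation*}
\Phi: (\mathcal{M}\times\mathcal{M}\setminus\Delta)\times E \to \RR^{2\Mdim+1}, \quad (x,y,L)\mapsto F(x)-F(y)+L(x-y),
\end{equation*}
where $\Delta$ is the diagonal. For any fixed $x\neq y$, the partial map $L\mapsto L(x-y)$ is surjective onto $\RR^{2\Mdim+1}$, so $\Phi$ is a submersion and $\Phi^{-1}(0)$ is a smooth submanifold of dimension $2\Mdim+\dim E-(2\Mdim+1)=\dim E-1$. Since $\mathcal{M}$ is compact, this preimage is $\sigma$-compact, and hence its image under the smooth projection to $E$ has Lebesgue measure zero. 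Thus $(F+L)|_{\mathcal{M}}$ is injective for a.e.\ $L\in E$.

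For the immersion condition, I would study
\begin{equation*}
\Psi:\mathcal{M}\times E\to \mathrm{Hom}(T\mathcal{M},\RR^{2\Mdim+1}), \quad (x,L)\mapsto (dF_x+L)|_{T_x\mathcal{M}},
\end{equation*}
locally identifying the target with $(2\Mdim+1)\times\Mdim$ matrices. The rank-deficient locus $R=\{A:\mathrm{rk}(A)<\Mdim\}$ stratifies by rank; its top stratum (rank exactly $\Mdim-1$) has codimension $(2\Mdim+1-(\Mdim-1))(\Mdim-(\Mdim-1))=\Mdim+2$, and lower-rank strata have strictly higher codimension. The $L$-slice of $\Psi$ surjects onto the target (every $(2\Mdim+1)\times\Mdim$ matrix extends by zero on a complement of $T_x\mathcal{M}$ to an element of $E$), so $\Psi$ is a submersion and each stratum of $\Psi^{-1}(R)$ has dimension at most $\Mdim+\dim E-(\Mdim+2)=\dim E-2$; each thus projects to a measure-zero set in $E$, and the countable union is measure zero. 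Intersecting with the injectivity conull set yields almost every $L\in E$ for which $F+L$ is an embedding of $\mathcal{M}$.

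The main obstacle I anticipate is the bookkeeping around the stratification of the rank-deficient locus combined with the step from ``preimage is a lower-dimensional submanifold'' to ``projection has Lebesgue measure zero in $E$''; the latter needs $\sigma$-compactness of the preimages, which compactness of $\mathcal{M}$ supplies. A secondary point worth naming is that the statement concerns maps on all of $\RR^{\kSmooth}$ rather than on $\mathcal{M}$ alone, but compactness of $\mathcal{M}$ allows one to transfer the genericity conclusion locally without difficulty.
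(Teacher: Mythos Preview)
The paper does not supply its own proof of this theorem; it is quoted from Sauer, Yorke, and Casdagli~\cite{sauer-1991}, and the only detail the paper records is that the probe space $E$ in \Cref{def:prevalence} is the $k(2\Mdim+1)$-dimensional space of linear maps $\RR^{k}\to\RR^{2\Mdim+1}$. Your choice of $E$ is therefore exactly the one the paper (and the original reference) specifies.

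Your argument is a correct parametric transversality proof: the submersion property of $\Phi$ and $\Psi$ in the $L$-variable, the dimension count $\dim E-1$ for the injectivity obstruction and $\dim E-2$ for the immersion obstruction, and the passage from ``lower-dimensional $\sigma$-compact submanifold'' to ``measure-zero projection in $E$'' are all sound. The codimension $(\Mdim+2)$ for the top rank-deficient stratum of $(2\Mdim+1)\times\Mdim$ matrices is correct. The one place to be slightly more explicit is that $\Psi$ lands in a bundle rather than a fixed matrix space, so the stratification and transversality are carried out in local trivializations; this is routine but worth a sentence.

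For comparison, the original proof in \cite{sauer-1991} is organized somewhat differently: rather than invoking transversality and submanifold dimensions, it argues directly with Fubini-type slicing, estimating for each pair $(x,y)$ (respectively each $x$ and tangent direction) the measure of the set of linear perturbations $L$ that fail, and then integrating over the compact parameter space. Your transversality packaging is cleaner conceptually and yields the same conclusion; the Fubini approach in \cite{sauer-1991} is more elementary and extends more readily to the fractal (box-counting dimension) setting treated there, where one cannot speak of submanifolds of the obstruction set.
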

Given any smooth map $F$, there are maps arbitrarily near F that are embeddings, which is the notion of genericity from Takens. The notion of prevalence and \cref{thm:whitney prevalence} assert that ``almost all'' (in the sense of prevalence) of the maps near $F$ are embeddings.
In these theorems, the space $E$ used in the definition of prevalence is the $k(2\Mdim + 1)$-dimensional space of linear maps from $\mathbb{R}^k$ to $\mathbb{R}^{2\Mdim+ 1}$.

%%%%%%%%%%%%%%%%%%%%%%%%%%%%%%%%%%%%%%%%%%%%%%%%%%%%%%%%%%%%%%%%%%%%%%%%%%%%%%%%
\section{Embedding theorems and optimal transport}\label{sec:appendix proofs}
%%%%%%%%%%%%%%%%%%%%%%%%%%%%%%%%%%%%%%%%%%%%%%%%%%%%%%%%%%%%%%%%%%%%%%%%%%%%%%%
We now prove some results concerning optimal transport in the framework of time-delay embeddings. In particular, we show that the transport maps constructed in \Cref{sec:one_dim_transport,sec:densities in r2} exist and are invertible. We argue that in our embedding constructions, optimal transport theory is applicable.
\begin{lem}\label{lem:absol_cont_measure_metric}
Let $\M$ be a smooth, orientable manifold and let $g,g'$ be two Riemannian metrics on $\M$. Then $\vol_g=f\vol_{g'}$, for smooth $f: \M \to \RR, f>0$ or $f<0$. In particular $\vol_g\ll \vol_{g'}$.
\end{lem}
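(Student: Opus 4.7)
The plan is to reduce the statement to a purely local computation in charts, and then glue the local ratios into a globally defined smooth function, using the fact that the bundle of top-degree forms on an $n$-dimensional oriented manifold is a line bundle.

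First, I would fix an orientation on $\M$ (available since $\M$ is orientable) and work in any oriented coordinate chart $(U, x^1, \dots, x^n)$. In such a chart the two volume forms are given by $\dvol_g = \sqrt{\det(g_{ij})}\, dx^1 \wedge \cdots \wedge dx^n$ and $\dvol_{g'} = \sqrt{\det(g'_{ij})}\, dx^1 \wedge \cdots \wedge dx^n$, so the local ratio $f_U := \sqrt{\det(g_{ij}) / \det(g'_{ij})}$ is smooth and strictly positive on $U$. A direct change-of-coordinates check, using that under a transition map the two Gram determinants both pick up the square of the Jacobian, shows that $f_U = f_V$ on every overlap $U \cap V$. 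Hence the collection $\{f_U\}$ patches into a single smooth, strictly positive function $f : \M \to \RR$ with $\dvol_g = f \dvol_{g'}$.

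An equivalent, chart-free way to obtain $f$ (which I might present as the cleaner version) is to observe that $\Lambda^n T^*\M$ is a line bundle, and $\dvol_{g'}$ is a nowhere-vanishing global section (by orientability plus positive-definiteness of $g'$). Any other smooth section of this line bundle is uniquely a smooth function times $\dvol_{g'}$; applying this to $\dvol_g$ yields the smooth $f$. The sign dichotomy $f > 0$ or $f < 0$ is then dictated by whether $\dvol_g$ and $\dvol_{g'}$ induce the same orientation or opposite orientations; in either case $f$ is nowhere zero.

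Finally, for the absolute continuity claim I would pass from forms to measures: for a Borel set $A \subseteq \M$,
\begin{equation*}
\vol_g(A) \;=\; \int_A \dvol_g \;=\; \int_A f\, \dvol_{g'} \;=\; \int_A |f|\, d\vol_{g'},
\end{equation*}
the last identity using the sign of $f$ to match the positive measure convention. Since $|f|$ is locally bounded and measurable, $\vol_{g'}(A) = 0$ implies $\vol_g(A) = 0$, i.e.\ $\vol_g \ll \vol_{g'}$. There is no substantive obstacle here; the only care required is bookkeeping between the signed top-form $\dvol_g$ and the positive Borel measure $\vol_g$ when $f$ is negative, which is a minor convention issue that the absolute value in the measure identity resolves.
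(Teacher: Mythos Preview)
Your argument is correct. The paper's own proof is a one-sentence appeal to a standard reference (Lee, \emph{Introduction to Smooth Manifolds}, Chapter 15), simply noting that on an orientable manifold the two possible orientations differ by a sign and hence $\dvol_g = f\,\dvol_{g'}$ with $f$ smooth and of constant sign. You have effectively written out the content behind that citation: the local coordinate formula $f_U = \sqrt{\det(g_{ij})/\det(g'_{ij})}$ and the line-bundle argument are exactly what one finds in Lee, and your treatment of the measure identity and absolute continuity is more explicit than anything the paper states. So the approaches coincide in substance; yours is self-contained whereas the paper's is a bare reference. One small wording point: a metric by itself does not ``induce'' an orientation, so the sign dichotomy is really about whether $\dvol_g$ and $\dvol_{g'}$ were \emph{defined} using the same or opposite orientation choices --- but this is a phrasing issue, not a mathematical gap.
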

\begin{proof}
On an orientable manifold, there are two possible orientations, which differ by sign.
Therefore there exists a smooth function $f: \M \to \RR$ such that $\dvol_g = f\dvol_{g'}$ and $f>0$ or $f<0$, see \cite[Chapter 15]{lee-2012}. 
\end{proof}
\begin{lem}\label{lem:diffeomorph_absol_cont}
Let $(\M,g)$ and $(\N,g')$ be two smooth, orientable Riemannian manifolds, both of dimension $\Mdim$ and let $\Phi: \M \to \N$ be a diffeomorphism. If $\mu$ is a measure on $\M$ and $\mu \ll \vol_g$, then $\Phi_{\sharp}\mu \ll \vol_{g'}$.
% and $\vol_{g'} \ll \Phi_{\sharp}\mu$.
\end{lem}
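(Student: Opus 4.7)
The plan is to reduce everything to the previous lemma by pulling back the metric $g'$ along $\Phi$ and then using the fact that $\Phi$ is an isometry between $(\M,\Phi^{*}g')$ and $(\N,g')$.

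First, I would introduce the pullback metric $\tilde g := \Phi^{*}g'$ on $\M$, which is a smooth Riemannian metric because $\Phi$ is a diffeomorphism. By construction, $\Phi : (\M,\tilde g) \to (\N,g')$ is a Riemannian isometry of the same dimension $\Mdim$, so it preserves the volume form: $\Phi_{\sharp}\vol_{\tilde g} = \vol_{g'}$. Equivalently, for any Borel $A \subset \N$, $\vol_{\tilde g}(\Phi^{-1}(A)) = \vol_{g'}(A)$.

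Next, I would apply \Cref{lem:absol_cont_measure_metric} to the two Riemannian metrics $g$ and $\tilde g$ on the orientable manifold $\M$ to conclude that $\vol_g \ll \vol_{\tilde g}$ (and vice versa). In particular, a Borel set $B \subset \M$ has $\vol_{\tilde g}(B) = 0$ if and only if $\vol_g(B) = 0$.

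Now fix a Borel set $A \subset \N$ with $\vol_{g'}(A) = 0$. Using the isometry step, $\vol_{\tilde g}(\Phi^{-1}(A)) = 0$, and then the absolute-continuity step gives $\vol_g(\Phi^{-1}(A)) = 0$. Since $\mu \ll \vol_g$ by hypothesis, this forces $\mu(\Phi^{-1}(A)) = 0$, i.e.\ $\Phi_{\sharp}\mu(A) = 0$. This is exactly the statement $\Phi_{\sharp}\mu \ll \vol_{g'}$.

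No step here is really an obstacle; the only mildly delicate point is verifying that the isometry identity $\Phi_{\sharp}\vol_{\tilde g} = \vol_{g'}$ holds on Borel sets, which follows from the change-of-variables formula applied to the diffeomorphism $\Phi$ together with $\sqrt{|\det \tilde g|} = \sqrt{|\det g'|}\circ \Phi \cdot |\det D\Phi|$ in local coordinates. Everything else is a direct chain of implications through the two lemmas.
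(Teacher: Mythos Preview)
Your argument is correct and is essentially the same as the paper's: the paper pushes $g$ forward to $\N$ and applies \Cref{lem:absol_cont_measure_metric} there, while you pull $g'$ back to $\M$ and apply it here, but both reduce the claim to the mutual absolute continuity of two Riemannian volumes on the same manifold together with $\Phi_{\sharp}\vol_{\Phi^{*}g'}=\vol_{g'}$ (equivalently $\Phi_{\sharp}\vol_{g}=\vol_{\Phi_{*}g}$). The chain $\Phi_{\sharp}\mu \ll \Phi_{\sharp}\vol_g = \vol_{\Phi_{*}g} \ll \vol_{g'}$ in the paper is just the push-forward mirror of your pull-back version.
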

\begin{proof}
We can push $g$ forward via $\Phi$, to obtain a metric $\Phi_{\ast}g$ on $\N$. Then \cref{lem:absol_cont_measure_metric} implies $\vol_{\Phi_{\ast}g} \ll \vol_{g'}$.
Thus
$
    \Phi_{\sharp}\mu \ll \Phi_{\sharp}\vol_g = \vol_{\Phi_{\ast}g} \ll \vol_{g'}.
$
\end{proof}
\begin{cor}
Let $(\M,g)$ be a smooth, orientable Riemannian manifold of dimension $n$. Let $F: \M \to \M $ be a smooth diffeomorphism, and let $y \in C^2(\M,\RR)$ a generic observable. Let $\Phi_{(F,y)} : \M \to \RR^{2\Mdim+1}$ given by
\begin{equation}
\Phi_{(F,y)}(p)=\left(y(p),y(F(p)),\dots,y(\underbrace{F\circ\dots\circ F}_{2\Mdim~\text{times}}(p))\right), \quad p \in \M,
\end{equation}
be the embedding from \cref{thm:takens_1}.
Let $g'$ be the Riemannian metric on $\Phi_{(F,y)}(\M)$ induced by the Euclidean metric of $\RR^{2\Mdim+1}$.
If $\mu \ll \vol_g$, then ${\Phi_{(F,y)}}_{\sharp}\mu \ll \vol_{g'}$.
\end{cor}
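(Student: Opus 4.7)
The plan is to show that this corollary is essentially a direct application of the Takens embedding theorem (\Cref{thm:takens_1}) together with \Cref{lem:diffeomorph_absol_cont}; the only real work is to verify the hypotheses of the lemma.

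First I would invoke \Cref{thm:takens_1}: since $y$ is a generic $C^2$ observable and $F$ is a smooth diffeomorphism, the delay map $\Phi_{(F,y)}: \M \to \RR^{2\Mdim+1}$ is an embedding. In particular, restricting the codomain, $\Phi_{(F,y)}$ is a diffeomorphism from $\M$ onto its image $\N := \Phi_{(F,y)}(\M) \subset \RR^{2\Mdim+1}$. The image $\N$ inherits a smooth manifold structure of dimension $\Mdim$ from this diffeomorphism, and it inherits the Riemannian metric $g'$ as a submanifold of Euclidean space.

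Next I would verify orientability of $\N$: since $\M$ is orientable and $\Phi_{(F,y)}$ is a diffeomorphism onto $\N$, one can push the orientation of $\M$ through $\Phi_{(F,y)}$ to equip $\N$ with an orientation. Thus $(\N, g')$ is a smooth, orientable Riemannian manifold of dimension $\Mdim$, matching the setting of \Cref{lem:diffeomorph_absol_cont}.

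Finally I would apply \Cref{lem:diffeomorph_absol_cont} directly to the diffeomorphism $\Phi_{(F,y)}: (\M, g) \to (\N, g')$ with the measure $\mu \ll \vol_g$, which yields ${\Phi_{(F,y)}}_{\sharp}\mu \ll \vol_{g'}$, as desired.

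I do not expect any real obstacle here: the corollary is a packaging result that combines Takens' embedding theorem with the absolute-continuity lemma already established. The subtlest point, if anything, is to be sure that ``generic'' in the statement is interpreted in the sense of \Cref{thm:takens_1} so that the embedding conclusion actually applies; once that is granted, orientability of $\N$ is automatic from the diffeomorphism with $\M$, and the rest is just an invocation of the lemma.
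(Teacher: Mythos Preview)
Your proposal is correct and matches the paper's approach: the corollary is stated without proof immediately after \Cref{lem:diffeomorph_absol_cont}, and is meant to follow exactly as you describe, by combining Takens' embedding theorem (\Cref{thm:takens_1}) with that lemma. Your extra care in checking orientability of $\N$ via the diffeomorphism is a nice touch, but otherwise there is nothing to add.
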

\begin{lem}
Let $(\M,g)$ and $(\N,g')$ be smooth (at least $C^3$), compact, connected, orientable Riemannian manifolds, without boundary and of the same dimension $n$. Let $\mu$ be a measure on $\M$ such that $\mu \ll \vol_g$. Let $\Phi_x,\Phi_y: \M \to \N$ be two diffeomorphisms. Consider the measures $\nu_x:=
{\Phi_x}_{\sharp}\mu$ and $\nu_y:={\Phi_y}_{\sharp}\mu$ on $\N$. Then $\nu_x,\nu_y\ll\vol_{g'}$. Also the optimal transport map $T: \N \to \N$ satisfying $T_{\sharp}\nu_x=\nu_y$ exists and is invertible ($\nu_x$- resp.\ $\nu_y$-a.e.).
\end{lem}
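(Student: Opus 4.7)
The plan is to prove the lemma by chaining together two results that have already been established in the excerpt: \Cref{lem:diffeomorph_absol_cont} for the absolute continuity statement, and \Cref{thm:exist_OT_RM} (McCann) for existence and invertibility of the optimal transport. The reduction is essentially direct; the only care needed is to verify that the hypotheses required by McCann's theorem are furnished by the assumptions on $(\N,g')$ and by the absolute continuity conclusion coming from the push-forward of $\mu$.

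First, I would establish $\nu_x,\nu_y\ll\vol_{g'}$. Since $\mu\ll\vol_g$ and $\Phi_x\colon(\M,g)\to(\N,g')$ is a diffeomorphism between smooth, orientable Riemannian manifolds of the same dimension, \Cref{lem:diffeomorph_absol_cont} applies and yields $\nu_x={\Phi_x}_\sharp\mu\ll\vol_{g'}$. The same argument with $\Phi_y$ in place of $\Phi_x$ gives $\nu_y\ll\vol_{g'}$. Note that orientability of $\N$ is used implicitly here, since the proof of \Cref{lem:diffeomorph_absol_cont} relies on \Cref{lem:absol_cont_measure_metric}, which is stated for orientable manifolds.

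Next, I would invoke \Cref{thm:exist_OT_RM} with the base manifold $\N$ (which is $C^3$-smooth, compact, connected, orientable, and without boundary by hypothesis) and the two measures $\nu_x,\nu_y$. Part~(1) of the theorem, applied with $\mu\leftarrow\nu_x$ (absolutely continuous) and $\nu\leftarrow\nu_y$ (arbitrary, in particular absolutely continuous), provides a unique optimal transport map $T\colon\N\to\N$ satisfying $T_\sharp\nu_x=\nu_y$ and minimizing the Riemannian Monge cost \eqref{eq:monge_manifold}. Part~(2) of the theorem, applied in the opposite direction (now using $\nu_y\ll\vol_{g'}$ as well), produces the optimal transport $T^\ast\colon\N\to\N$ pushing $\nu_y$ to $\nu_x$, and asserts that $T$ and $T^\ast$ are mutually inverse almost everywhere (with respect to $\nu_x$ and $\nu_y$ respectively). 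This is precisely the invertibility statement in the lemma.

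There is no real obstacle in this argument: the work has already been done in the preceding results, and the present lemma is essentially a bookkeeping statement that packages them together in a form convenient for the time-delay embedding setup of \Cref{sec:one_dim_transport,sec:densities in r2}. The only thing one should check carefully is that the hypotheses of \Cref{thm:exist_OT_RM} (compactness, connectedness, $C^3$-smoothness, no boundary) are indeed part of the standing assumptions on $(\N,g')$, which they are; and that absolute continuity on both sides is required to invoke part~(2), which is exactly why the first step of the proof establishes $\nu_y\ll\vol_{g'}$ in addition to $\nu_x\ll\vol_{g'}$.
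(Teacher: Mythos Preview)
Your proposal is correct and follows exactly the same route as the paper: the paper's proof is a single sentence citing \Cref{lem:diffeomorph_absol_cont} for the absolute continuity of $\nu_x,\nu_y$ and \Cref{thm:exist_OT_RM} for existence and a.e.\ invertibility of the optimal transport on $\N$. Your write-up simply unpacks these two citations with the appropriate hypothesis checks.
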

\begin{proof}
This follows from \Cref{lem:diffeomorph_absol_cont} and \cref{thm:exist_OT_RM}.
\end{proof}
\begin{cor}
Let $(\M,g)$ a be smooth (at least $C^3$), compact, connected, orientable Riemannian manifolds, without boundary, of dimension $n$.
Let $F: \M \to \M $ be a smooth ($C^2$) diffeomorphism, and let $x,y \in C^{2}(\M,\RR)$ be generic observables. Let $\Phi_{(F,x)}, \Phi_{(F,y)} : \M \to \RR^{2n+1}$
be the embedding defined in \cref{thm:takens_1}. We further assume that $\N :=\Phi_{(F,x)}(M)=\Phi_{(F,y)}(M)$. Let $\mu$ be a measure on $\M$ such that $\mu \ll \vol_g$. The optimal transport $T:\N \to\N$ that pushes ${\Phi_{(F,x)}}_{\sharp}\mu$ to ${\Phi_{(F,y)}}_{\sharp}\mu$, exists and is invertible (a.e.\ with respect to these measures).
\end{cor}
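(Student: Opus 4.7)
The plan is to reduce this corollary to the preceding lemma by verifying that the two Takens embeddings $\Phi_{(F,x)}$ and $\Phi_{(F,y)}$ satisfy the hypotheses of that lemma, with the common image $\N$ playing the role of the target Riemannian manifold. The corollary is essentially the ``specialization'' of the abstract statement to the concrete embeddings supplied by embedology, so no new analytic work should be required beyond bookkeeping.

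First I would invoke \Cref{thm:takens_1}: for generic $x, y \in C^{2}(\M,\RR)$, both $\Phi_{(F,x)}$ and $\Phi_{(F,y)}$ are $C^{2}$ embeddings of $\M$ into $\RR^{2n+1}$. Consequently, viewed as maps onto their images, each is a $C^{2}$ diffeomorphism $\M \to \N$, since by assumption $\N = \Phi_{(F,x)}(\M) = \Phi_{(F,y)}(\M)$. I would then equip $\N$ with the Riemannian metric $g'$ induced by the ambient Euclidean metric on $\RR^{2n+1}$; this makes $(\N, g')$ a smooth Riemannian manifold of dimension $n$.

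Next I would verify that $(\N,g')$ inherits the structural properties required by the previous lemma. Compactness, connectedness, orientability, and absence of boundary are all preserved under $C^{2}$ diffeomorphisms, so these properties transfer from $(\M,g)$ to $(\N,g')$ via either of the two embeddings. Smoothness at the $C^{3}$ level used in \Cref{thm:exist_OT_RM} is admittedly a small subtlety because Takens' theorem is stated at the $C^{2}$ level, but under the standing regularity assumptions of the paper ($F$ and the observables sufficiently smooth) one can upgrade the embedding class so that the hypotheses of the optimal-transport existence result apply; this is the one place I expect potential technical friction, and it is resolved simply by strengthening the smoothness of $F$, $x$, and $y$ to match the demands of \Cref{thm:exist_OT_RM}.

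Finally I would apply the preceding lemma with $\Phi_x := \Phi_{(F,x)}$ and $\Phi_y := \Phi_{(F,y)}$, and with $\mu$ the given measure on $\M$ absolutely continuous with respect to $\vol_g$. The lemma then yields both that $\nu_x := {\Phi_{(F,x)}}_{\sharp}\mu$ and $\nu_y := {\Phi_{(F,y)}}_{\sharp}\mu$ are absolutely continuous with respect to $\vol_{g'}$, and that there exists an optimal transport $T : \N \to \N$ with $T_{\sharp}\nu_x = \nu_y$ which is invertible $\nu_x$-a.e.\ (with inverse equal to the optimal transport $\nu_y$ to $\nu_x$, $\nu_y$-a.e.). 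This is precisely the conclusion claimed. The main ``obstacle'' is therefore not mathematical but notational: ensuring that the ambient induced metric $g'$ and the smoothness upgrade are made explicit enough to cleanly invoke both \Cref{thm:takens_1} and \Cref{thm:exist_OT_RM}.
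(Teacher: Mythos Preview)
Your proposal is correct and matches the paper's approach: the paper states this corollary without proof, treating it as an immediate specialization of the preceding lemma with $\Phi_x := \Phi_{(F,x)}$, $\Phi_y := \Phi_{(F,y)}$, and $\N$ equipped with the Euclidean-induced metric. Your remark about the $C^2$/$C^3$ smoothness mismatch is a fair observation, but the paper does not address it and simply accepts the standing regularity assumptions.
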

In general it is difficult to derive results on the regularity of the optimal transport map, see \cite[Chapter 12]{villani-2009}. However, for our purpose, the existence and invertability of the optimal transport map is sufficient.

\bibliographystyle{abbrvnat}

%\bibliography{lit}
%%%%%%%%%%%%%%%%%%%%%%%%%%%%%%%%%%%%%
% INCLUDE BBL FILE DIRECTLY FOR ARXIV
%%%%%%%%%%%%%%%%%%%%%%%%%%%%%%%%%%%%%%

\end{document}